\newcommand{\doublewidetilde}[1]{{%
		\mathpalette\double@widetilde{#1}}}
\newcommand{\double@widetilde}[2]{%
		\sbox\z@{$\m@th#1\widetilde{#2}$}%
		\ht\z@=.5\ht\z@
		\widetilde{\box\z@}}
\newtheorem{definition}{Definition}
\newtheorem{theorem}{Theorem}
\newtheorem{lemma}{Lemma}
\newtheorem{remark}{Remark}
\begin{document}
%
\title{\huge Space-Terrestrial Cooperation Over Spatially Correlated Channels Relying on Imperfect Channel Estimates: Uplink Performance Analysis and Optimization}

\author{\large Trinh~Van~Chien, \textit{Member}, \textit{IEEE}, Eva~Lagunas, \textit{Senior Member},  \textit{IEEE}, Tiep~M.~Hoang, \textit{Member},  \textit{IEEE}, Symeon~Chatzinotas, \textit{Fellow}, \textit{IEEE}, Bj\"{o}rn~Ottersten, \textit{Fellow}, \textit{IEEE}, and Lajos~Hanzo, \textit{Life Fellow}, \textit{IEEE} 
 \vspace*{-0.4cm}
 \thanks{Manuscript received xxx; revised xxx and xxx; accepted xxx. Date of publication xxx; date
 	of current version xxx. Parts of this paper was presented in the GLOBECOM 2022. This research is funded by Hanoi University of Science and Technology (HUST) under project number T2022-TT-001. This work has been supported by the Luxembourg National Research Fund (FNR) under the project MegaLEO (C20/IS/14767486). L. Hanzo would like to acknowledge the financial support of the
 	Engineering and Physical Sciences Research Council projects EP/W016605/1
 	and EP/P003990/1 (COALESCE) as well as of the European Research
 	Council's Advanced Fellow Grant QuantCom (Grant No. 789028). (Corresponding author: Trinh Van Chien.)}
\thanks{T.~V.~Chien is with the School of Information and Communication Technology (SoICT), Hanoi University of Science and Technology (HUST), Hanoi 100000, Vietnam (email: chientv@soict.hust.edu.vn).}
\thanks{E.~Lagunas, S.~Chatzinotas, and B.~Ottersten are with the Interdisciplinary Centre for Security, Reliability and Trust (SnT), University of Luxembourg, L-1855 Luxembourg, Luxembourg (email: eva.lagunas@uni.lu, symeon.chatzinotas@uni.lu, and bjorn.ottersten@uni.lu).}
\thanks{ T. M. Hoang is with the Department of Electrical Engineering, the University of Colorado Denver, Denver, CO 80204, USA (e-mail: minhtiep.hoang@ucdenver.edu).}
\thanks{L. Hanzo are with the School of Electronics and Computer Science, University of Southampton, Southampton SO17 1BJ, U.K. (e-mail: lh@ecs.soton.ac.uk).}
}

\maketitle

\begin{abstract}
A whole suite of innovative technologies and architectures have emerged in response to the rapid growth of wireless traffic. This paper studies an integrated network design that boosts system capacity through cooperation between wireless access points (APs) and a satellite for enhancing the network's spectral efficiency. \textit{As for our analytical contributions}, upon coherently combing the signals received by the central processing unit (CPU) from the users through the space and terrestrial links, we first mathematically derive an achievable throughput expression for the uplink (UL) data transmission over spatially correlated Rician channels. Our generic achievable throughput expression is applicable for arbitrary received signal detection techniques employed at the APs and the satellite under realistic imperfect channel estimates. A closed-form expression is then obtained for the ergodic UL data throughput, when maximum ratio combining is utilized for detecting the desired signals. \textit{As for our resource allocation contributions}, we formulate the max-min fairness and total transmit power optimization problems relying on the channel statistics for performing power allocation. The solution of each optimization problem is derived in form of a low-complexity iterative design, in which each data power variable is updated relying on a  closed-form expression. Our integrated hybrid network concept allows users to be served that may not otherwise be accommodated due to the excessive data demands. The algorithms proposed allow us to address the congestion issues appearing when at least one user is served at a rate below his/her target. The mathematical analysis is also illustrated with the aid of our numerical results that show the added benefits of considering the space links  in terms of improving the ergodic data throughput. Furthermore, the proposed algorithms smoothly circumvent any potential congestion, especially in face of high rate requirements and weak channel conditions.
\end{abstract}
\begin{IEEEkeywords}
Cooperative network, space-terrestrial communications, linear processing, ergodic data throughput
\end{IEEEkeywords}

%
\IEEEpeerreviewmaketitle

\section{Introduction}
The spectral vs energy efficiency trade-off of terrestrial communication systems has been remarkably improved in the recent decades, especially by the fifth-generation (5G) system \cite{9428708,lin20215g}. However, further escalation of the tele-traffic is anticipated with billions of devices managed by the  terrestrial wireless networks. The seamless tele-presence services of the near future require a high data rate and low end-to-end delays \cite{durisi2016toward}. To handle these increasing demands Massive MIMO (multiple-input multiple-output) techniques have been conceived \cite{ghazanfari2021model}, but the escalating inter-cell interference limits the performance of dense networks operating without any base-station (BS) collaboration. Cell-edge users may suffer from increased interference that leads to low  throughput.   

Future wireless systems will offer high throughput per user principally still based on the access to new spectrum, while intelligently coordinating a number of access points (APs) in a coverage area \cite{andrews2016we}. This leads to the concept of distributed MIMO systems \cite{pan2017joint} or Cell-Free Massive MIMO systems \cite{Ngo2015a}, which serve a group of users by a group of APs. The network then coherently combines different observations of the transmitted waves received over multiple heterogeneous propagation paths \cite{9500472}  using either maximum ratio combing (MRC) or minimum mean square error (MMSE) reception. 

Satellite communication has attracted renewed interest as a promising technique of providing services for many users across a large coverage area \cite{schwarz2019mimo}. However, given the large footprint of the satellite and its limited bandwidth shared by many users, both its area spectral efficiency and the per-user rate remain low. Geostationary (GEO) satellites have gained popularity, given their long coherence time \cite{bui2021robust,van2021user,an2018performance}. However, their  drawback is their excessive delay of about 120~ms and expensive manufacturing as well as launch. Consequently, the low latency, smaller size, and shorter delays of non-GEO (NGSO) satellites are considerable benefits \cite{kodheli2020satellite,leyva2021inter,pan2020performance}, especially extensions to megaconstellations  \cite{jia2021uplink}. Hence,  companies such as SpaceX,  OneWeb, TeleSAT, and Amazon have already started the deployment of large Low Earth Orbit (LEO) satellite constellations  \cite{pachler2021updated}.   

Both academia and industry have recently intensified their research of NGSO aided terrestrial communications \cite{3gpp2019study, abdelsadek2021future, riera2021enhancing}. The Digital Agenda for Europe initiative is also aiming for enhancing terrestrial connectivity \cite{niephaus2018toward}. As for the demands of tomorrow's networks, the authors of \cite{abdelsadek2021future} considered the performance of a space communication system that replaces terrestrial APs by LEO satellites. Despite integrating a LEO satellite into a terrestrial network \cite{riera2021enhancing}, the received signals were detected independently i.e., without exploiting the benefits of constructive received signal combination. As a further contribution, the coexistence of fixed satellite services and cellular networks was studied in \cite {du2018secure} for transmission over slow fading channels subject to individual user throughput constraints. The ergodic rate of the fast fading channels was considered in \cite{ruan2019energy, an2015performance} or the hybrid coverage probability and average interference modeling \cite{al2021modeling} under the assumption of perfect channel state information (CSI) and no spatial correlation. In a nutshell, the literature of  space-terrestrial integrated networks suffers from the following limitations: $i)$ most of the performance analysis and resource allocation studies rely on the idealized simplifying assumptions of perfect instantaneous CSI knowledge, which is challenging to acquire in practice, especially under high mobility scenarios; $ii)$ the spatial correlation between satellite antennas is ignored, even though it is unavoidable in the existing planar antenna arrays; $iii)$ all the users are treated equally, despite their heterogeneous throughput requirements and different channel conditions; and $iv)$ for a feasible solution, the networks are  supposed to satisfy all the user-specific throughput requirements, regardless of the finite network dimensions, which is a strong assumption in multiple access scenarios.

However, to the best of our knowledge, there is no analysis of the space-terrestrial network in the literature in the face of the spatial correlation imposed by an antenna array, when coherently combining the signal received from both the satellite and terrestrial APs. By taking advantage of both the distributed Cell-Free Massive MIMO structure and satellite communications, we evaluate the ergodic throughput of each user relying on a limited number of APs and demonstrate how the satellite enhances the system performance. Furthermore, we  study a pair of long-term power allocation problems relying on the knowledge of channel statistics. Explicitly, our main contributions are summarized as follows:
\begin{itemize}
\item[$i)$] We derive the achievable rate expression of each user in the uplink (UL) for transmission over spatially correlated fading channels, when relying on centralized data processing. If the MRC technique is used  both at the APs and at the satellite gateway, a closed-form expression of the ergodic net throughput will be derived. 
\item[$ii)$] Furthermore, we formulate a max-min fairness optimization problem that simultaneously allocates the  powers to all the scheduled users and guarantees uniform throughput for the entire network. In contrast to the interior-point method of \cite{ngo2017cell,pan2017joint}, we determine the user-specific optimal power for each user at a low complexity by exploiting the quasi-concavity of the objective function, the standard interference functions, and the bisection method.
\item[$iii)$] For supporting users who have different rate requirements, we formulate and solve a total transmit power minimization problem, while meeting the long-term individual throughput requirements. The proposed algorithms detect and handle any potential  congestion encountered, when the throughput requested by the users cannot be met.
\item[$iv)$] Our numerical results quantify the value of the satellite in improving both the total and the minimum user throughput. More explicitly, the users having poor channel conditions glean considerable benefits from the satellite's assistance. Besides, many users may still have their data throughput requirements satisfied in the face of congestion.
\end{itemize}

The rest of this paper is organized as follows: Section~\ref{Sec:SysModel} presents our  space-terrestrial communication system model, the channel model, and the UL channel estimation protocol. Our ergodic  throughput analysis is provided in Section~\ref{Sec:ULData}. Based on our closed-form   achievable rate expression, Section~\ref{Sec:PowerAllocation} formulates and solves our optimization problems under the constraints of limited power budgets and throughput requirements. Our numerical results are presented in Section~\ref{Sec:NumericalResults}, while our conclusions are offered in Section~\ref{Sec:Concl}. Table~\ref{Table:Notation} tabulates the common
	notation and symbols utilized throughout the paper.
\begin{table}[t]
\centering \caption{Notation and symbols} \label{Table:Notation}
		\begin{tabular}{ |c|c|c| } 
		\hline
		$(\cdot)^T$ & Regular transpose  \\
		$(\cdot)^H$ & Hermitian transpose  \\ 
		$\mathrm{tr}(\mathbf{X})$ & Trace of square matrix $\mathbf{X}$  \\ 
		$\mathbf{I}_N$ &  Identity matrix of size $N \times N$\\
		$\mathcal{CN}(\cdot, \cdot)$ &  Circularly symmetric Gaussian distribution \\
		 $\mathbb{E}\{\cdot\}$ & Expectation of a random variable \\
		$\mod(\cdot, \cdot)$ & Modulus operation \\
		$\lfloor \cdot \rfloor$  & Floor function \\
		$\otimes$ &  Kronecker product \\
		$|\mathcal{X}|$ & Cardinality of set $\mathcal{X}$ \\
		$J_1(\cdot)$  &  Bessel function of the first kind of order one \\
		 $\mathcal{O}(\cdot)$ & Big-$\mathcal{O}$ notation \\
 		\hline
	\end{tabular}
\end{table}
\section{System Model} \label{Sec:SysModel}
We consider a  distributed multi-user network comprising $M$ APs each equipped with a single receiver antenna (RA). The APs  cooperatively serve $K$ users in the UL, all equipped with a single transmit antenna (TA). The system performance is enhanced by the assistance of an NGSO satellite having $N$ RAs arranged in an $N_H \times N_V$-element rectangular array ($N =N_H \times N_V$), as illustrated in Fig.~\ref{FigSysModel}. Both the satellite gateway and the APs forward the UL signals received from the users to a central processing unit (CPU) by fronthaul links. As seen in Fig.~\ref{FigSysModel}, the APs rely on optical fronthaul links, while the satellite has a radio downlink (feeder link) to the ground station, which forwards the users' UL signal to the CPU. We assume that the optical fronthaul links and the feeder link has imperfect channel gains synthesized in a complex Gaussian distribution that influence both the pilot training and data transmission phases. Since the dispersive channels fluctuate both time and frequency over wideband systems, orthogonal frequency division multiplexing (OFDM) is used for mitigating it \cite{you2020massive}. A block-fading channel model is applied across the OFDM symbols, where the fading envelope is assumed to be frequency-flat through an entire OFDM symbol and then faded randomly for the next OFDM symbol. We assume that a fraction of $K$ subcarriers of each  OFDM symbol are known pilots, while the remaining $\tau_c - K$ subcarriers are used for UL payload data transmission. For our system model considered in Fig.~\ref{FigSysModel}, the satellite antenna's gain is sufficiently high to amplify the weak UL signals received from the distant terrestrial users on the ground \cite{3gpp2019study}. The following further assumptions are  exploited for pilot and data signal processing:
\begin{itemize}
	\item The UL channels are locally estimated both at the satellite gateway and at the APs to formulate the desired receiver combining vectors during the pilot-aided training phase. The detailed interpretation is presented in Section~\ref{Sec:ULP}.
	\item  In the UL data transmission phase, linear combining weights are applied to the signals received at the APs and separately to the satellite gateway before forwarding their linearly combined signals to the CPU for coherent receiver-combining. The detailed interpretation is presented in Section~\ref{Sec:UDT}.
\end{itemize}
Although the signal power received by the satellite may be significantly lower than that of the terrestrial APs, the large RA array of the satellite is capable of compensating this with the aid of its high receiver gain. Consequently,  the coherent receiver-combining applied at the CPU is still capable of improving the terrestrial links, provided that the satellite has a high TA gain and the ground station has a high RA gain for compensating the pathloss  \cite{perez2019signal}.
\begin{figure}[t]
	\centering
	\includegraphics[trim=2.6cm 2.6cm 2.2cm 1.7cm, clip=true, width=3.2in]{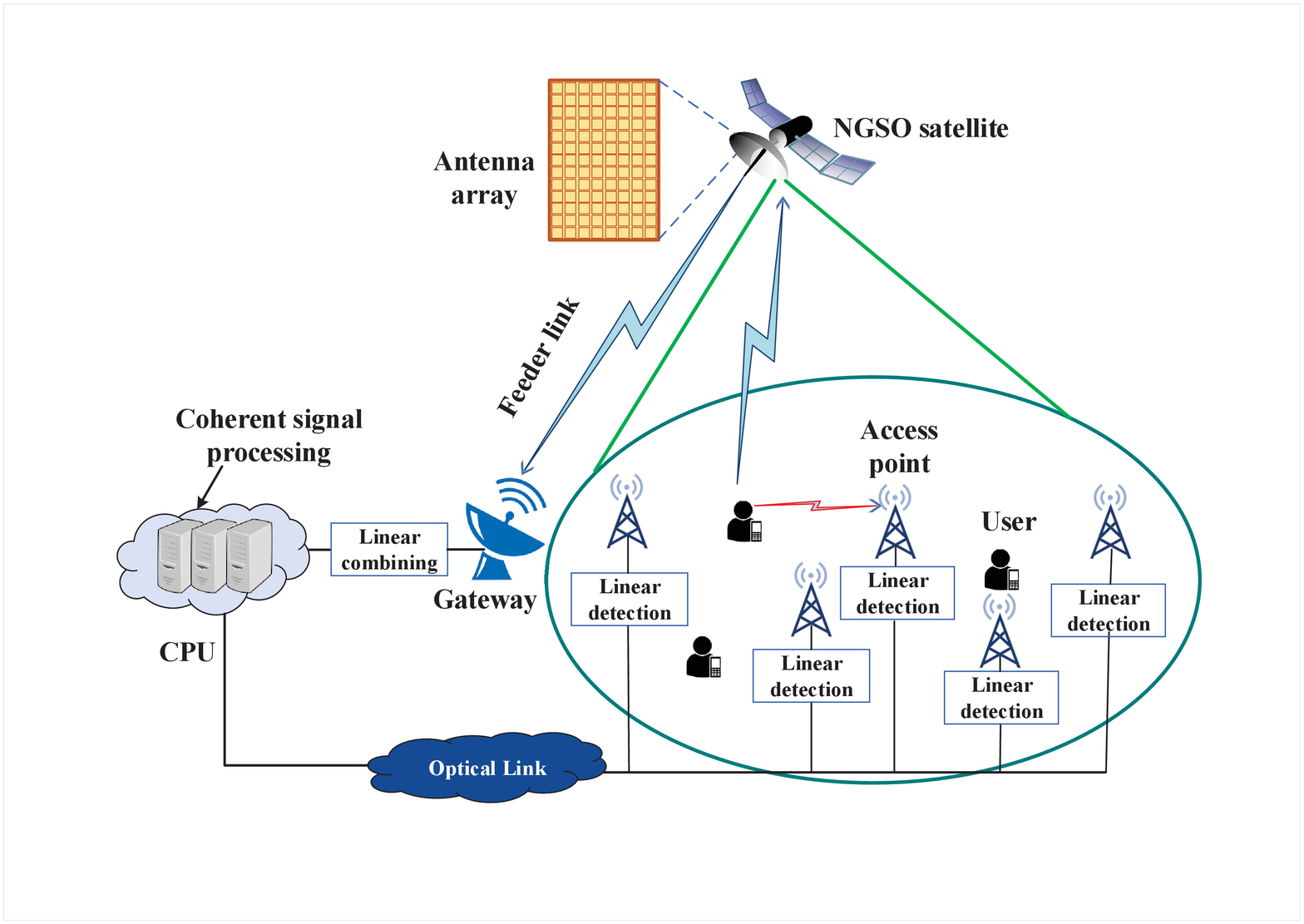} \vspace*{-0.2cm}
	\caption{Illustration of a cooperative satellite-terrestrial  wireless network.}
	\label{FigSysModel}
\end{figure}
\subsection{Channel Model} \label{Sec:ChannelModel}

The terrestrial UL channel between AP~$m$ and user~$k$, $\forall m,k,$ denoted by $g_{mk} \in \mathbb{C}$ is modeled as
$g_{mk}  \sim \mathcal{CN}(0, \beta_{mk})$,
where $\beta_{mk}$ is the large-scale fading coefficient that involves both the path loss and the shadow fading caused by obstacles. The channel between the UL transmitter of user~$k$ and the satellite receiver, denoted by $\mathbf{g}_k \in \mathbb{C}^N$, has been modeled according to the 3GPP recommendation (Release 15) \cite{3gpp2019study} and obeys the Rician distribution as
$\mathbf{g}_k \sim \mathcal{CN}(\bar{\mathbf{g}}_k, \mathbf{R}_k)$,
where $\bar{\mathbf{g}}_k \in \mathbb{C}^N$ denotes the LoS components gleaned from the $N$ RAs in the UL. The matrix $\mathbf{R}_k \in \mathbb{C}^{N \times N}$ is the covariance matrix of the spatially correlated signals collected by the RAs of the satellite attenuated by the propagation loss.\footnote{The propagation loss in the carrier frequency range from $0.5$~GHz to $100$~GHz has been well documented in \cite{3gpp2019study}.} The LoS component is given by 
\begin{equation} \label{eq:bargk}
\bar{\mathbf{g}}_k = \sqrt{\kappa_k \beta_k/(\kappa_k+1)} \left[e^{j \pmb{\ell}(\theta_k, \omega_k)^T \mathbf{c}_1}, \ldots,  e^{j \pmb{\ell}(\theta_k, \omega_k)^T \mathbf{c}_N}  \right]^T,
\end{equation}
where $\theta_k$ and $\omega_k$ are the elevation and azimuth angle, respectively; $\kappa_k \geq 0$ represents the Rician factor;  and $\beta_k$ is the large-scale fading coefficient encountered between user~$k$ and the satellite, which depends both on the satellite's altitude and on the user's location (please see \eqref{eq:betak} in Section~\ref{Sec:NumericalResults} for a particular scenario). We assume that the antenna array is a rectangular surface (please see Fig.~\ref{FigSysModel}), whose wave form vector $\pmb{\ell}(\theta_k, \omega_k)$ \cite{massivemimobook,Chien2021TWC} is defined as 
\begin{equation}
\pmb{\ell}(\theta_k, \omega_k) = \frac{2 \pi}{\lambda}[ \cos(\theta_k)\cos(\omega_k), \sin(\theta_k)\cos(\omega_k), \sin(\theta_k) ]^T,
\end{equation}
with $\lambda$ being the wavelength of the carrier. In \eqref{eq:bargk}, there are $N$ indexing vectors, each given by
\begin{equation}
\mathbf{c}_m = [0, \mathrm{mod}(n-1, N_H) d_H, \lfloor (n-1)/N_H \rfloor d_V ]^T,
\end{equation}
where $d_H$ and $d_V$ represent the antenna spacing in the horizontal and vertical direction, respectively. The 3D channel model of \cite{ying2014kronecker,chatzinotas2009multicell} relies on the spatial correlation matrices of a planar antenna array, formulated as
\begin{equation}
\mathbf{R}_k =  (\beta_k/(\kappa_k +1)) \mathbf{R}_{k,H} \otimes \mathbf{R}_{k,V},
\end{equation}
where $\mathbf{R}_{k,H} \in \mathbb{C}^{N_H \times N_H}$ and $\mathbf{R}_{k,V} \in \mathbb{C}^{N_V \times N_V}$ are the spatial correlation matrices along the horizontal and vertical direction.
\begin{remark}
The propagation channels considered in this paper involve several practical aspects \cite{LTE2017a,3gpp2019study}. The terrestrial channels represent isotropic environments having no dominant propagation path. Furthermore, the presence of a satellite generates extra paths associated with strong reflected waves. Hence, they are formulated for a compact antenna array  obeying a generic ray-based 3D channel model that splits the spatial correlation into the azimuth and elevation dimensions. This model is representative of isotropic scattering environments in the half-space in front of users. Our framework can be widely applied to different space-terrestrial communication scenarios by adopting the corresponding propagation settings of \cite{LTE2017a,3gpp2019study}.
\end{remark}
\subsection{Uplink Pilot Training} \label{Sec:ULP}
 All the $K$ users simultaneously transmit their pilot signals in each coherence block of the UL. We assume to have the same number of orthogonal pilots as users, i.e., we have the set $\{ \pmb{\phi}_1, \ldots, \pmb{\phi}_K\}$, where the pilot $\pmb{\phi}_k \in \mathbb{C}^K$ is assigned to user~$k$ so that we have $\pmb{\phi}_k^H \pmb{\phi}_{k'} = 1 $ if $k=k'$. Otherwise, $\pmb{\phi}_k^H \pmb{\phi}_{k'} = 0 $.
The training signal received at AP~$m$, $\mathbf{y}_{pm} \in \mathbb{C}^K$, is a superposition of all the UL pilot signals transmitted over the propagation environment, which is formulated as
\begin{equation} \label{eq:ypm}
	\mathbf{y}_{pm}^H = \sum\nolimits_{k=1}^K \sqrt{pK} g_{mk} \pmb{\phi}_k^H + \mathbf{w}_{pm}^H,
\end{equation}
where $p$ is the transmit power allocated to each pilot symbol and $\mathbf{w}_{pm} \sim \mathcal{CN}(\mathbf{0}, \sigma_a^2 \mathbf{I}_{\tau_p})$ is the additive white Gaussian noise (AWGN) at AP~$m$ having zero mean and standard derivation of $\sigma_a$~[dB]. Furthermore, the training signal received at the GPU from the space link of Fig.~\ref{FigSysModel} used for estimating the satellite UL channel is formulated as
\begin{equation} \label{eq:Yp}
\mathbf{Y}_p = \sum\nolimits_{k=1}^K \sqrt{pK} \mathbf{g}_{k} \pmb{\phi}_k^H + \mathbf{W}_{p},
\end{equation}
where $\mathbf{W}_{p} \in \mathbb{C}^{N \times K}$ models the AWGN, the imperfect feeder link, and the imperfect synchronization between the satellite and terrestrial links with each element distributed as $\mathcal{CN}(0, \sigma_s^2)$. The desired UL channels are estimated both at the APs and the satellite gateway by relying on the minimum mean square error (MMSE) estimation, as shown in Lemma~\ref{Lemma:Est}.\footnote{We consider the MMSE estimation in this framework since it is Baysian estimator, which  minimizes the mean square error (MSE). The performance of the suboptimal channel estimators with lower computational complexity is of interest and left for a future work.} 
\begin{lemma} \label{Lemma:Est}
The MMSE estimate $\hat{g}_{mk}$ of the UL channel $g_{mk}$ between user~$k$ and AP~$m$ can be computed from \eqref{eq:ypm} as
\begin{equation} \label{eq:ChanEstgmk}
\hat{g}_{mk} = \mathbb{E}\{ g_{mk} | \mathbf{y}_{pm}^H\pmb{\phi}_k \} = \frac{\sqrt{pK} \beta_{mk}\mathbf{y}_{pm}^H\pmb{\phi}_k}{p K  \beta_{mk} + \sigma_a^2},
\vspace*{-0.1cm}
\end{equation}
which is distributed as $\hat{g}_{mk} \sim \mathcal{CN}(0, \gamma_{mk})$ and its variance is
\begin{equation}
\gamma_{mk} = \mathbb{E} \{ |\hat{g}_{mk}|^2\} =   \frac{pK \beta_{mk}^2}{ p K  \beta_{mk} + \sigma_s^2}.
\end{equation}
The channel estimation error $e_{mk} = g_{mk} - \hat{g}_{mk}$ is distributed as $e_{mk} \sim \mathcal{CN}(0, \beta_{mk} - \gamma_{mk} )$. Observe that the channel estimate $\hat{g}_{mk}$ and the channel estimation error are independent.

The MMSE estimate $\hat{\mathbf{g}}_k$ of the channel $\mathbf{g}_k$ spanning from user~$k$ to the satellite can be formulated based on \eqref{eq:Yp} as
\begin{equation} \label{eq:ChannelEstgk}
\hat{\mathbf{g}}_k = \bar{\mathbf{g}}_k +  \sqrt{pK} \mathbf{R}_k \pmb{\Phi}_k \big( \mathbf{Y}_{p} \pmb{\phi}_k -   \sqrt{pK} \bar{\mathbf{g}}_k \big),
\end{equation}
where we have $\pmb{\Phi}_k = \big( p K \mathbf{R}_{k} + \sigma_s^2 \mathbf{I}_N \big)^{-1}$. Additionally, the channel estimation error $\mathbf{e}_k = \mathbf{g}_k - \hat{\mathbf{g}}_k$ and the channel estimate $\hat{\mathbf{g}}_k$ are independent random variables, which are distributed as
\begin{align} \label{eq:EstSat}
& \hat{\mathbf{g}}_k \sim \mathcal{CN}(\bar{\mathbf{g}}_k,  p K \mathbf{R}_k \pmb{\Phi}_k \mathbf{R}_k), \mathbf{e}_k \sim \mathcal{CN}(\mathbf{0}, \mathbf{R}_k - p K \pmb{\Theta}_k),
\end{align}
with $\pmb{\Theta}_k = \mathbf{R}_k \pmb{\Phi}_k \mathbf{R}_k, \forall k$.
\end{lemma}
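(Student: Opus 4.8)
The plan is to treat both claims as instances of the standard minimum mean square error (MMSE) estimator for jointly circularly-symmetric complex Gaussian variables, after first de-spreading the pilot observations to isolate user~$k$. First I would right-multiply the training signals by $\pmb{\phi}_k$. Exploiting the pilot orthogonality $\pmb{\phi}_{k'}^H \pmb{\phi}_k = \delta_{kk'}$, the projection of \eqref{eq:ypm} collapses to $\mathbf{y}_{pm}^H \pmb{\phi}_k = \sqrt{pK}\, g_{mk} + \mathbf{w}_{pm}^H \pmb{\phi}_k$, and similarly \eqref{eq:Yp} reduces to $\mathbf{Y}_p \pmb{\phi}_k = \sqrt{pK}\, \mathbf{g}_k + \mathbf{W}_p \pmb{\phi}_k$. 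The key observation is that the projected noise stays Gaussian with the same per-entry variance because $\pmb{\phi}_k^H \pmb{\phi}_k = 1$; concretely $\mathbf{w}_{pm}^H \pmb{\phi}_k \sim \mathcal{CN}(0, \sigma_a^2)$ and $\mathbf{W}_p \pmb{\phi}_k \sim \mathcal{CN}(\mathbf{0}, \sigma_s^2 \mathbf{I}_N)$. This step converts the multi-user pilot model into a single clean single-user Gaussian observation for each unknown.

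For the scalar terrestrial channel, I would note that $g_{mk}$ and its de-spread observation are jointly zero-mean Gaussian, so the conditional mean $\mathbb{E}\{g_{mk} \mid \mathbf{y}_{pm}^H \pmb{\phi}_k\}$ coincides with the linear MMSE estimate obtained by scaling the observation by the ratio of cross-covariance to observation variance. Computing $\mathrm{Cov}(g_{mk}, \mathbf{y}_{pm}^H \pmb{\phi}_k) = \sqrt{pK}\,\beta_{mk}$ and the observation variance $pK\beta_{mk} + \sigma_a^2$ then reproduces \eqref{eq:ChanEstgmk} directly. The estimate variance $\gamma_{mk}$ follows by squaring the estimator gain, and the error variance $\beta_{mk} - \gamma_{mk}$ follows from the orthogonality principle.

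For the vector satellite channel, the only structural change is the nonzero LoS mean $\bar{\mathbf{g}}_k$, which I would absorb by centering: write $\mathbf{g}_k = \bar{\mathbf{g}}_k + \tilde{\mathbf{g}}_k$ with $\tilde{\mathbf{g}}_k \sim \mathcal{CN}(\mathbf{0}, \mathbf{R}_k)$ and apply the affine MMSE formula $\hat{\mathbf{g}}_k = \bar{\mathbf{g}}_k + \mathbf{R}_{gz}\mathbf{R}_{zz}^{-1}(\mathbf{z} - \mathbb{E}\{\mathbf{z}\})$ to $\mathbf{z} = \mathbf{Y}_p \pmb{\phi}_k$. Substituting $\mathbb{E}\{\mathbf{z}\} = \sqrt{pK}\,\bar{\mathbf{g}}_k$, the cross-covariance $\mathbf{R}_{gz} = \sqrt{pK}\,\mathbf{R}_k$, and the observation covariance $\mathbf{R}_{zz} = pK\mathbf{R}_k + \sigma_s^2 \mathbf{I}_N = \pmb{\Phi}_k^{-1}$ recovers \eqref{eq:ChannelEstgk}. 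The estimate covariance $pK\mathbf{R}_k \pmb{\Phi}_k \mathbf{R}_k$ and the error covariance $\mathbf{R}_k - pK\pmb{\Theta}_k$ in \eqref{eq:EstSat} then drop out of the standard $\mathbf{R}_{gz}\mathbf{R}_{zz}^{-1}\mathbf{R}_{gz}^H$ and $\mathbf{R}_k - \mathbf{R}_{gz}\mathbf{R}_{zz}^{-1}\mathbf{R}_{gz}^H$ expressions.

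Finally, to upgrade ``uncorrelated'' to ``independent'' for the estimate and error pairs, I would invoke the orthogonality principle: the MMSE error is uncorrelated with every function of the observation, in particular with the estimate itself, and since estimate and error are jointly Gaussian, zero correlation forces statistical independence. I do not expect a genuine obstacle, as each step is a textbook manipulation; the only points demanding care are tracking the noise variance through the pilot projection and keeping the LoS mean $\bar{\mathbf{g}}_k$ consistent so that the centered and de-centered forms of $\hat{\mathbf{g}}_k$ agree.
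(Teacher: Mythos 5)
Your proposal is correct and follows exactly the route the paper itself takes: the paper's proof simply invokes the standard MMSE estimation of jointly Gaussian variables (citing Kay's textbook), which is precisely the de-spreading plus linear/affine conditional-mean computation you carry out, with independence following from joint Gaussianity and the orthogonality principle. The only remark worth adding is that the $\sigma_s^2$ appearing in the paper's expression for $\gamma_{mk}$ is evidently a typo for $\sigma_a^2$, and your derivation yields the correct noise variance for the AP link.
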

\begin{proof}
The proof follows from adopting the standard MMSE estimation of \cite{Kay1993a} for our system and channel model.
\end{proof}
Given the independence of the channel estimates and estimation errors, this may be  conveniently exploited in our ergodic data throughput analysis and optimization in the next sections. The LoS components of the space links can be estimated very accurately at the satellite gateway from its received training signals. The closed-form expression of the channel estimates obtained in Lemma~\ref{Lemma:Est} will be  utilized to formulate the receivers' combining weights required for detecting the desired signals in the UL satellite and APs. 
\section{Uplink Data Transmission and Ergodic Throughput Analysis} \label{Sec:ULData}
This section presents the UL data transmission, where all users send their signals both to the APs and to the satellite in a multiple-access protocol. The UL throughput of each user is derived first for arbitrary signal detection techniques. Then a closed-form expression is obtained for an MRC receiver, which is computationally simple and can be readily implemented in a distributed manner relying on the channel estimation procedure detailed in Section~\ref{Sec:ULP}.\footnote{Other linear combining techniques, such as partial MMSE, offer better performance than MRC, therefore, gaining much research interest. By relying on the random matrix theory, the closed-form expression of the ergodic data throughput of the partial MMSE combining technique can be derived in the asymptotic regime, which may be parts of our future work.} 
\subsection{Uplink Data Transmission} \label{Sec:UDT}
All the $K$ users transmit their data both to the $M$ APs and to the satellite, where the symbol $s_k$  of user~$k$ obeys  $\mathbb{E}\{ |s_k|^2\} =1$. This data symbol is allocated a transmit power level $\rho_k >0$. The  signal received at the CPU by the space links, denoted by $\mathbf{y} \in \mathbb{C}^N$, and AP~$m$, denoted by $y_m \in \mathbb{C}$, are, respectively, formulated as
\begin{equation} 
\mathbf{y} = \sum\nolimits_{k=1}^K \sqrt{\rho_k} \mathbf{g}_k s_k  + \mathbf{w} \mbox{ and } y_m =  \sum\nolimits_{k=1}^K \sqrt{\rho_k} g_{mk} s_k  + w_m, \label{eq:ReceiveSigAP}
\end{equation}
where $\mathbf{w} \sim \mathcal{CN}(\mathbf{0}, \sigma_s^2 \mathbf{I}_N)$ and $w_m \sim \mathcal{CN}(0,\sigma_a^2)$  represent the AWGN noise and the other imperfections at the satellite receiver and at AP~$m$, respectively. By exploiting \eqref{eq:ReceiveSigAP}, the system  detects the  signal received from user~$k$, $\forall k$ at the CPU  from the following expression
\begin{equation} \label{eq:2ndsk}
\hat{s}_k  = \mathbf{u}_k^H \mathbf{y} + u_{mk} y_m,
\end{equation}
where $\mathbf{u}_k \in \mathbb{C}^{N}$ is the linear detection vector used for inferring the desired signal arriving from the satellite and $u_{mk} \in \mathbb{C}$ is the detection coefficient used by AP~$m$ (see Fig.~\ref{FigSysModel}). The received symbol estimate in \eqref{eq:2ndsk} combines all the different propagation paths, which explicitly unveils the potential benefits of integrating a satellite into terrestrial networks for improving the reliability and/or the throughput. Upon considering only one of the right-hand side terms of \eqref{eq:2ndsk}, the received signal becomes that of a conventional satellite network \cite{abdelsadek2021future} or a terrestrial cooperative network \cite{ngo2017cell}. Thus, we are considering an  advanced cooperative wireless network relying on the coexistence of both space and terrestrial links.
\subsection{Uplink Data Throughput}
We emphasize that if the number of APs and antennas at the satellite is sufficiently high to treat the channel gain of the desired signal in \eqref{eq:2ndsk} as a deterministic value, the  throughput of user~$k$ can be analyzed conveniently. In order to carry out the throughput analysis, let us first introduce the new variable
\begin{equation} \label{eq:zkkprime}
z_{kk'} = \mathbf{u}_k^H \mathbf{g}_{k'}  +  \sum\nolimits_{m=1}^M  u_{mk}^\ast  g_{mk'},
\end{equation}
which we term as the overall channel coefficient after the use of signal detection techniques, including both the satellite and terrestrial effects. The overall channel coefficient in \eqref{eq:zkkprime} leads to a coherent received signal combination at the CPU. We can assume perfectly phase-coherent symbol-synchronization at all the receivers and neglect any phase-jitter.\footnote{The satellite and terrestrial links have different time delays.  Clearly, given its wide bandwidth, fiber would be faster, requiring some buffering at the CPU before these can be processed when the signal from the satellite reaches the CPU.  The longer propagation delay of the concatenated user-satellite-ground-station path has to be compensated by appropriately delaying the terrestrial signal for coherent combination at the CPU \cite{you2020massive}. The CPU shall be, in principle, able to detect the desired packets from the satellite system by reading headers at higher layers. The time-delay will change due to satellite movement. The synchronization in higher layers can be used to predict the time delay for coherent processing of the next incoming packets, which has to be decoded before reading the headers. The average delay experienced by the satellite is known as a priori (based on geometry), which allows for a coarse synchronization (up to the symbol time period) but possibly fine synchronization might be still needed \cite{choi2015challenges}.  For a LEO satellite at 600~km altitude this only imposes $4$~ms turn-around delay. Moreover, the imperfect symbol-synchronization and the phase impairment  in practice can be compensated by an advanced carrier aggregation technique \cite{kibria2020carrier}.} Otherwise, the imperfect phase-coherent symbol-synchronization can be synthesized by the Gaussian distribution (see Fig.~\ref{Fig:DiffNoiseFloor}). In particular, the desired signal in \eqref{eq:2ndsk} becomes
\begin{multline} \label{eq:Decodesig}
\hat{s}_k  = \sqrt{\rho_k} \mathbb{E} \{ z_{kk} \} s_k +  \sqrt{\rho_k} \left( z_{kk}  -  \mathbb{E} \{ z_{kk} \}  \right)s_k +  \\  \sum\limits_{k'=1, k'\neq k}^K \sqrt{\rho_{k'}} z_{kk'} s_{k'} + \alpha_k^\ast \mathbf{u}_k^H \mathbf{w} + \sum\nolimits_{m=1}^M u_{mk}^\ast w_m,
\end{multline}
where the first additive term contains the desired signal  associated with a deterministic effective channel gain. The second term represents the  beamforming uncertainty, demonstrating the randomness of the effective channel gain for a given signal detection technique. The remaining terms are the mutual interference and noise. By virtue of the  use-and-then-forget channel capacity bounding technique of \cite{Chien2017a,massivemimobook}, the  ergodic  throughput of user~$k$ is
\begin{equation} \label{eq:Rkv1}
	R_k = \left( 1 - K/\tau_c \right) B \log_2 ( 1 + \mathrm{SINR}_k ), \mbox{[Mbps]},
\end{equation}
where $B$ [MHz] is the system bandwidth. The effective signal-to-interference-and-noise ratio (SINR) expression, denoted by $\mathrm{SINR}_k$, is given as
\begin{figure*}
\begin{equation} \label{eq:SINRk}
\mathrm{SINR}_k = \frac{\rho_k | \mathbb{E}\{ z_{kk}\}|^2 }{\sum\nolimits_{k'=1}^K \rho_{k'}  \mathbb{E}\{ |z_{kk'}|^2 \} - \rho_k \big| \mathbb{E}\{ z_{kk}\} \big|^2 +  \mathbb{E} \big\{ \big| \mathbf{u}_k^H \mathbf{w} \big|^2 \big\} + \sum\nolimits_{m=1}^M  \mathbb{E} \big\{ | u_{mk}^\ast w_m |^2 \big\}  }
\end{equation}
\vspace*{-0.7cm}
\end{figure*}
The throughput in \eqref{eq:Rkv1} can be achieved by  arbitrary signal detection techniques at the satellite and APs, since it represents a lower bound of the channel capacity. One can numerically evaluate \eqref{eq:Rkv1} with the aid of  the SINR expression in \eqref{eq:SINRk}, but it requires many realizations of the small-scale fading coefficients to compute several expectations. The direct evaluation of \eqref{eq:Rkv1} relying on Monte Carlo simulations does not provide analytical insights about the impact of the individual parameters on the system performance.
\subsection{Uplink Throughput for Maximum Ratio Combining}
For gaining further insights, we derive a  closed-form expression for \eqref{eq:Rkv1} by relying on statistical signal processing, when the MRC receiver is used by both the satellite and the AP, i.e., $u_{mk} = \hat{g}_{mk}, \forall m,k,$ and $\mathbf{u}_k = \hat{\mathbf{g}}_k, \forall k,$ as in Theorem~\ref{Theorem:ClosedForm}.
\begin{theorem} \label{Theorem:ClosedForm}
If the MRC receiver is utilized for detecting the desired signal, the UL throughput of user~$k$ is evaluated by  \eqref{eq:Rkv1} with the aid of the effective SINR value obtained in closed form for the given channel statistics as
\begin{equation} \label{eq:ClosedSINR}
 \mathrm{SINR}_k = \frac{\rho_k \left(\|\bar{\mathbf{g}}_k\|^2 +  p K \mathrm{tr}(\pmb{\Theta}_k)  +  \sum_{m=1}^M \gamma_{mk} \right)^2}{ \mathsf{MI}_k + \mathsf{NO}_k},
\end{equation}
where the mutual interference $\mathsf{MI}_k$, and noise $\mathsf{NO}_k$ are respectively given as follows
\begin{align}
\mathsf{MI}_k =&  \sum\nolimits_{k' =1 , k' \neq k }^K \rho_{k'} |\bar{\mathbf{g}}_{k}^H  \bar{\mathbf{g}}_{k'} |^2  + p K \sum\nolimits_{k' =1}^K \rho_{k'}    \bar{\mathbf{g}}_{k'}^H\pmb{\Theta}_k \bar{\mathbf{g}}_{k'} \notag \\
&+ \sum\nolimits_{k' =1}^K \rho_{k'} \bar{\mathbf{g}}_{k}^H \mathbf{R}_{k'} \bar{\mathbf{g}}_{k}  + p K \sum\nolimits_{k' =1 }^K \rho_{k'}   \mathrm{tr}( \mathbf{R}_{k'} \pmb{\Theta}_k ) \notag \\
& +  \sum\nolimits_{k' =1}^K \sum\nolimits_{m=1}^M \rho_{k'}  \gamma_{mk} \beta_{mk'}, \label{eq:MIk}\\
\mathsf{NO}_k =&  \sigma_s^2 \|\bar{\mathbf{g}}_k\|^2 +  p K \sigma_s^2 \mathrm{tr}( \pmb{\Theta}_k ) + \sigma_a^2 \sum\nolimits_{m=1}^M  \gamma_{mk}. \label{eq:NOk}
\end{align}
\end{theorem}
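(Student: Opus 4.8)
The plan is to substitute the MRC choice $\mathbf{u}_k=\hat{\mathbf{g}}_k$ and $u_{mk}=\hat{g}_{mk}$ directly into the generic formula \eqref{eq:SINRk} and to evaluate each expectation in closed form using the statistics supplied by Lemma~\ref{Lemma:Est}. Three structural facts are exploited throughout: (a) the space channels $\{\mathbf{g}_k,\hat{\mathbf{g}}_k\}$ are independent of the terrestrial channels $\{g_{mk},\hat{g}_{mk}\}$; (b) since the pilots are orthogonal, $\hat{\mathbf{g}}_k$ (resp. $\hat{g}_{mk}$) depends only on user~$k$'s channel, hence is independent of $\mathbf{g}_{k'}$ (resp. $g_{mk'}$) for $k'\neq k$; and (c) each estimate is independent of its own error, with $\mathbf{g}_k=\hat{\mathbf{g}}_k+\mathbf{e}_k$, $\hat{\mathbf{g}}_k\sim\mathcal{CN}(\bar{\mathbf{g}}_k,pK\pmb{\Theta}_k)$, $\mathbf{e}_k\sim\mathcal{CN}(\mathbf{0},\mathbf{R}_k-pK\pmb{\Theta}_k)$. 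Writing $z_{kk'}=\hat{\mathbf{g}}_k^H\mathbf{g}_{k'}+\sum_m\hat{g}_{mk}^\ast g_{mk'}$, the numerator follows at once: by (c) the mean cross terms with the errors vanish, so $\mathbb{E}\{z_{kk}\}=\mathbb{E}\{\|\hat{\mathbf{g}}_k\|^2\}+\sum_m\mathbb{E}\{|\hat{g}_{mk}|^2\}=\|\bar{\mathbf{g}}_k\|^2+pK\,\mathrm{tr}(\pmb{\Theta}_k)+\sum_m\gamma_{mk}$, which is real and reproduces the squared term in \eqref{eq:ClosedSINR}.

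For the denominator I would expand $\mathbb{E}\{|z_{kk'}|^2\}$ into a space term $\mathbb{E}\{|\hat{\mathbf{g}}_k^H\mathbf{g}_{k'}|^2\}$, a terrestrial term $\mathbb{E}\{|\sum_m\hat{g}_{mk}^\ast g_{mk'}|^2\}$, and a cross term $2\,\mathrm{Re}(\mathbb{E}\{\hat{\mathbf{g}}_k^H\mathbf{g}_{k'}\}\,\mathbb{E}\{\sum_m\hat{g}_{mk}g_{mk'}^\ast\})$ that factorizes by (a). By (b) the cross term survives only for $k'=k$, where it equals $2(\|\bar{\mathbf{g}}_k\|^2+pK\,\mathrm{tr}(\pmb{\Theta}_k))\sum_m\gamma_{mk}$ and will later cancel the matching cross term inside $\rho_k|\mathbb{E}\{z_{kk}\}|^2$. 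For the interferers $k'\neq k$, (b) lets me replace $\mathbf{g}_{k'}$ by its second moment $\bar{\mathbf{g}}_{k'}\bar{\mathbf{g}}_{k'}^H+\mathbf{R}_{k'}$ and then apply the quadratic-form identity $\mathbb{E}\{\mathbf{x}^H\mathbf{A}\mathbf{x}\}=\mathbf{m}^H\mathbf{A}\mathbf{m}+\mathrm{tr}(\mathbf{A}\mathbf{C})$ with $\mathbf{x}=\hat{\mathbf{g}}_k$, yielding exactly $|\bar{\mathbf{g}}_k^H\bar{\mathbf{g}}_{k'}|^2$, $pK\bar{\mathbf{g}}_{k'}^H\pmb{\Theta}_k\bar{\mathbf{g}}_{k'}$, $\bar{\mathbf{g}}_k^H\mathbf{R}_{k'}\bar{\mathbf{g}}_k$ and $pK\,\mathrm{tr}(\mathbf{R}_{k'}\pmb{\Theta}_k)$. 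The terrestrial term for $k'\neq k$ collapses, by independence across APs and across users, to $\sum_m\gamma_{mk}\beta_{mk'}$.

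The delicate step is the self term $k'=k$, where estimate and channel are correlated. Here I would write $\mathbf{g}_k=\hat{\mathbf{g}}_k+\mathbf{e}_k$ and split $\mathbb{E}\{|\hat{\mathbf{g}}_k^H\mathbf{g}_k|^2\}=\mathbb{E}\{\|\hat{\mathbf{g}}_k\|^4\}+\mathbb{E}\{\hat{\mathbf{g}}_k^H(\mathbf{R}_k-pK\pmb{\Theta}_k)\hat{\mathbf{g}}_k\}$, the odd terms vanishing by (c). The main obstacle is the fourth moment of the non-central Gaussian $\hat{\mathbf{g}}_k$, which I would evaluate through $\mathbb{E}\{\|\mathbf{m}+\mathbf{n}\|^4\}=\|\mathbf{m}\|^4+2\|\mathbf{m}\|^2\mathrm{tr}(\mathbf{C})+2\mathbf{m}^H\mathbf{C}\mathbf{m}+(\mathrm{tr}\,\mathbf{C})^2+\mathrm{tr}(\mathbf{C}^2)$ for $\mathbf{n}\sim\mathcal{CN}(\mathbf{0},\mathbf{C})$, taking $\mathbf{m}=\bar{\mathbf{g}}_k$ and $\mathbf{C}=pK\pmb{\Theta}_k$. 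Adding the error quadratic form with covariance $\mathbf{R}_k-pK\pmb{\Theta}_k$, the awkward $(pK)^2\mathrm{tr}(\pmb{\Theta}_k^2)$ contributions cancel exactly; this cancellation is the crux of the argument and is precisely what the MMSE orthogonality structure guarantees. What remains is $(\|\bar{\mathbf{g}}_k\|^2+pK\,\mathrm{tr}(\pmb{\Theta}_k))^2$ plus the same three space terms found for $k'\neq k$ but now at $k'=k$. The analogous scalar terrestrial self term uses $\mathbb{E}\{|\hat{g}_{mk}|^4\}=2\gamma_{mk}^2$ together with AP independence; the $\gamma_{mk}^2$ pieces cancel in the same fashion, leaving $(\sum_m\gamma_{mk})^2+\sum_m\gamma_{mk}\beta_{mk}$.

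Finally I would collect terms. Subtracting $\rho_k|\mathbb{E}\{z_{kk}\}|^2=\rho_k[(\|\bar{\mathbf{g}}_k\|^2+pK\,\mathrm{tr}(\pmb{\Theta}_k))^2+2(\|\bar{\mathbf{g}}_k\|^2+pK\,\mathrm{tr}(\pmb{\Theta}_k))\sum_m\gamma_{mk}+(\sum_m\gamma_{mk})^2]$ removes exactly the squared signal power, the cross term, and the $(\sum_m\gamma_{mk})^2$ term, so that the $k'=k$ summand merges seamlessly into the all-$k'$ sums of \eqref{eq:MIk}; the sole exception is $|\bar{\mathbf{g}}_k^H\bar{\mathbf{g}}_k|^2=\|\bar{\mathbf{g}}_k\|^4$, which is absent at $k'=k$ and hence restricts that particular sum to $k'\neq k$, matching \eqref{eq:MIk}. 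The noise expectations are immediate from (a): $\mathbb{E}\{|\hat{\mathbf{g}}_k^H\mathbf{w}|^2\}=\sigma_s^2\mathbb{E}\{\|\hat{\mathbf{g}}_k\|^2\}=\sigma_s^2(\|\bar{\mathbf{g}}_k\|^2+pK\,\mathrm{tr}(\pmb{\Theta}_k))$ and $\sum_m\mathbb{E}\{|\hat{g}_{mk}^\ast w_m|^2\}=\sigma_a^2\sum_m\gamma_{mk}$, which together form $\mathsf{NO}_k$ in \eqref{eq:NOk}. Assembling the numerator, $\mathsf{MI}_k$ and $\mathsf{NO}_k$ then delivers \eqref{eq:ClosedSINR}.
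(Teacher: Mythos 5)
Your proposal is correct and follows essentially the same route as the paper: substitute the MRC weights into \eqref{eq:SINRk}, split $z_{kk}$ into estimate and estimation-error parts (your decomposition is exactly the paper's $a_{kk}+\tilde a_{kk}+b_{kk}+\tilde b_{kk}$), evaluate the non-central fourth moments via the Gaussian quadratic-form identity of Lemma~\ref{lemma:4moment}, and exploit the space/terrestrial and inter-user independence. The cancellations you highlight — of $p^2K^2\mathrm{tr}(\pmb{\Theta}_k^2)$ between \eqref{eq:hatgkgk} and \eqref{eq:ekhatgk} and of $\sum_m\gamma_{mk}^2$ in the terrestrial self term — are precisely the ones that occur in the paper's Appendix~\ref{Appendix:ClosedForm}.
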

\begin{proof}
The proof is accomplished by computing the expectations in \eqref{eq:SINRk} using the channel models in Section~\ref{Sec:ChannelModel} and the statistical information in Lemma~\ref{Lemma:Est}. The detailed proof is available in Appendix~\ref{Appendix:ClosedForm}.
\end{proof}
The UL throughput of user~$k$ obtained in Theorem~\ref{Theorem:ClosedForm} is a function of the channel statistics, which has a complex expression due to the presence of space links, and it is independent of the small-scale fading coefficients. The spatial correlation and the LoS components created by the presence of the satellite beneficially boost the desired signals, as shown in the numerator of \eqref{eq:ClosedSINR}. The denominator of \eqref{eq:ClosedSINR} represents the interference and noise that degrades the performance, where the SINR is linearly proportional both to  the number of satellite antennas and to the number of APs. Therefore, the achievable throughput of each user should be improved by installing more antennas at the satellite and more APs on the ground. This demonstrates the benefits of distributed APs as shown by the summation of $M$ terms associated with the spatial diversity gain. In the absence of the satellite, the overall channel coefficient is simplified to $z_{kk'} =   \sum\nolimits_{m=1}^M  u_{mk}^\ast  g_{mk'}$, and therefore the effective SINR expression reduces to
\begin{equation} \label{eq:SINRTerrest}
\mathrm{SINR}_k = \frac{\rho_k |\sum_{m=1}^M  \gamma_{mk} |^2}{\sum_{k' =1}^K \sum_{m=1}^M \rho_{k'} \gamma_{mk} \beta_{mk'} + \sigma_a^2 \sum_{m=1}^M  \gamma_{mk}},
\end{equation}
which unveils that the desired signal gain in the numerator  accrues from the centralized signal processing and cooperation among the APs in support of user~$k$. The mutual interference and noise are expressed compactly in the denominator. Hence again, the throughput can be improved by increasing the number of APs. By contrast, in the absence of the APs, the overall channel coefficient is simplified to $z_{kk'} =  \mathbf{u}_k^H \mathbf{g}_{k'}$, and therefore the effective SINR can be expressed as in \eqref{eq:SINRSatelliteOnly}. 
\begin{figure*}
\begin{equation} \label{eq:SINRSatelliteOnly}
\fontsize{10}{10}{\mathrm{SINR}_k = \frac{\rho_k \left|\|\bar{\mathbf{g}}_k\|^2 +  p K \mathrm{tr}(\pmb{\Theta}_k)\right|^2}{ \sum\limits_{k' =1 , k' \neq k }^K \rho_{k'} |\bar{\mathbf{g}}_{k}^H  \bar{\mathbf{g}}_{k'} |^2  + p K \sum\limits_{k' =1}^K \rho_{k'}    \bar{\mathbf{g}}_{k'}^H\pmb{\Theta}_k \bar{\mathbf{g}}_{k'} + \sum\limits_{k' =1}^K \rho_{k'} \bar{\mathbf{g}}_{k}^H \mathbf{R}_{k'} \bar{\mathbf{g}}_{k}  + p K \sum\limits_{k' =1 }^K \rho_{k'}   \mathrm{tr}( \mathbf{R}_{k'} \pmb{\Theta}_k )  +  \sigma_s^2 \|\bar{\mathbf{g}}_k\|^2 +  p K \sigma_s^2 \mathrm{tr}( \pmb{\Theta}_k )  }}
\end{equation}
\hrule
\vspace*{-0.3cm}
\end{figure*}
The desired signal strength is enhanced by both the LoS and NLoS satellite channels, which explicitly shows the benefits of the satellite. 
\vspace*{-0.2cm}
\begin{remark} 
 The coexistence of the satellite and APs generalizes the data throughput analysis of previous works on either  space or terrestrial communications and combines the advantages of both the transmission modes. Coherent data processing at the CPU yields a quadratic array gain on the order of $(M + 2N)^2$. The closed-form expression of the ergodic data throughput in \eqref{eq:ClosedSINR} quantifies the  improvements offered by space-terrestrial communications. For the sake of completeness, we have shown that the stand-alone terrestrial communications only provides an array gain scaling increased with the number of APs, i.e., say $M^2$, while the dominant LoS path in each space link  boosts the array gains with the order of $4N^2$. 
 
Since we consider single-antenna APs, the framework considered leaves room for different deployment conditions and for new resource allocation problems under the space-terrestrial cooperative framework using different beamforming techniques and multiple antennas at the APs. Table~\ref{Table:CompareUL} analytically compares the SINR values for the systems considered.\footnote{
In this paper, we can obtain the exact closed-form solution on the uplink ergodic rate for MRC detection for an arbitrary set of $M,N$, and $K$. We believe that a framework to approximately derive an closed-form expression on the uplink ergodic rate of both the ZF and MMSE detection may indeed be constructed. However, the methodology  would be different since it requires the assumption that $(M+N)/K \rightarrow \infty$ at a fixed rate \cite{Hoydis2013a}. The  closed-form expression on the uplink ergodic rate matches very well with Monte-Carlo simulations for $(M+N)/K \rightarrow \infty$ at a given rate. By contrast, we can obtain the exact closed-form solution of the uplink ergodic rate for the MRC detection for an arbitrary set of $M,N,$ and $K$.  Since the approaches suitable for ZF and MMSE detection are different from that of MRC detection, we would like to leave this exciting issue for our future work.
}  
\end{remark} 
\begin{table*}[t]
	\caption{Comparison of the SINR value among the three systems: Terrestrial communications, satellite communications, and satellite-terrestrial communications} \label{Table:CompareUL}
	\centering
	\resizebox{\textwidth}{!}{\begin{tabular}{|c|c|c|c|}
		\hline
	SINR  & Terrestrial Communications    &  Satellite Communications & Satellite-Terrestrial Communications \\ 
		\hline
		Signal & $ \rho_k \left( \sum\limits_{m=1}^M \gamma_{mk} \right)^2$ &  $ \rho_k \left(\|\bar{\mathbf{g}}_k\|^2 +  p K \mathrm{tr}(\pmb{\Theta}_k) \right)^2$ &  $ \rho_k \left(\|\bar{\mathbf{g}}_k\|^2 +  p K \mathrm{tr}(\pmb{\Theta}_k)  +  \sum\limits_{m=1}^M \gamma_{mk} \right)^2$      \\  
		\hline
	Interference & $\sum\limits_{k' =1}^K \sum\limits_{m=1}^M \rho_{k'}  \gamma_{mk} \beta_{mk'}$ & \makecell{$\sum\limits_{k' =1 ,  k' \neq k }^K \rho_{k'} |\bar{\mathbf{g}}_{k}^H  \bar{\mathbf{g}}_{k'} |^2+ p K \sum\limits_{k' =1}^K \rho_{k'}    \bar{\mathbf{g}}_{k'}^H\pmb{\Theta}_k \bar{\mathbf{g}}_{k'} +$\\ $ \sum\limits_{k' =1}^K \rho_{k'} \bar{\mathbf{g}}_{k}^H \mathbf{R}_{k'} \bar{\mathbf{g}}_{k}  + p K \sum\limits_{k' =1 }^K \rho_{k'}   \mathrm{tr}( \mathbf{R}_{k'} \pmb{\Theta}_k )$}  &  \makecell{$\sum\limits_{k' =1 , k' \neq k }^K \rho_{k'} |\bar{\mathbf{g}}_{k}^H  \bar{\mathbf{g}}_{k'} |^2+ p K \sum\limits_{k' =1}^K \rho_{k'}    \bar{\mathbf{g}}_{k'}^H\pmb{\Theta}_k \bar{\mathbf{g}}_{k'} +$\\ $ \sum\limits_{k' =1}^K \rho_{k'} \bar{\mathbf{g}}_{k}^H \mathbf{R}_{k'} \bar{\mathbf{g}}_{k}  + p K \sum\limits_{k' =1 }^K \rho_{k'}   \mathrm{tr}( \mathbf{R}_{k'} \pmb{\Theta}_k ) +$ \\ $  \sum\limits_{k' =1}^K \sum\limits_{m=1}^M \rho_{k'}  \gamma_{mk} \beta_{mk'}$} \\
		\hline
		Noise & $ \sigma_a^2 \sum\limits_{m=1}^M  \gamma_{mk}$ & $ \sigma_s^2 \|\bar{\mathbf{g}}_k\|^2 +  p K \sigma_s^2 \mathrm{tr}( \pmb{\Theta}_k )$& $ \sigma_s^2 \|\bar{\mathbf{g}}_k\|^2 +  p K \sigma_s^2 \mathrm{tr}( \pmb{\Theta}_k ) + \sigma_a^2 \sum\limits_{m=1}^M  \gamma_{mk}$ \\
		\hline
	\end{tabular}}
\end{table*}
\section{Uplink Data Power Allocation for Space-Terrestrial Communications} \label{Sec:PowerAllocation}
This section considers a pair of optimization problems comprising the max-min fairness  and total transmit power minimization. These optimization problems underline the considerable benefits of a collaboration between the space and terrestrial links under a finite transmit power at each user.
\subsection{Max-Min Fairness Optimization}
Fairness is of paramount importance for planning the networks to provide an adequate throughput for all users by maximizing the lowest achievable ergodic rate. The max-min fairness optimization, which we would like to solve, is formulated as\footnote{In line with authoritative papers in the literature \cite{ngo2017cell, 9136914}, in this paper, we assume that the backhaul is ideal in the sense that it is capable of traffic to carry infinite rate in an error-free manner. Considering a limited backhaul capacity is a potential extension for future work.}
\begin{subequations} \label{Problem:MaxMinQoS}
	\begin{alignat}{2}
		& \underset{ \{ \rho_{k} \} }{\textrm{maximize}} \; \underset{k}{\textrm{min}}
		& & \, \, R_k \label{eq:Obj1} \\
		& \textrm{subject to}
		& & 0 \leq \rho_{k} \leq P_{\mathrm{max},k} \;, \forall k  ,
	\end{alignat}
\end{subequations}
where $P_{\mathrm{max},k}$ is the maximum power that user~$k$ can allocate to each data symbol. Due to the universality of the data throughput expression of \eqref{eq:Rkv1}, Problem~\eqref{Problem:MaxMinQoS} is applicable to any linear receiver combining method. This paper focuses on the MRC method, since we have derived the ergodic throughput with the closed-form SINR expression for each user  shown in \eqref{eq:ClosedSINR}.\footnote{An extension to the other linear combining technique can be accomplished by using the same methodology, but may require extra cost to evaluate the expectations in \eqref{eq:SINRk} numerically.} Based on the upper-level set, the main characteristics of Problem~\eqref{Problem:MaxMinQoS} are given in Lemma~\ref{Lemma:QuasiConcave}. 
\begin{lemma} \label{Lemma:QuasiConcave}
 Problem~\eqref{Problem:MaxMinQoS} is  quasi-concave as the objective function is constructed based on the ergodic UL throughput in \eqref{eq:Rkv1} with the SINR expression in \eqref{eq:ClosedSINR}.
\end{lemma}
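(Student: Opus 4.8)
The plan is to verify the two ingredients that make a maximization program quasi-concave: a convex feasible set and a quasi-concave objective. The feasible set of \eqref{Problem:MaxMinQoS} is the box $\{ \pmb{\rho} : 0 \leq \rho_k \leq P_{\mathrm{max},k}, \forall k \}$ with $\pmb{\rho} = [\rho_1, \ldots, \rho_K]^T$, which is trivially convex, so the entire argument reduces to showing that the objective $\min_k R_k(\pmb{\rho})$ is quasi-concave in $\pmb{\rho}$.

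First I would expose the linear-fractional structure of $\mathrm{SINR}_k$ in \eqref{eq:ClosedSINR} as a function of $\pmb{\rho}$. Its numerator is $\rho_k a_k$ with the constant $a_k = (\|\bar{\mathbf{g}}_k\|^2 + pK\,\mathrm{tr}(\pmb{\Theta}_k) + \sum_{m}\gamma_{mk})^2 \geq 0$, hence linear in $\pmb{\rho}$. Inspecting $\mathsf{MI}_k$ in \eqref{eq:MIk} and $\mathsf{NO}_k$ in \eqref{eq:NOk}, every interference term is $\rho_{k'}$ multiplied by a coefficient of the form $|\bar{\mathbf{g}}_k^H \bar{\mathbf{g}}_{k'}|^2$, $\bar{\mathbf{g}}_{k'}^H \pmb{\Theta}_k \bar{\mathbf{g}}_{k'}$, $\bar{\mathbf{g}}_k^H \mathbf{R}_{k'} \bar{\mathbf{g}}_k$, $\mathrm{tr}(\mathbf{R}_{k'}\pmb{\Theta}_k)$, or $\gamma_{mk}\beta_{mk'}$, while $\mathsf{NO}_k$ is independent of $\pmb{\rho}$; thus $\mathsf{MI}_k + \mathsf{NO}_k$ is also affine in $\pmb{\rho}$. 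The key positivity observation is that $\pmb{\Theta}_k = \mathbf{R}_k \pmb{\Phi}_k \mathbf{R}_k$ is positive semi-definite, since $\mathbf{R}_k$ is a covariance matrix and $\pmb{\Phi}_k = (pK\mathbf{R}_k + \sigma_s^2 \mathbf{I}_N)^{-1}$ is positive definite; consequently all the listed coefficients are nonnegative, and together with $\mathsf{NO}_k > 0$ this guarantees the denominator is strictly positive for every feasible $\pmb{\rho} \geq \mathbf{0}$.

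Having established that each $\mathrm{SINR}_k$ is a ratio of two affine functions with a strictly positive denominator over the feasible set, I would establish quasi-concavity directly through superlevel sets. Fixing a target rate $r$ and using the strict monotonicity of $(1-K/\tau_c)B\log_2(1+\cdot)$ in \eqref{eq:Rkv1}, the condition $R_k(\pmb{\rho}) \geq r$ is equivalent to $\mathrm{SINR}_k(\pmb{\rho}) \geq s_r$ with $s_r = 2^{r/((1-K/\tau_c)B)} - 1$. Because the denominator is positive, this is in turn equivalent to the affine inequality $\rho_k a_k - s_r(\mathsf{MI}_k + \mathsf{NO}_k) \geq 0$, whose solution set is a halfspace and hence convex. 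Therefore every superlevel set of $R_k$ is convex, which is exactly the statement that $R_k$ is quasi-concave (equivalently, a linear-fractional function is quasi-linear and the increasing $\log$ transform preserves this).

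Finally, for the objective $\min_k R_k$ I would use that its superlevel set $\{\pmb{\rho} : \min_k R_k(\pmb{\rho}) \geq r\} = \bigcap_{k=1}^K \{\pmb{\rho} : R_k(\pmb{\rho}) \geq r\}$ is an intersection of the convex halfspaces just derived, hence convex; the pointwise minimum of quasi-concave functions is therefore quasi-concave. Intersecting further with the convex box yields a convex region for every rate level, which is precisely what renders \eqref{Problem:MaxMinQoS} a quasi-concave program, and as a by-product it exposes the feasibility test at a fixed $s_r$ as a system of linear inequalities on which the later bisection search operates. I expect the only genuine obstacle to be the careful bookkeeping in the second step, namely verifying the affine-in-$\pmb{\rho}$ structure of the denominator, the nonnegativity of all interference coefficients through the positive semi-definiteness of $\mathbf{R}_k$ and $\pmb{\Theta}_k$, and the strict positivity of $\mathsf{MI}_k + \mathsf{NO}_k$; once these are in place, the quasi-concavity follows from standard convex-analysis composition rules.
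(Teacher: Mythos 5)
Your proof is correct, and it shares the paper's overall skeleton (quasi-concavity via convexity of the superlevel sets of $\min_k R_k$, plus convexity of the box constraint), but the key step is carried out by a genuinely different and more elementary route. The paper rewrites the superlevel-set condition as $\frac{\xi \mathsf{MI}_k}{\rho_k a_k} + \frac{\xi \mathsf{NO}_k}{\rho_k a_k} \leq 1$, recognizes the left-hand side as a posynomial in $\{\rho_k\}$, and invokes a logarithmic change of variables to conclude convexity — i.e., it leans on the geometric-programming machinery that the paper also alludes to around Problem~\eqref{Problem:Epigraphform}. You instead observe that the numerator $\rho_k a_k$ and the denominator $\mathsf{MI}_k + \mathsf{NO}_k$ are both affine in $\pmb{\rho}$ with strictly positive denominator (justified via $\pmb{\Theta}_k \succeq 0$ and $\mathsf{NO}_k > 0$), so each superlevel set is directly a halfspace intersected with the box. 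Your route is arguably cleaner and slightly stronger: it establishes convexity of the superlevel sets in the \emph{original} power coordinates, which is exactly what quasi-concavity over the feasible box formally requires, whereas the paper's log-change argument establishes convexity in the transformed coordinates and leaves the return to the original variables implicit (both statements happen to hold here precisely because the constraint is affine). What the paper's posynomial view buys is the link to the hidden GP structure it mentions for \eqref{Problem:Epigraphform}; what your view buys is that the fixed-level feasibility test reduces to a system of linear inequalities, which dovetails with the linear-program observation the paper makes for Problem~\eqref{Problem:TotalTransmitPower}. One cosmetic note: your inversion $s_r = 2^{r/((1-K/\tau_c)B)} - 1$ is the correct one; the paper's in-text expression omits the $-1$, which appears to be a typo and does not affect either argument.
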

\begin{proof}
The proof is based on the definition of the upper-level set for a quasi-concave problem. The detailed proof is available in Appendix~\ref{Appendix:QuasiConcave}.
\end{proof}
Lemma~\ref{Lemma:QuasiConcave} unveils that the globally optimal solution to Problem~\eqref{Problem:MaxMinQoS} exists and  can be found in polynomial time. We exploit the quasi-concavity to find the most energy-efficient solution. Upon exploiting that $\mathrm{SINR}_k = 2^{\tau_c R_k /(B(\tau_c -K))}, \forall k$, Problem~\eqref{Problem:MaxMinQoS} is reformulated in an equivalent form by exploiting the epigraph representation of \cite[page 134]{Boyd2004a} as follows
\begin{equation} \label{Problem:Epigraphform}
	\begin{aligned}
		& \underset{ \{ \rho_{k} \} }{\textrm{maximize}} 
		& &  \xi \\
		& \textrm{subject to} && \mathrm{SINR}_k \geq \xi , \forall k, \\
		& & & 0 \leq \rho_{k} \leq P_{\mathrm{max},k} \;, \forall k.
	\end{aligned}
\end{equation}
Observe that Problem~\eqref{Problem:MaxMinQoS} handles the minimum data throughput among all the  $K$ users on a logarithmic price scale, whereas \eqref{Problem:Epigraphform} maximizes  the lowest SINR value in a linear scale. Even though Problem~\eqref{Problem:Epigraphform} could be viewed as a geometric program to attain the maximal fairness level, this would impose high computational complexity, since a hidden convex structure should be deployed \cite{van2018joint}. Observe that, for a given value of $\xi = \xi_o$ in the feasible domain, the minimum total transmit power consumption is obtained by the solution of the following optimization problem\footnote{For a feasible value $\xi$ of Problem~\eqref{Problem:Epigraphform}, the total transmit power can be minimized as a consequence of \cite[Lemma~1]{Yates1995a} since the standard interference functions, which are used for updating the power coefficients, are non-increasing with the number of iterations. Consequently, we can leverage this observation to formulate and solve the total transmit power minimization problem in \eqref{Problem:TotalTransmitPower} for a given value $\xi_o$, which is a solution to Problem~\eqref{Problem:Epigraphform}.}
\begin{subequations}\label{Problem:TotalTransmitPower}
	\begin{alignat}{2}
		& \underset{ \{ \rho_{k} \} }{\textrm{minimize}} 
		& &  \sum\nolimits_{k=1}^K \rho_k \\
		& \textrm{subject to} && \, \, \mathrm{SINR}_k \geq \xi_o , \forall k, \label{eq:SINRConstraints}\\
		& & & 0 \leq \rho_{k} \leq P_{\mathrm{max},k} \;, \forall k.
	\end{alignat}
\end{subequations}
After that, the most energy-efficient solution of Problem~\eqref{Problem:Epigraphform} should be obtained by finding the maximum value of the variable $\xi$ of using, for example, the popular bisection method. The objective function of  \eqref{Problem:TotalTransmitPower} is a linear combination of all the data power variables $\{\rho_k\}, \forall k$. The power constraints are affine, and the SINR constraint of each user can be reformulated as
\begin{equation}
\xi_o \mathsf{MI}_k  + \xi_o \mathsf{NO}_k \leq \rho_k \left(\|\bar{\mathbf{g}}_k\|^2 +  p K \mathrm{tr}(\pmb{\Theta}_k)  +  \sum\nolimits_{m=1}^M \gamma_{mk} \right)^2,
\end{equation}
which is also affine. Consequently, \eqref{Problem:TotalTransmitPower} is  a linear program. A canonical algorithm can get the global solution by the classic interior-point method. The main cost in each iteration is associated with computing the first derivative of the SINR constraints \eqref{eq:SINRConstraints}, which might still impose high computational complexity. Subsequently, in this paper, we propose a low complexity algorithm based on the alternating optimization approach and the closed-form solution for each power coefficient by virtue of the standard interference function (see Definition~\ref{Def:SIF}). By stacking all the transmit data powers in a vector $\pmb{\rho} = [\rho_1, \ldots, \rho_K] \in \mathbb{R}_+^K$, the SINR constraint of user~$k$ is reformulated as
\vspace*{-0.1cm}
\begin{equation} \label{eq:SIFPrice}
\rho_k \geq I_k (\pmb{\rho}),
\vspace*{-0.1cm}
\end{equation}
where $I_k (\pmb{\rho})$ is the standard interference function defined for user~$k$, which is given by
\vspace*{-0.1cm}
\begin{equation}\label{eq:Ikrho}
I_k (\pmb{\rho}) = \frac{\xi_o\mathsf{MI}_k (\pmb{\rho}) + \xi_o \mathsf{NO}_k)}{\left|\|\bar{\mathbf{g}}_k\|^2 +  p K \mathrm{tr}(\pmb{\Theta}_k)  +  \sum\nolimits_{m=1}^M \gamma_{mk} \right|^2},
\vspace*{-0.1cm}
\end{equation}
where the detailed expression of $\mathsf{MI}_k (\pmb{\rho})$ has been given in \eqref{eq:MIk}, but here we express it as a function of the transmit power variables stacked in $\pmb{\rho}$. Apart from the SINR constraint, the data power of each user should satisfy the individual power budget, hence we have
\vspace*{-0.1cm}
\begin{equation} \label{eq:PowerConstraint}
 I_k (\pmb{\rho}) \leq \rho_k \leq P_{\max,k}.
\vspace*{-0.1cm}
\end{equation}
One can search across the range of each data power variable observed in  \eqref{eq:PowerConstraint}, where the global optimum  of Problem~\eqref{Problem:Epigraphform} is validated by Theorem~\ref{Theorem:Bisection}.
\vspace*{-0.2cm}
\begin{theorem} \label{Theorem:Bisection}
For a given feasible $\xi_o$ value and the initial data powers $\rho_k(0) = P_{\max,k}, \forall k$, the globally optimal solution of Problem~\eqref{Problem:TotalTransmitPower} is obtained by computing the standard interference function in \eqref{eq:Ikrho} and the power constraint in \eqref{eq:PowerConstraint} for all users. In more detail, if the data power of user~$k$ is updated at iteration~$n$ as   
\vspace*{-0.1cm}
\begin{equation} \label{eq:rho}
\rho_k(n) = I_k(\pmb{\rho}(n-1)),
\vspace*{-0.1cm}
\end{equation}
where $I_k(\pmb{\rho}(n-1))$ is defined in \eqref{eq:Ikrho} with $\pmb{\rho}(n-1)$ denoting the data power vector from the previous iteration, then this iterative approach converges to the unique optimal solution after a finite number of iterations.  Owning to the feasibility of $\xi_o$, it holds that $I_k(\pmb{\rho}(n-1)) \leq P_{\max,k}, \forall k$.
\begin{algorithm}[t]
	\caption{Data power allocation to  Problem~\eqref{Problem:MaxMinQoS} by using the standard interference function and the bisection method} \label{Algorithm1}
	\textbf{Input}:  Define the maximum data powers $P_{\max,k}, \forall k$; Select initial values $\rho_{k}(0) = P_{\max,k}, \forall k$; Set the maximum bound $\xi_{o}^{\mathrm{up}}$ as in \eqref{eq:xioup}; Define $\xi_{\min,o} = 0$ and $\xi_{\max,o} = \xi_{o}^{\mathrm{up}}$; Set the inner tolerance $\epsilon$ and the outer tolerance $\delta$.
	\begin{itemize}
		\item[1.]  Initialize the outer loop index $n=1$.
		 \item[2.] \textbf{while} $\xi_{\max,o} - \xi_{\min,o} > \delta$ \textbf{do}
		\begin{itemize}
  	        \item[2.1.] Set $\tilde{\rho}_k(0) = \rho_k(0), \forall k$; Set $\xi_o = (\xi_{\min,o}  + \xi_{\max,o} )/2$ and compute $R_o =  B(1-K/\tau_c)\log_2(1+\xi_o)$.
			\item[2.2.] Compute the total power consumption $P_{\mathrm{tot}}(0) =\sum_{k=1}^K \rho_{k}(0)$.
			\item[2.3] Initialize the accuracy $T= P_{\mathrm{tot}}(0)$ and the inner loop index $m=1$.
			\item[2.4.] \textbf{while} $T> \varepsilon$ \textbf{do} 
			\begin{itemize}
			\item[2.4.1.] User~$k$ computes the standard interference function $	{I}_{k} \left(\tilde{\pmb{\rho}} (m-1) \right)$ using \eqref{eq:rho} with $\tilde{\pmb{\rho}} (m-1) =[\tilde{\rho}_1 (m-1), \ldots, \tilde{\rho}_K (m-1)] \in \mathbb{R}_+^K$.
			\item[2.4.2.] User~$k$ updates its temporary data power as \eqref{eq:rhotilde}.
			\item[2.4.3.] Repeat Steps $2.4.1$ and $2.4.2$ with other users, then update the accuracy as in \eqref{eq:gamman}.
		     \item[2.4.4.] If $T \leq \varepsilon$ $\rightarrow$ Compute $R_k(\tilde{\pmb{\rho}}_{k}(m)),\forall  k,$ and go to Step~$3$. Otherwise, set $m= m+1$ and go to Step $2.4.1$.
			\end{itemize}
		    \item[2.5.] \textbf{End while}
			\item[2.6.] If $\exists k, R_k(\tilde{\pmb{\rho}}_{k}(m)) < R_o$, set $\xi_{\max,o} = \xi_o$ and go to Step 1. Otherwise, update $\rho_{k}(n) = 	\tilde{\rho}_k(m), \forall k,$ and set $\xi_{\min,o} = \xi_o$, set $n=n+1$, and go to Step~$1$.
			
		\end{itemize}
		\item[3.] \textbf{End while}
		\item[4.]  Set $\rho_{k}^{\ast} = \rho_{k}(n),\forall  k$.
	\end{itemize}
	\textbf{Output}: Final interval $[\xi_{\min,o}, \xi_{\max,o}]$ and $\{ \rho_{k}^{\ast} \}$, $\forall l,k$. \vspace*{-0.0cm}
\end{algorithm}

Then,  the most energy-efficient solution of Problem~\eqref{Problem:Epigraphform} is obtained by updating the lower bound of the SINR values across the search range $\xi_o \in [0,  \xi_o^{\mathrm{up}}]$, where $\xi_o^{\mathrm{up}}$ is given by
\vspace*{-0.1cm}
\begin{equation} \label{eq:xioup}
\xi_o^{\mathrm{up}} =  \underset{k}{\min} \, \frac{P_{\max,k} \left|\|\bar{\mathbf{g}}_k\|^2 +  p K \mathrm{tr}(\pmb{\Theta}_k)  +  \sum_{m=1}^M \gamma_{mk} \right|^2}{ \mathsf{NO}_k}.
\vspace*{-0.1cm}
\end{equation}
Along all considered values of the variable $\xi_0$, Problem~\eqref{Problem:TotalTransmitPower} is infeasible if the following condition is met at least by one user for a given value $\xi_o$ as
\vspace*{-0.1cm}
 \begin{equation} \label{eq:IkPmaxk}
 R_k (\pmb{\rho}(n)) < B(1-K/\tau_c)\log_2(1+\xi_o), \exists k \in \{1, \ldots, K\}.
 \vspace*{-0.1cm}
 \end{equation}
\end{theorem}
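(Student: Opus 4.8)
The plan is to recognize Problem~\eqref{Problem:TotalTransmitPower} as a fixed-point power-control problem and to invoke the standard interference function (SIF) machinery of \cite{Yates1995a}. First I would verify that the map $I_k(\pmb{\rho})$ defined in \eqref{eq:Ikrho} is a standard interference function in the sense of Definition~\ref{Def:SIF}, i.e., that it satisfies positivity, monotonicity and scalability. Positivity is immediate because the noise term $\mathsf{NO}_k$ in the numerator is strictly positive while the denominator $\big|\|\bar{\mathbf{g}}_k\|^2 + pK\,\mathrm{tr}(\pmb{\Theta}_k) + \sum_{m=1}^M \gamma_{mk}\big|^2$ is a fixed positive constant. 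Monotonicity follows from the fact that $\mathsf{MI}_k(\pmb{\rho})$ in \eqref{eq:MIk} is a linear form in $\pmb{\rho}$ with non-negative coefficients, while $\mathsf{NO}_k$ is constant, so that $\pmb{\rho} \geq \pmb{\rho}'$ componentwise forces $I_k(\pmb{\rho}) \geq I_k(\pmb{\rho}')$.

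For scalability I would exploit that $\mathsf{MI}_k(\pmb{\rho})$ is homogeneous of degree one whereas $\mathsf{NO}_k$ is independent of $\pmb{\rho}$: for any $\alpha > 1$,
\begin{equation*}
\alpha I_k(\pmb{\rho}) - I_k(\alpha \pmb{\rho}) = \frac{(\alpha - 1)\,\xi_o\,\mathsf{NO}_k}{\big|\|\bar{\mathbf{g}}_k\|^2 + pK\,\mathrm{tr}(\pmb{\Theta}_k) + \sum_{m=1}^M \gamma_{mk}\big|^2} > 0,
\end{equation*}
which is the strict inequality required. This isolates the strictly positive noise floor $\mathsf{NO}_k$ as the mechanism that drives the iteration toward a unique fixed point, and I would regard establishing these three properties as the technical core of the argument, even though each verification is routine once the homogeneity of $\mathsf{MI}_k$ is noted.

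With $I_k$ confirmed to be an SIF, I would apply the convergence result of \cite{Yates1995a}: whenever a power vector meeting all the SINR constraints exists, the synchronous update $\rho_k(n) = I_k(\pmb{\rho}(n-1))$ in \eqref{eq:rho} converges to the unique fixed point $\pmb{\rho}^\star$ from any initialization, and that fixed point is the componentwise-minimal vector satisfying $\mathrm{SINR}_k = \xi_o$ for all $k$. Initializing at the super-solution $\rho_k(0) = P_{\max,k}$ makes the sequence monotonically non-increasing, so the iterates stay inside the box $[\,\rho_k^\star,\, P_{\max,k}]$, which gives the asserted bound $I_k(\pmb{\rho}(n-1)) \leq P_{\max,k}$. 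Because every constraint in \eqref{eq:SINRConstraints} is active at $\pmb{\rho}^\star$ and any componentwise-smaller vector would violate at least one of them, $\pmb{\rho}^\star$ also minimizes the linear objective $\sum_k \rho_k$; hence it is the global optimum of the linear program \eqref{Problem:TotalTransmitPower}, with uniqueness inherited from the uniqueness of the SIF fixed point.

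Finally, for the outer bisection I would bound the admissible target and characterize infeasibility. Dropping the non-negative interference and using $\rho_k \leq P_{\max,k}$ yields $\mathrm{SINR}_k \leq P_{\max,k}\big|\|\bar{\mathbf{g}}_k\|^2 + pK\,\mathrm{tr}(\pmb{\Theta}_k) + \sum_{m=1}^M \gamma_{mk}\big|^2 / \mathsf{NO}_k$ for every feasible allocation, so a common target $\xi_o$ cannot exceed the minimum of these per-user ceilings, which is exactly $\xi_o^{\mathrm{up}}$ in \eqref{eq:xioup}; this justifies restricting the search to $\xi_o \in [0, \xi_o^{\mathrm{up}}]$. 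If no budget-respecting fixed point exists for a given $\xi_o$, the iteration cannot deliver the target rate for at least one user, which is precisely the detection criterion \eqref{eq:IkPmaxk}, and the bisection then shrinks the interval from above. The main obstacle I anticipate is not the SIF verification but arguing cleanly that the fixed point coincides with the LP optimum and that the start at $P_{\max,k}$ produces a monotone, budget-feasible trajectory; this hinges on $P_{\max,k}$ being a genuine super-solution for every feasible $\xi_o$, a point that must be tied back carefully to the notion of feasibility used by the bisection.
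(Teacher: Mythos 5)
Your proposal follows essentially the same route as the paper's proof: verifying positivity, monotonicity, and scalability of $I_k(\pmb{\rho})$ (including the identical decomposition $\alpha I_k(\pmb{\rho}) = I_k(\alpha\pmb{\rho}) + (\alpha-1)\xi_o\mathsf{NO}_k/|\cdot|^2$), invoking the fixed-point convergence result of \cite{Yates1995a}, deriving $\xi_o^{\mathrm{up}}$ by dropping the mutual interference and bounding $\rho_k \leq P_{\max,k}$, and using the rate condition for infeasibility detection. The extra detail you supply on why the componentwise-minimal fixed point coincides with the LP optimum is a welcome elaboration of a step the paper delegates to citations, so the proposal is correct and matches the paper's argument.
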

\begin{proof}
The main proof hinges on verifying the standard interference function defined for each user and on finding an upper bound for the bisection method. The infeasiblity detection is then straightforwardly obtained. The detailed proof is available in Appendix~\ref{Appendix:Bisection}.
\end{proof}
Theorem~\ref{Theorem:Bisection} provides an iterative design to  obtain the global optimum of Problem~\eqref{Problem:MaxMinQoS}: Firstly, a low complexity mechanism  is presented to update the data powers along the iterations as seen in \eqref{eq:rho}. Secondly, we may use an efficient search, for example, the bisection method for maximizing the minimum data throughput with an  effective search range whose upper bound given in \eqref{eq:xioup}. Thirdly, the achievable rate condition \eqref{eq:IkPmaxk} can be exploited to detect if Problem~\eqref{Problem:TotalTransmitPower} is feasible for a given value $\xi_o$  by updating the data throughput in each iteration.

From the analytical features in Theorem~\ref{Theorem:Bisection}, the proposed alternating technique of finding the optimal solution to Problem~\eqref{Problem:MaxMinQoS} is shown in Algorithm~\ref{Algorithm1} by initially setting the maximum data power to each user, i.e., $\rho_k(0) = P_{\max,k}, \forall k$ and the range of $[\xi_{\min,o}, \xi_{\max,o}]$ for the parameter $\xi_o$ by capitalizing on \eqref{eq:xioup}. The bisection is utilized to update $\xi_o$, while the data powers are iteratively updated by the standard interference function seen in \eqref{eq:rho} subject to the condition \eqref{eq:PowerConstraint}. In particular, for a given value of $\xi_o = (\xi_{\min,o} + \xi_{\max,o})/2$, the temporary data power coefficients  are set as $\tilde{\rho}_k(0) = \rho_k(0), \forall k$. Then user~$k$ will update its temporary data power at inner iteration~$m$ as
\vspace*{-0.1cm}
\begin{equation} \label{eq:rhotilde}
	\tilde{\rho}_k(m) = \min({I}_{k} \left(\tilde{\pmb{\rho}} (m-1) \right), P_{\max,k}).
\vspace*{-0.1cm}
\end{equation}
The inner loop can be terminated, when the difference between two consecutive iterations becomes small. For example, we may compute the normalized total power consumption ratio
\begin{equation} \label{eq:gamman}
\begin{split}
T &= \frac{| P_{\mathrm{tot}}(m) - P_{\mathrm{tot}}(m-1) |}{  P_{\mathrm{tot}}(m-1)}  \\
& = \frac{\left|\sum_{k=1}^K \tilde{\rho}_{k}(m)  - \sum_{k=1}^K \tilde{\rho}_{k}(m-1)  \right|}{\sum_{k=1}^K \tilde{\rho}_{k}(m-1) },
\end{split}
\end{equation}
with $P_{\mathrm{tot}}(m) = \sum_{k=1}^K \tilde{\rho}_{k}(m) $ denoting the total transmit data power at inner iteration~$m$. The data power will converge to the optimal solution with tolerance as $\gamma(n) \leq \epsilon$. For a given value $\xi_o$, if Problem~\eqref{Problem:TotalTransmitPower} is feasible, the lower bound of $\xi_o$ is then updated, yielding $\xi_{\min,o} = \xi_o$ after obtaining the data power solution. Otherwise, \eqref{eq:IkPmaxk} is utilized  to detect if there is no solution to Problem~\eqref{Problem:TotalTransmitPower} for a given $\xi_o$. The closed-form expression in updating the temporary data power as shown in \eqref{eq:rhotilde} will be the data power of user~$k$ if $R_k(\tilde{\pmb{\rho}}_{k}(m)) \geq R_o, \forall k$. It means that outer iteration~$n$ will perform  $\rho_k(n) = \tilde{\rho}_k(m)$ 
and update $\xi_{\min,o} = \xi_o$. Otherwise, the upper bound $\xi_{\max,o}$ will be shrunk as $\xi_{\max,o} =\xi_o$. Assuming that the dominant arithmetic operators are multiplications and divisions, we can estimate the computational complexity of Algorithm~\ref{Algorithm1}, where the channels' statistical information is computed in advance. Specifically, the data power should be acquired at the computational complexity order of $\mathcal{O}((4K^2  + MK^2  + 5K)(U_1+1))$, where $U_1$ is the number of iterations, when Algorithm~\ref{Algorithm1} reaches the accuracy~$\epsilon$. In addition, the bisection method requires a number of iterations that is proportional to $\lceil \log_2 (\xi_o^{\mathrm{up}}/\delta) \rceil $. This demonstrates the effectiveness of using the upper bound \eqref{eq:xioup} in reducing the total cost. Consequently, the computational complexity of Algorithm~\ref{Algorithm1}, denoted by $C_1$, is of the order of
\vspace*{-0.1cm}
\begin{equation} \label{eq:C1}
C_1 = \mathcal{O}\left( \lceil \log_2 (\xi_o^{\mathrm{up}}/\delta) \rceil  \left( (4K^2 + MK^2 + 5K)(U_1 + 1)\right) \right),
\vspace*{-0.1cm}
\end{equation}
where $ \lceil \cdot \rceil $ is the ceiling function. The computational complexity is directly proportional to the number of APs and it is in a quadratic function of the number of users. However, \eqref{eq:C1} emphasizes that the computational complexity of our proposed algorithm does not depend on the number of satellite antennas.
\vspace*{-0.2cm}
\begin{remark}
Algorithm~\ref{Algorithm1} offers a low complexity design that optimizes the max-min fairness service to all the $K$ users  in a space-terrestrial communication system with the most energy-efficient solution based on the analysis of its quasi-concavity. Our proposed design can also detect infeasible problems. The data powers are updated in a closed-form solution by utilizing the standard interference function for a given lower bound on the ergodic throughput. All the users get the uniformly best quality of service by exploiting the bisection method. We emphasize that the max-min fairness optimization always provides a feasible data power allocation solution. However, this optimization problem is not scalable in a sense that for large-scale networks supporting many users, the max-min fairness level tends zero, i.e., $\xi \rightarrow 0$ as $K \rightarrow \infty $, by the use of Theorem~\ref{Theorem:Bisection} under near-far effects.
\vspace*{-0.2cm}
\end{remark}
\begin{algorithm}[t]
	\caption{Data power allocation to problem~\eqref{Problem:TotalV1} by spending maximum transmit power on unsatisfied users} \label{Algorithm2}
	\textbf{Input}:  Define the maximum data powers $P_{\max,k}, \forall k$; Select the initial values $\rho_{k}(0) = P_{\max,k}, \forall k$; Compute the total transmit power consumption $P_{\mathrm{tot}}(0) = \sum_{k=1}^K \rho_{k}(0)$; Set initial value $n=1$ and the tolerance $\epsilon$.
	\begin{itemize}
		\item[1.] User~$k$ computes the standard interference function $	\tilde{I}_{k} \left(\pmb{\rho} (n-1) \right)$ using \eqref{eq:Ikrhov1}.
		\item[2.] If $\tilde{I}_{k} \left(\pmb{\rho} (n-1) \right) > P_{\max,k}$, update $\rho_{k}(n) = P_{\max,k}$. Otherwise, update $ \rho_{k}(n) = 	\tilde{I}_{k} \left(\pmb{\rho} (n-1) \right) $.
		\item[3.] Repeat Steps $1,2$ with other users, then compute the ratio $\gamma (n) =$ $| P_{\mathrm{tot}}(n) - P_{\mathrm{tot}}(n-1) | /  P_{\mathrm{tot}}(n-1)$.
		\item[4.] If $\gamma (n) \leq \epsilon$ $\rightarrow$ Set $\rho_{k}^{\ast} = \rho_{k}(n),\forall k,$ and Stop. Otherwise, set $n= n+1$ and go to Step $1$.
	\end{itemize}
	\textbf{Output}: A fixed point $\rho_{k}^{\ast}$, $\forall k$. \vspace*{-0.0cm}
\end{algorithm}
\subsection{Total Transmit Power Minimization Under Individual Demand-based Constraints}
Another ambition of the future wireless systems is to provide an individual QoS for each user in the coverage area, whilst consuming as little power as possible. A total transmit power minimization problem under the SINR constraints is formulated as
\begin{subequations}\label{Problem:TotalV1}
	\begin{alignat}{2}
		& \underset{ \{ \rho_{k} \} }{\textrm{minimize}} 
		& &  \sum\nolimits_{k=1}^K \rho_k \label{eq:ObjectFunc}\\
		& \textrm{subject to} && \, \, \mathrm{SINR}_k \geq \xi_k , \forall k, \label{eq:SINRConstraintsV1}\\
		& & & 0 \leq \rho_{k} \leq P_{\mathrm{max},k} \;, \forall k,
	\end{alignat}
\end{subequations}
where $\xi_k >0$ denotes the SINR value corresponding to the user-specific data throughput that user~$k$ requests from the network. For a predetermined set of the requested SINR values $\{ \xi_k\}$, Problem~\eqref{Problem:TotalV1} is a linear program having a compact feasible set. Hence, the global optimum always exists and can be obtained in polynomial time, as mentioned. We notice that problem~\eqref{Problem:TotalV1} simultaneously optimizes the data powers of all the users. However, the individual SINR constraints  make it more challenging to guarantee finding the global solution under finite network dimensions. In multiple access scenarios, the system may not be able to serve all the users owing to, for example, the near-far effects, weak channel conditions, and excessive SINR requirements. When the user-specific SINR constraints of some users cannot be satisfied, we arrive at an infeasible solution. Under congestion, the $K$ users are split into: \textit{Satisfied users} who can have their throughput requirements met; and \textit{unsatisfied users} who are served at a throughput less than requested. We conceive a pair of algorithms for detecting congestion and relaxing the SINR requirements of unsatisfied users. The proposed designs can still offer satisfactory SINRs. 
\subsubsection{Assigning Maximum Power to Unsatisfied Users}
Following a similar methodology as for the network-wise SINR constraint in \eqref{eq:Ikrho}, we construct the standard interference function for user~$k$  as
\begin{equation}\label{eq:Ikrhov1}
	\tilde{I}_k (\pmb{\rho}) = \frac{\xi_k\mathsf{MI}_k (\pmb{\rho}) + \xi_k \mathsf{NO}_k}{\left|\|\bar{\mathbf{g}}_k\|^2 +  p K \mathrm{tr}(\pmb{\Theta}_k)  +  \sum\nolimits_{m=1}^M \gamma_{mk} \right|^2}.
\end{equation}
Observe by comparing the standard interference function of \eqref{eq:Ikrhov1} and \eqref{eq:Ikrho} that in \eqref{eq:Ikrhov1} the individual SINR $\xi_k, \forall k,$ are used. A fixed point solution to problem~\eqref{Problem:TotalV1} is obtained in Lemma~\ref{lemma:Alg2}.
\begin{lemma} \label{lemma:Alg2}
From the initial value $\rho_k(0) = P_{\max,k}, \forall k$, if the data power of user~$k$ is updated at iteration $n$ as
\begin{equation} \label{eq:rhok}
\rho_k (n) = \max(\tilde{I}_k(\pmb{\rho}(n-1)), P_{\max,k}),
\end{equation}
then the iterative approach converges to a fixed point solution in polynomial time.
\end{lemma}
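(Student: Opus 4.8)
The plan is to recognize \eqref{eq:rhok} as a fixed-point iteration driven by a \emph{standard interference function} in the sense of Yates \cite{Yates1995a}, and to establish convergence by verifying the three defining axioms and then exploiting monotonicity from the chosen initialization. I read the update in \eqref{eq:rhok} as capping the interference-function value at the per-user budget, i.e. $\rho_k(n)=\min(\tilde I_k(\pmb\rho(n-1)),P_{\max,k})$, consistent with Step~2 of Algorithm~\ref{Algorithm2}. First I would fix notation by writing the denominator of \eqref{eq:Ikrhov1} as a strictly positive constant $D_k=\big|\|\bar{\mathbf g}_k\|^2+pK\,\mathrm{tr}(\pmb\Theta_k)+\sum_{m=1}^M\gamma_{mk}\big|^2$ and noting, from \eqref{eq:MIk}, that $\mathsf{MI}_k(\pmb\rho)=\sum_{k'}\rho_{k'}c_{k,k'}$ is affine in $\pmb\rho$ with nonnegative coefficients $c_{k,k'}$, while $\mathsf{NO}_k$ in \eqref{eq:NOk} is a strictly positive constant independent of $\pmb\rho$.

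Next I would verify the three axioms for $\tilde I_k$. For \emph{positivity}, since $\xi_k>0$, $\mathsf{MI}_k(\pmb\rho)\ge 0$, $\mathsf{NO}_k>0$ and $D_k>0$, we get $\tilde I_k(\pmb\rho)>0$. For \emph{monotonicity}, if $\pmb\rho\ge\pmb\rho'$ entrywise then $\mathsf{MI}_k(\pmb\rho)\ge\mathsf{MI}_k(\pmb\rho')$ because every coefficient $c_{k,k'}$ is nonnegative; dividing by the constant $D_k$ preserves the inequality, so $\tilde I_k(\pmb\rho)\ge\tilde I_k(\pmb\rho')$. For \emph{scalability}, for $\alpha>1$ the degree-one homogeneity $\mathsf{MI}_k(\alpha\pmb\rho)=\alpha\,\mathsf{MI}_k(\pmb\rho)$ gives, by direct subtraction, $\alpha\tilde I_k(\pmb\rho)-\tilde I_k(\alpha\pmb\rho)=(\alpha-1)\xi_k\mathsf{NO}_k/D_k>0$, which is the strict scalability inequality and is exactly where the positive noise floor $\mathsf{NO}_k$ is essential. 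The positive constant $P_{\max,k}$ trivially satisfies all three axioms, and since the pointwise minimum of two standard interference functions is again standard, the capped map $\hat I_k(\pmb\rho)=\min(\tilde I_k(\pmb\rho),P_{\max,k})$ is standard as well.

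Finally I would establish convergence from the initialization $\rho_k(0)=P_{\max,k}$. Because $\hat I_k(\pmb\rho(0))=\min(\tilde I_k(\pmb\rho(0)),P_{\max,k})\le P_{\max,k}=\rho_k(0)$, we have $\pmb\rho(1)\le\pmb\rho(0)$; monotonicity of $\hat I_k$ propagates this by induction to $\pmb\rho(n+1)\le\pmb\rho(n)$ for all $n$. The iterates therefore form a componentwise non-increasing sequence bounded below by $\mathbf 0$, so by the monotone convergence theorem they converge, and continuity of $\hat I_k$ identifies the limit as a fixed point $\pmb\rho^\ast=\hat I(\pmb\rho^\ast)$. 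I would close by noting that the affine structure of $\tilde I_k$ makes the iteration a piecewise-linear monotone map whose contraction rate is geometric, so $O(\log(1/\epsilon))$ iterations reach tolerance $\epsilon$, each at cost polynomial in $K$ and $M$, yielding the claimed polynomial-time convergence. The main point to stress---and the reason this is more than a restatement of \cite{Yates1995a}---is that the cap at $P_{\max,k}$ guarantees a fixed point \emph{even when the SINR targets $\{\xi_k\}$ are infeasible}: the monotone-from-above argument never invokes feasibility, which is precisely what lets Algorithm~\ref{Algorithm2} terminate gracefully under congestion, and this is the step I expect to require the most care in phrasing.
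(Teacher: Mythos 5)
Your proposal is correct and follows essentially the same route as the paper: verify that $\tilde I_k$ satisfies the positivity, monotonicity, and scalability axioms (exactly as the paper does for $I_k$ in the proof of Theorem~\ref{Theorem:Bisection}, to which the paper's own proof of this lemma defers), observe that capping at $P_{\max,k}$ preserves the standard-interference-function property, and conclude convergence from the monotone-from-above initialization $\rho_k(0)=P_{\max,k}$. Your reading of the update as $\min(\tilde I_k(\pmb{\rho}(n-1)),P_{\max,k})$ rather than the $\max$ written in \eqref{eq:rhok} is the correct interpretation---it matches Step~2 of Algorithm~\ref{Algorithm2}---and your explicit monotone-decrease argument makes fixed-point existence under infeasible SINR targets self-contained, where the paper merely invokes ``a similar claim as in Theorem~\ref{Theorem:Bisection}.''
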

\begin{proof}
As $\tilde{I}_k (\pmb{\rho})$ is a standard interference function, $\hat{I}_k  = \max(\tilde{I}_k(\pmb{\rho}), P_{\max,k})$ is also a standard interference function. From the initial data powers, $\rho_k (0) = P_{\max,k}, \forall k,$  iteration~$n$ updates the data power of user~$k$ as
\begin{equation}
\rho_k (n) = \hat{I}_k( \pmb{\rho}(n-1)).
\end{equation}
If $\hat{I}_k( \pmb{\rho}(n-1)) = P_{\max,k}$, the data power of user~$k$ at iteration~$n$ is $\rho_k (n) = P_{\max,k}$, which still ensures the non-increasing property of the objective function in \eqref{eq:ObjectFunc}. Otherwise, it holds that $\rho_k (n) = \tilde{I}_k(\pmb{\rho} (n-1))$. The convergence is guaranteed by utilizing a similar claim as in Theorem~\ref{Theorem:Bisection}. The proof is complete.
\end{proof}
The analytical result in Lemma~\ref{lemma:Alg2} can be exploited to find a fixed-point solution to Problem~\eqref{Problem:TotalV1} that is implemented in Algorithm~\ref{Algorithm2}. It can work for both feasible and infeasible domains. Once congestion appears, we can detect unsatisfied users by computing the standard interference function and then comparing it to the maximum power allocated to each data symbol. The updated policy \eqref{eq:rhok} indicates that unsatisfied users will be served at a lower throughput than requested. The maximum transmit data power is allocated to unsatisfied users. By counting  the dominant arithmetic operations, namely, the multiplications, the divisions, and the maximum of two numbers, the total computational complexity of Algorithm~\ref{Algorithm2} is on the order of 
$C_2  = \mathcal{O}( MK^2 U_2 + 4K^2 U_2 + 6KU_2 )$,
where $U_2$ is the number of iterations required by  Algorithm~\ref{Algorithm2} to reach the accuracy~$\epsilon$, which proves that the cost scales up with the number of APs and in a quadratic order of the number of users.
\begin{algorithm}[t]
	\caption{Data power allocation to problem~\eqref{Problem:TotalV1} by softly removing unsatisfied users} \label{Algorithm3}
	\textbf{Input}:  Define maximum powers $P_{\max,k}, \forall k$; Select initial values $\rho_{k}(0) = P_{\max,k}, \forall k$; Compute the total transmit power consumption $P_{\mathrm{tot}}(0) =  \sum_{k=1}^K \rho_{k}(0)$; Set initial value $n=1$ and tolerance $\epsilon$.
	\begin{itemize}
		\item[1.] User~$k$ computes the standard interference function $	\tilde{I}_{k} \left(\pmb{\rho} (n-1) \right)$ using \eqref{eq:Ikrhov1}.
		\item[2.] If $\tilde{I}_{k} \left(\mathbf{p} (n-1) \right) > P_{\max,k}$, compute $\mu_k$ by using \eqref{eq:muk} and update $\rho_{k}(n) = P_{\max,k}^2 / (\mu_k\tilde{I}_k(\pmb{\rho}(n-1)))$. Otherwise, update $ \rho_{k}(n) = 	\tilde{I}_{k} \left(\pmb{\rho} (n-1) \right) $.
		\item[3.] Repeat Steps $1,2$ with other users, then compute the ratio $\gamma (n) =$ $| P_{\mathrm{tot}}(n) - P_{\mathrm{tot}}(n-1) | /  P_{\mathrm{tot}}(n-1)$.
		\item[4.] If $\gamma (n) \leq \epsilon$ $\rightarrow$ Set $\rho_{k}^{\ast} = \rho_{k}(n),\forall k,$ and Stop. Otherwise, set $n= n+1$ and go to Step $1$.
	\end{itemize}
	\textbf{Output}: A fixed point $\rho_{k}^{\ast}$, $\forall k$. \vspace*{-0.0cm}
\end{algorithm}
\subsubsection{Softly Removing Unsatisfied Users} Each unsatisfied user reduces the data power rather than allocating the maximum power as done in Algorithm~\ref{Algorithm1}. The network can  minimize mutual interference, and therefore ameliorate the number of satisfied users. The idea of softly removing unsatisfied users is analytically characterized in Theorem~\ref{Theorem:SoftRemoval}.
\begin{theorem}\label{Theorem:SoftRemoval}
Commencing from the initial power value of $\rho_k (0) = P_{\max,k}, \forall k$, the data power of user~$k$ is updated at iteration~$n$ as
\begin{equation} \label{eq:rhokn}
\begin{split}
&\rho_k(n) = f_k(\pmb{\rho}(n-1))  \\
&= \begin{cases}
\tilde{I}_k(\pmb{\rho}(n-1)), & \mbox{if } \tilde{I}_k(\pmb{\rho}(n-1)) \leq P_{\max,k},\\
\frac{P_{\max,k}^2}{\mu_k\tilde{I}_k(\pmb{\rho}(n-1))} , & \mbox{if } \tilde{I}_k(\pmb{\rho}(n-1)) \geq P_{\max,k},
\end{cases}
\end{split}
\end{equation}
where $\mu_k \geq 1$ stands for the soft removal rate of unsatisfied user~$k$. The iterative approach converges to a fixed point solution in polynomial time.
\end{theorem}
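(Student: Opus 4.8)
\emph{The plan is to} recast the update \eqref{eq:rhokn} as a fixed-point iteration $\pmb{\rho}(n)=f(\pmb{\rho}(n-1))$ of the map $f=(f_1,\dots,f_K)$ and to bring it inside the interference-function fixed-point theory \cite{Yates1995a} already exploited in Theorem~\ref{Theorem:Bisection} and Lemma~\ref{lemma:Alg2}. The starting observation is that each $f_k$ is the composition $f_k(\pmb{\rho})=\psi_k(\tilde{I}_k(\pmb{\rho}))$ of the standard interference function $\tilde{I}_k$ of \eqref{eq:Ikrhov1} with the scalar ``cap-and-reflect'' nonlinearity $\psi_k(t)=t$ for $t\le P_{\max,k}$ and $\psi_k(t)=P_{\max,k}^2/(\mu_k t)$ for $t\ge P_{\max,k}$. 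Since $\tilde{I}_k$ is affine in $\pmb{\rho}$ with nonnegative coefficients (its numerator $\xi_k\mathsf{MI}_k(\pmb{\rho})$ is linear and $\mathsf{NO}_k$ is constant), I would first record the two structural facts that drive everything else: $\psi_k$ is bounded by $P_{\max,k}$, and $\mu_k\ge 1$ makes the reflected branch non-expansive in $t$.

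First I would prove box-invariance and positivity: because $0<\psi_k(t)\le P_{\max,k}$ for every $t>0$, the map $f$ sends the compact box $\prod_k[0,P_{\max,k}]$ into itself, so the iteration is well defined and its iterates stay in a compact set. Next I would verify the interference-function axioms for $f$. Positivity is immediate; on the ``satisfied'' branch $f_k\equiv\tilde{I}_k$ inherits monotonicity and scalability directly from \eqref{eq:Ikrhov1}, exactly as in Lemma~\ref{lemma:Alg2}. The remaining work, and the source of convergence, is a contraction estimate: since $\psi_k$ is $1$-Lipschitz on each piece (the factor $\mu_k\ge1$ forcing the slope $|{-P_{\max,k}^2/(\mu_k t^2)}|\le 1$ on the reflected branch), one obtains $|f_k(\pmb{\rho})-f_k(\pmb{\rho}')|\le\sum_{k'}c_{kk'}|\rho_{k'}-\rho_{k'}'|$, where $c_{kk'}\ge0$ are the coupling coefficients of $\tilde{I}_k$. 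A weighted supremum norm built from the Perron eigenvector of the coupling matrix $[c_{kk'}]$ then turns this into a genuine contraction whenever the requested SINR targets $\{\xi_k\}$ keep its spectral radius below one, which is the feasibility regime; the Banach fixed-point theorem supplies existence and uniqueness of the limit and a geometric error decay, from which the polynomial iteration count follows.

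\emph{The hard part will be} the reflected branch. Because $\psi_k$ is \emph{decreasing} in $t$ for $t\ge P_{\max,k}$, the composite $f_k$ is not globally monotone and, when $\mu_k>1$, is even discontinuous across the switching surface $\tilde{I}_k(\pmb{\rho})=P_{\max,k}$; consequently the monotone Yates argument used verbatim in Theorem~\ref{Theorem:Bisection} no longer closes the proof. I expect the main obstacle to be replacing monotonicity by the contraction property above and controlling the dynamics near the switching surface, so as to rule out a limit cycle straddling it and to confirm that the soft-removal rule---which pulls an unsatisfied user's power down by the factor $1/\mu_k$ to lower the interference floor for the others---still drives the iterates to a single fixed point.
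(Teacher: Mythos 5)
Your diagnosis of the difficulty is exactly right---the reflected branch makes $f_k$ non-monotone and, for $\mu_k>1$, discontinuous across the switching surface $\tilde{I}_k(\pmb{\rho})=P_{\max,k}$---but the route you propose to get around it does not close. A Banach contraction in a weighted supremum norm requires $f$ to be Lipschitz with a contraction factor, and a map with a jump discontinuity cannot be globally Lipschitz: two power vectors arbitrarily close to each other but on opposite sides of the switching surface are mapped to points separated by roughly $P_{\max,k}(1-1/\mu_k)$, so the additive estimate $|f_k(\pmb{\rho})-f_k(\pmb{\rho}')|\le\sum_{k'}c_{kk'}|\rho_{k'}-\rho_{k'}'|$ fails precisely where you need it. Worse, your contraction condition (spectral radius of $[c_{kk'}]$ below one, i.e.\ the ``feasibility regime'') excludes exactly the congested scenarios the theorem is designed for: the soft-removal branch only activates when $\tilde{I}_k(\pmb{\rho})>P_{\max,k}$, i.e.\ when the SINR targets are \emph{not} jointly feasible, so conditioning convergence on feasibility defeats the purpose of the result. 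Finally, $\mu_k$ is itself state-dependent via \eqref{eq:muk}, which your Lipschitz bookkeeping does not account for.

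The missing idea is to abandon the additive/monotone framework entirely and work with \emph{two-sided scalable} functions (Definition~\ref{Def:TSSF}), which is what the paper does. The key observation is that the multiplicative sandwich $\frac{1}{\alpha}\tilde{I}_k(\pmb{\rho}) < \tilde{I}_k(\tilde{\pmb{\rho}}) < \alpha\tilde{I}_k(\pmb{\rho})$ (which any standard interference function satisfies, by combining its monotonicity and scalability in both directions) is \emph{preserved under taking reciprocals and multiplying by positive constants}. Hence $P_{\max,k}^2/(\mu_k\tilde{I}_k(\pmb{\rho}))$ is two-sided scalable even though it is decreasing in $\pmb{\rho}$, and so is the piecewise map $f_k$ on each branch. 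Convergence then follows not from a norm contraction but from the contraction of the multiplicative distance $d(\pmb{\rho},\tilde{\pmb{\rho}})=\max_k\max(\rho_k/\tilde{\rho}_k,\tilde{\rho}_k/\rho_k)$ under two-sided scalable maps (\cite[Lemma~7]{sung2005generalized}), applied to the distance between the iterates and the fixed point $\pmb{\rho}^\ast$. This machinery is insensitive to the loss of monotonicity on the reflected branch and does not require any spectral-radius or feasibility hypothesis, which is why the theorem can assert convergence even under congestion. Your positivity and box-invariance observations remain valid and are consistent with the paper's setup, but without the two-sided scalability argument the proof does not go through.
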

\begin{proof}
The proof relies on verifying the two-sided scalable function defined for each user and following by the convergence. The detailed proof is available in Appendix~\ref{Appendix:SoftRemoval}.
\end{proof}
At the beginning, all the users will be treated equally and update their data powers by using the standard interference function in \eqref{eq:Ikrhov1}, i.e., $\rho_k(n) =  \tilde{I}_k(\pmb{\rho}(n-1))$. The treatment should be changed if the standard interference function exceeds the limited power. If user~$k$ is found to be unsatisfied in iteration~$n$, its data power will be scaled down as shown in \eqref{eq:rhokn}. The controllable value $\mu_k$ stands for the soft removal rate that is computed  by utilizing, for example, the actual and requested SINR values, which can be defined as
\vspace*{-0.1cm}
\begin{equation} \label{eq:muk}
\mu_k = \xi_k / \mathrm{SINR}_k (\hat{\pmb{\rho}}(n-1)),
\vspace*{-0.1cm}
\end{equation}
where the SINR value of user~$k$ is defined in \eqref{eq:SINRSatelliteOnly} with the data power vector $\hat{\pmb{\rho}}(n-1)$. From \eqref{eq:muk}, the data power of unsatisfied users will be dramatically degraded if the offered  throughput is much lower than their requirements. The total complexity of this algorithm design is dominated by computing the standard interference function and the soft removal rate. Consequently, Algorithm~\ref{Algorithm3}  has the computational complexity order of 
$C_3 = \mathcal{O}(MK^2(Z+1)U_3 + 4K^2(Z+1) U_3 + K(7Z +6)U_3 )$,
where $Z$ is the total number of soft removal calculations, and $U_2$ is the number of iterations required by Algorithm~\ref{Algorithm2} to reach the accuracy~$\epsilon$. The two algorithms handling congestion control have a  similar complexity, if the soft removal rate is set to one for all the unsatisfied users. However, if \eqref{eq:muk} is exploited, the latter is more complex than the former, because a progressive policy is applied for scaling down the data powers of unsatisfied users.\footnote{The optimal solution is to demonstrate the fairness of the satellite-terrestrial networks and the minimum power consumption for a given set of individual data throughput demands thereby providing a benchmark by which more practical, near-optimal solutions can be compared.  Moreover, to the best of our knowledge, this is the first work in the satellite-terrestrial literature that has considered the associated congestion issue.  Even though the optimality and the convergence analysis can be applied to networks having an arbitrary number of users, its practical implementation becomes challenging upon increasing the network dimensions.  A potential direction to overcome this challenge is that exploiting a data-driven approach for reducing the computational complexity  by orders of magnitude \cite{van2020power}, which will be explored in our future work.}
\begin{remark}
This paper considers fast fading space-terrestrial channels where the ergodic data throughput is of particular interest and it is computed by averaging over many different realizations of the small-scale fading coefficients.
The two optimization problems considered allocate the data powers to all users in the network based on the channel statistics that are stable for a long period of time. Similar optimization problems have been considered in the space-terrestrial communications scenarios of \cite{bui2021robust,du2018secure}, but on a short-time scale and the systems communicate over slow fading channels and assuming perfect CSI. The data powers are optimized for a specific instantaneous channel rate, so the solution must be updated to adapt to the envelope changes, whenever the small-scale fading coefficients fluctuate.
\end{remark}
\begin{remark}
The congestion issues over long time scales have not yet been studied in satellite-terrestrial communication systems. Hence, this paper investigates two different metrics defined for quantifying the  satisfaction both of the individual users and of the entire network, obtained by solving problem~\eqref{Problem:TotalV1}. For user-specific demand satisfaction, the satisfied user set $\mathcal{K}_s \subseteq \mathcal{K} = \{1,\ldots K \}$ is defined as
\begin{equation} \label{eq:Ks}
\mathcal{K}_s = \left\{ k \big| R_{k}(\{\rho_k^\ast \}) = B \left(1- K/\tau_c \right) \log_2( 1+ \xi_k), k \in \mathcal{K} \right\},
\vspace*{-0.1cm}
\end{equation}
where $\{\rho_k^{\ast}\}$ is the optimized data power obtained by Algorithm~\ref{Algorithm2} or \ref{Algorithm3}. Furthermore, the unsatisfied user set, denoted by $\mathcal{K}_u \subseteq \mathcal{K}$ is defined as
\begin{equation} \label{eq:Ku}
\mathcal{K}_u =   \left\{ k \big| R_{k}(\{\rho_k^\ast \}) < B\left(1- K/\tau_c \right) \log_2( 1+ \xi_k), k \in \mathcal{K} \right\}.
\end{equation}
For quantifying the entire network's demand satisfaction, Jain's fairness index \cite{bui2021robust, jain1984quantitative} is adopted for our framework:
\begin{equation} \label{eq:JainFairnessIndex}
J = \frac{\left( |\mathcal{K}_s|  + \sum\nolimits_{k \in \tilde{\mathcal{K}}_u} R_k (\{ \rho_k^\ast \}) /\hat{\xi}_k \right)^2}{K |\mathcal{K}_s| +  K \sum_{k \in \tilde{\mathcal{K}}_u}  R_k (\{ \rho_k^\ast \})^2/ \hat{\xi}_k^2}.
\end{equation} 
Jain's fairness index spans from the worst case to the best case in the range [$1/K$, $1$]. 
\end{remark}
\begin{figure}[t]
	\centering
	\includegraphics[trim=2.4cm 8.5cm 7.5cm 8.2cm, clip=true, width=2.4in]{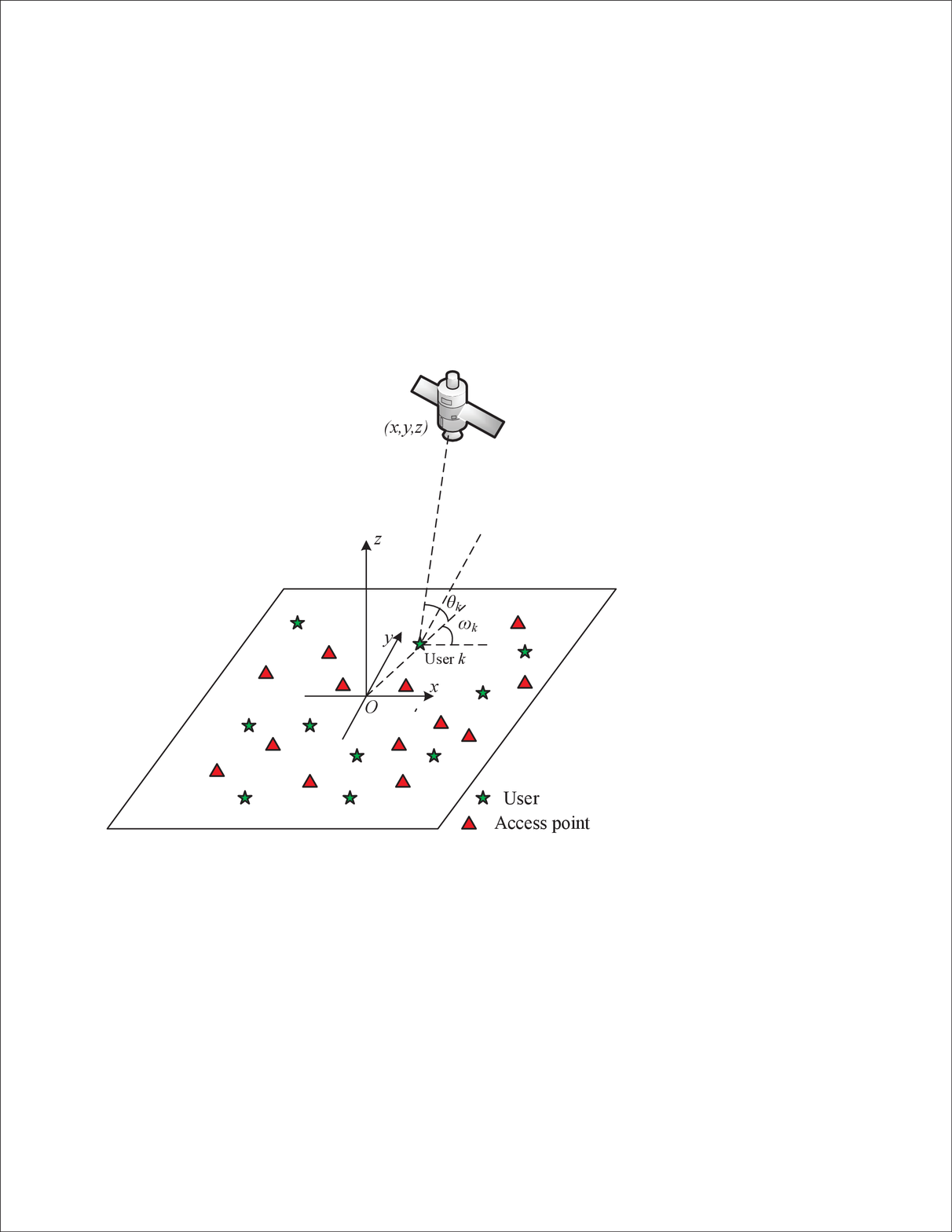} \vspace*{-0.2cm}
	\caption{The satellite-terrestrial  system considered in the simulations, where the locations of satellite, APs, and users are mapped into a three-dimensional (3D) Cartesian coordinate.}
	\label{FigTopology}
	\vspace*{-0.2cm}
\end{figure}
\section{Numerical Results} \label{Sec:NumericalResults}
In order to evaluate the terrestrial links under near-far effects, similar to \cite{ngo2017cell, Chien2021TWC} and reference herein, we study the system performance of a network's deployment in a rural area having $40$ APs and $20$ users uniform randomly distributed in a square area of $20$~km$^2$, mapped into a Cartesian coordinate system $(x,y,z)$ as shown in Fig.~\ref{FigTopology}.  A LEO satellite is equipped with $N=100$, ($N_H = N_V = 10$), antennas and it is located at the position $(300, 300, 400)$~km at our instant of investigation. The antenna
gain at the terrestrial devices is $10.0$~[dBi] and it is $26.9$~[dBi] at the satellite  \cite{bisognin2015millimeter}. The system bandwidth is $B=100$~MHz and the carrier frequency is $f_c = 20$~GHz. The coherence block is $\tau_c = 10000$ OFDM subcarriers. The transmit power assigned to each data symbol is $20$~dBW \cite{abdelsadek2021future}. The noise figure at the APs and satellite are $7$~dB and $1.2$~dB, respectively.  The large-scale fading coefficient between user~$k$ and BS~$m$ is  suggested by the 3GPP model (Release 14) \cite{LTE2017a}, e.g., for a rural area as
\begin{equation}
\beta_{mk} =  G_m + G_k - 8.50 -  20\log_{10}(f_c) -  38.63 \log_{10} (d_{mk})  + \zeta_{mk},
\end{equation}
where $G_m$ and $G_k$ are the antenna gains at AP~$m$ and user~$k$, respectively. the distance between this user and AP~$m$ is denoted as $d_{mk}$ and $\zeta_{mk}$ denotes the shadow fading that follows a log-normal distribution with standard derivation of shadow fading $8$~dB. Meanwhile, the large-scale fading coefficient between user~$k$ and the satellite is defined by using one of the models suggested in \cite{3gpp2019study} as
\begin{equation} \label{eq:betak}
	\beta_{k} = G + G_k + \tilde{G}_k -32.45 - 20 \log_{10} (f_c) - 20 \log_{10} (d_k) + \zeta_k,
\end{equation}
where $G$ is the RA gain at the satellite and its normalized beam pattern is
\begin{equation}
\tilde{G}_k = \begin{cases}
4 \left|J_1 \left( \frac{2\pi}{\lambda} \alpha \sin(\phi_k) \right)/\left(\frac{2\pi}{\lambda} \alpha \sin(\phi_k) \right) \right|^2, & \mbox{if } 0 \leq \phi_k \leq \frac{\pi}{2} ,\\
1, & \mbox{if } \phi_k = 0,
\end{cases}
\end{equation}
where $\beta_k$ denotes the radius of the antenna's circular aperture; $\lambda$ is the wavelength; and $\phi_k$ is the angle between user~$k$ and its beam center. In \eqref{eq:betak}, the shadow fading $\zeta_k$ is obtained from a log-normal distribution with its standard deviation depending on the carrier frequency, channel condition, and the elevation angle \cite{3gpp2019study}. The variable $d_k$~[m] represents the distance between the satellite and user~$k$, defined as
\begin{equation}
	d_k = \sqrt{R_E^2 \sin^2 (\theta_k) + z_0^2 + 2 z_0 R_E} - R_E \sin (\theta_k),
\end{equation}
where $R_E$ is the Earth's radius and $z_0$ is the satellite altitude.  All the numerical results are obtained by a personal Dell Precision $3550$ laptop, $32$~Gb RAM and the CPU Intel(R) Xeon(R) W-$10855$M CPU @ $2.80$~GHz. We consider $1000$ different time slots, each consisting of $20$ users uniformly located in the coverage area. Consequently, there are $20000$ different realizations of the user's locations and shadow fading  in evaluating the system performance.
\begin{figure*}[t]
	\begin{minipage}{0.48\textwidth}
		\centering
		\includegraphics[trim=0.8cm 0.0cm 1.1cm 0.6cm, clip=true, width=3.2in]{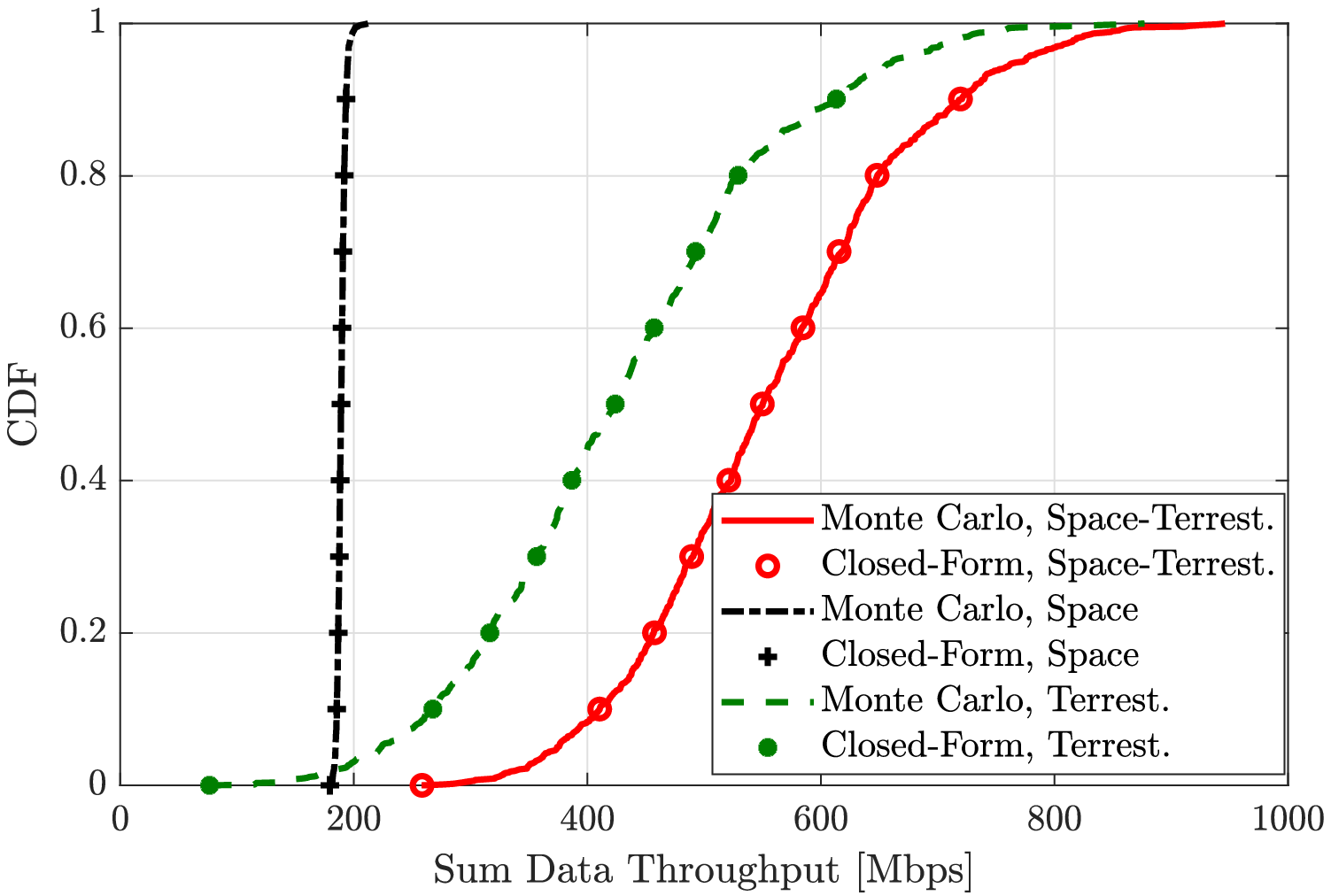} \vspace*{-0.2cm}
		\caption{CDF of the sum ergodic data throughput [Mbps] using Monte Carlo simulations and the analytical frameworks with the MRC technqiue.} \label{Fig:MCCFSUM}
		\vspace*{-0.2cm}
	\end{minipage}
	\hfil
	\begin{minipage}{0.48\textwidth}
		\centering
		\includegraphics[trim=0.8cm 0.0cm 1.4cm 0.6cm, clip=true, width=3.2in]{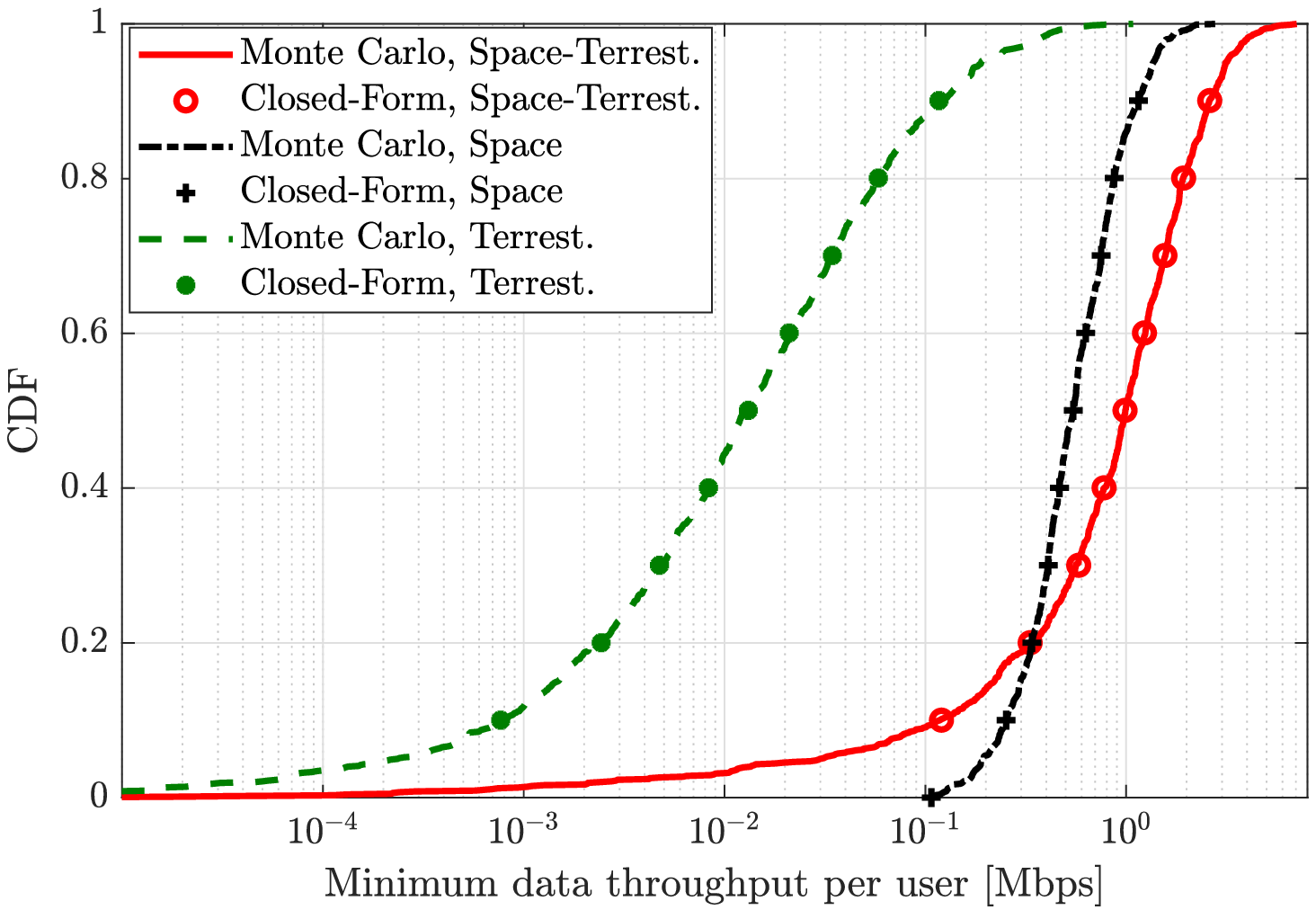} \vspace*{-0.2cm}
		\caption{CDF of the minimum data throughput per user [Mbps] using Monte Carlo simulations vs the analytical frameworks with the MRC technqiue.} \label{Fig:MCCFMINMCCF}
		\vspace*{-0.2cm}
	\end{minipage}
\end{figure*}

\begin{figure*}[t]
	\begin{minipage}{0.48\textwidth}
		\centering
		\includegraphics[trim=0.8cm 0.0cm 1.4cm 0.6cm, clip=true, width=3.2in]{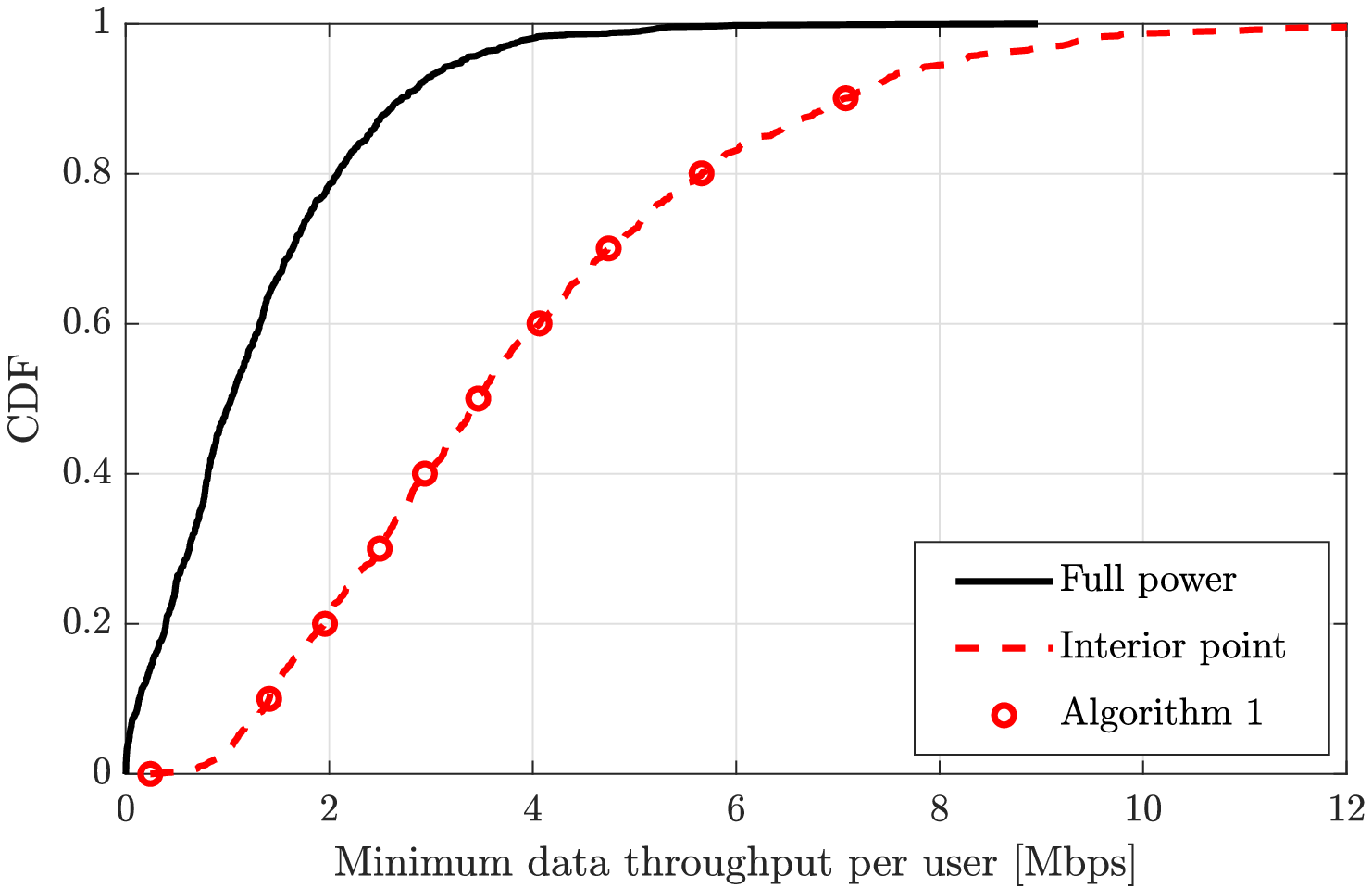} \vspace*{-0.2cm}
		\caption{CDF of the minimum data throughput per user [Mbps] for the space-terrestrial communication system with the MRC technique. } \label{Fig:MCCFMIN}
		\vspace*{-0.2cm}
	\end{minipage}
	\hfil
	\begin{minipage}{0.48\textwidth}
		\centering
		\includegraphics[trim=0.8cm 0cm 1.4cm 0.6cm, clip=true, width=3.2in]{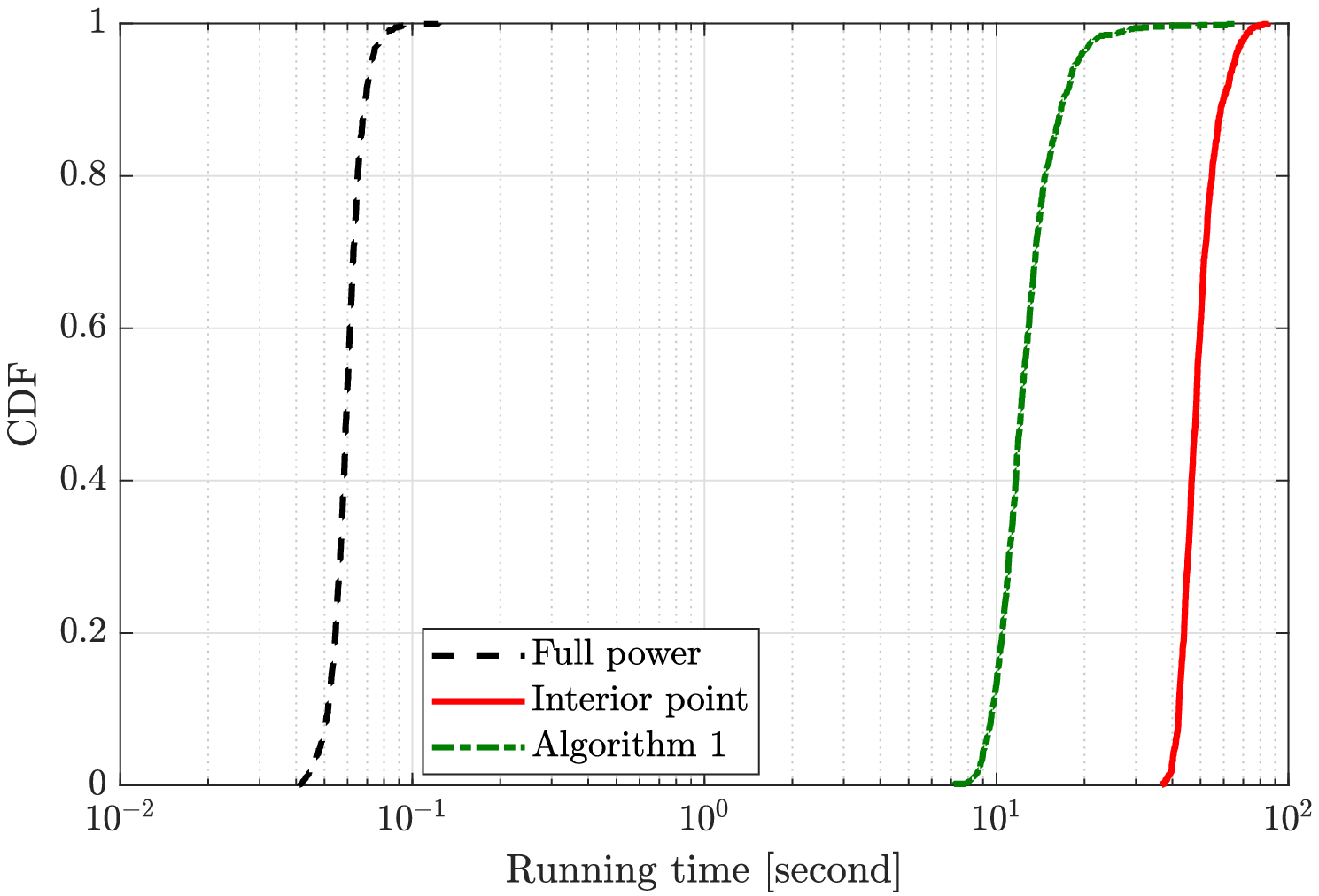}  \vspace*{-0.2cm}
		\caption{CDF of the running time to obtain the max-min fairness solution for the space-terrestrial communication system with MRC technique. }  \label{Fig:MCCFTIME}
		\vspace*{-0.2cm}
	\end{minipage}
\end{figure*}
\begin{figure*}[t]
	\begin{minipage}{0.48\textwidth}
		\centering
		\includegraphics[trim=0.8cm 0cm 1.4cm 0.6cm, clip=true, width=3.2in]{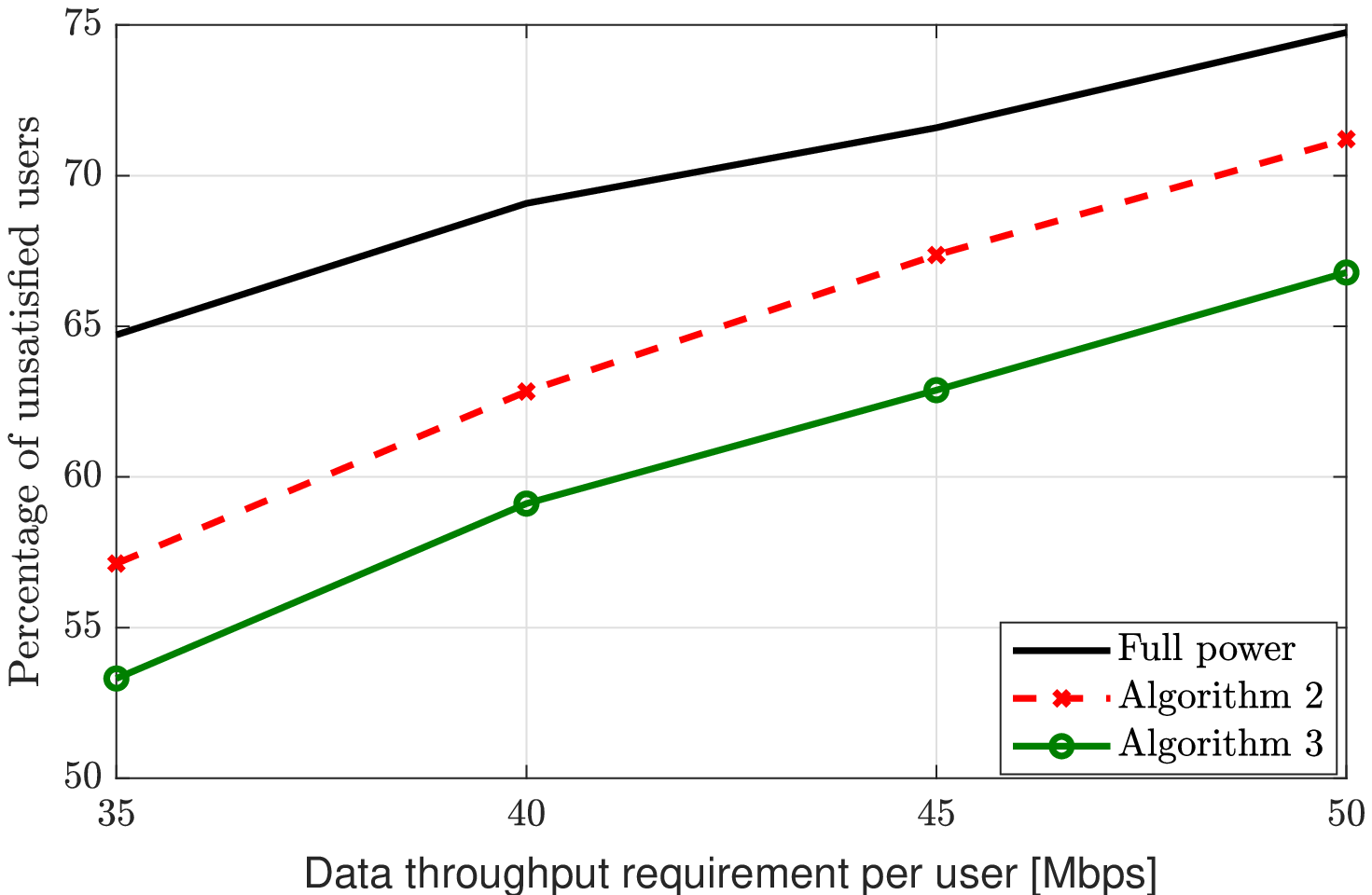}  \vspace*{-0.2cm}
		\caption{The percentage of unsatisfied users for the space-terrestrial communication system with the MRC technique.} \label{Fig:UnsatisfiedUsers}
		\vspace*{-0.2cm}
	\end{minipage}
	\hfil
	\begin{minipage}{0.48\textwidth}
		\centering
		\includegraphics[trim=0.8cm 0cm 1.4cm 0.6cm, clip=true, width=3.2in]{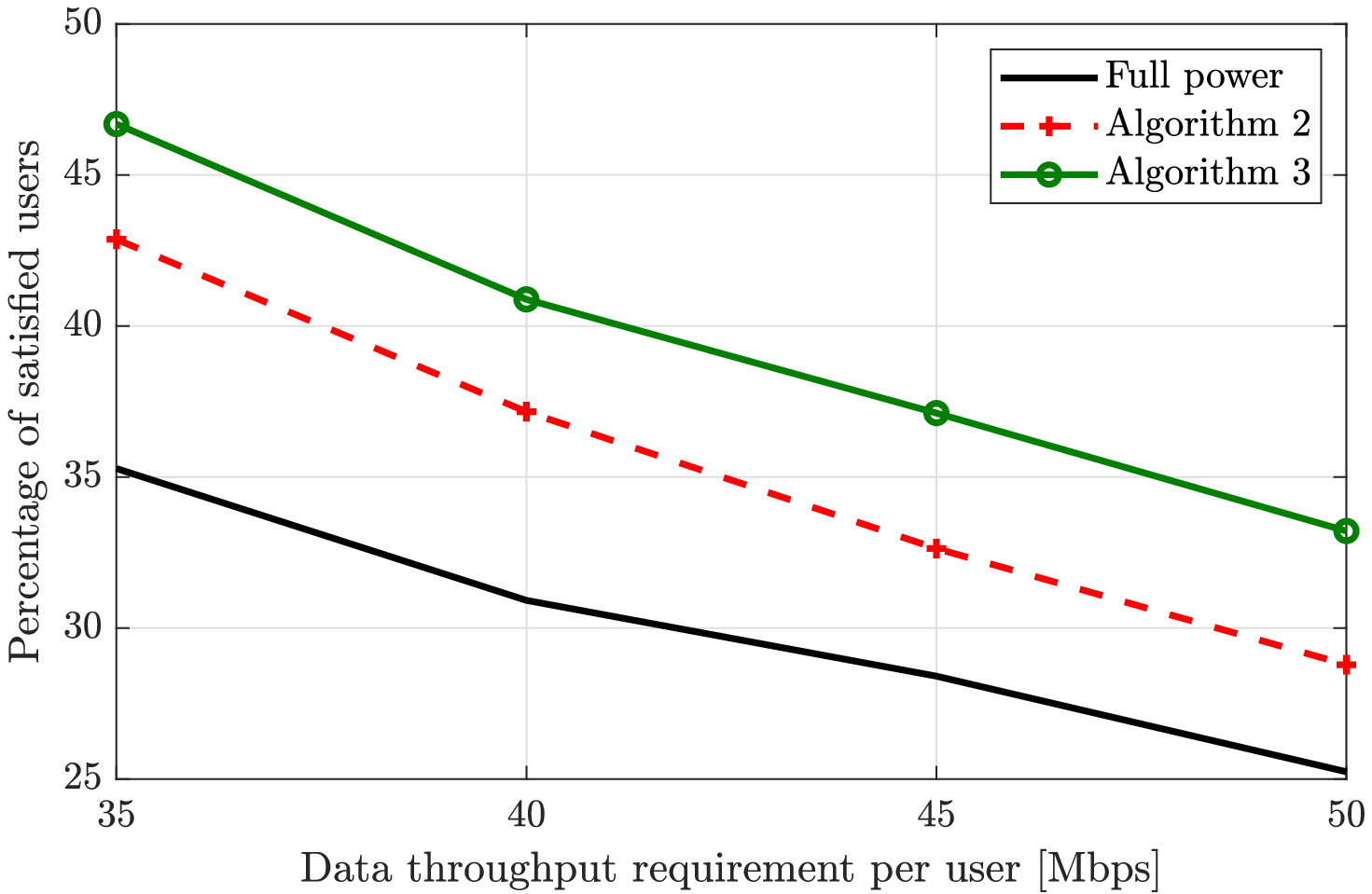} \vspace*{-0.2cm}
		\caption{The percentage of satisfied users for the space-terrestrial communication system with the MRC technique.} \label{Fig:SatisfiedUsers}
		\vspace*{-0.2cm}
	\end{minipage}
\end{figure*}
\begin{figure*}[t]
	\begin{minipage}{0.48\textwidth}
		\centering
		\includegraphics[trim=0.7cm 0cm 1.4cm 0.6cm, clip=true, width=3.2in]{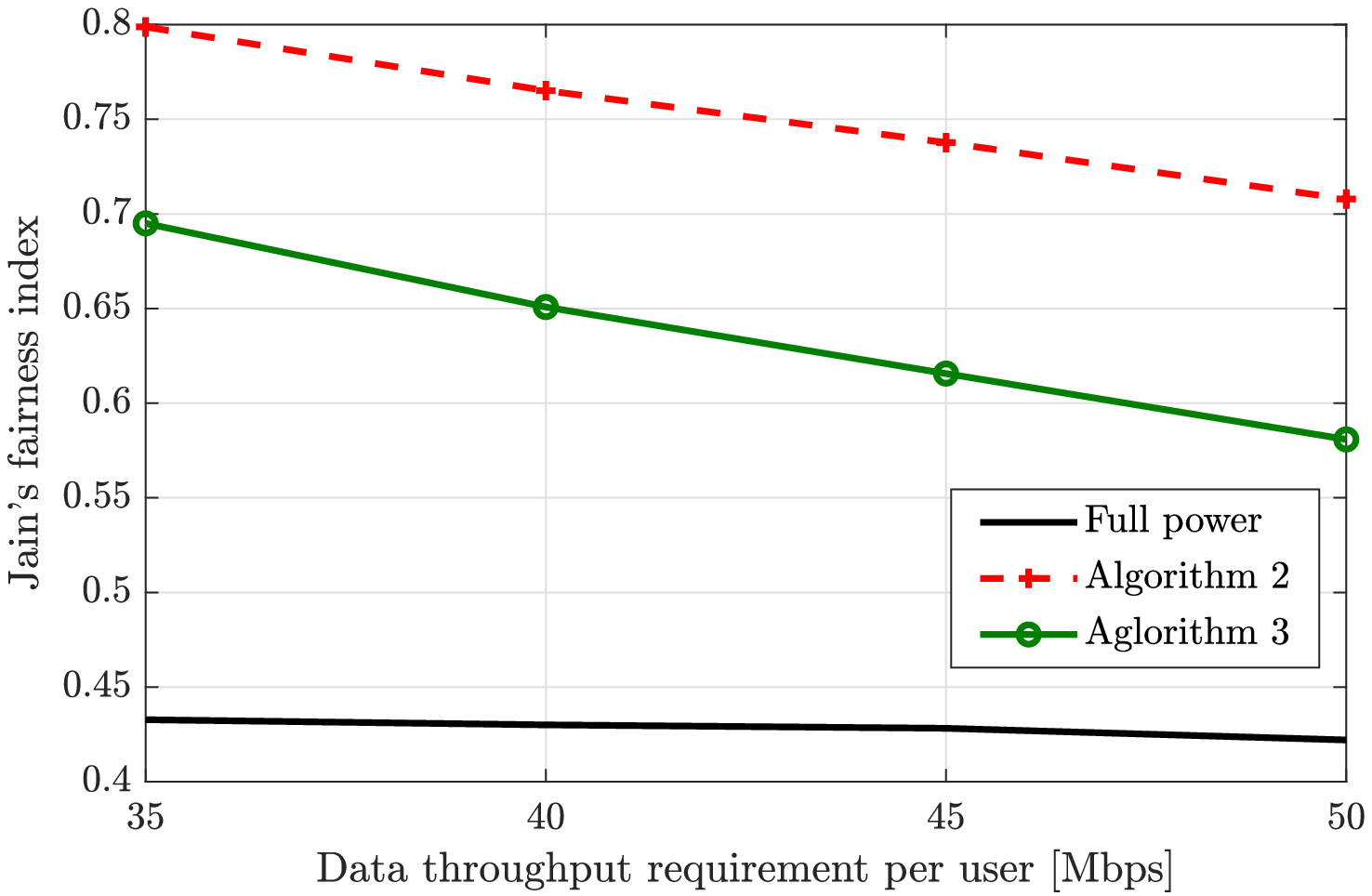} \vspace*{-0.2cm}
		\caption{The Jain's fairness index for the space-terrestrial communication system with the MRC technique.} \label{Fig:Jain}
		\vspace*{-0.2cm}
	\end{minipage}
	\hfil
	\begin{minipage}{0.48\textwidth}
		\centering
		\includegraphics[trim=0.8cm 0cm 1.4cm 0.6cm, clip=true, width=3.2in]{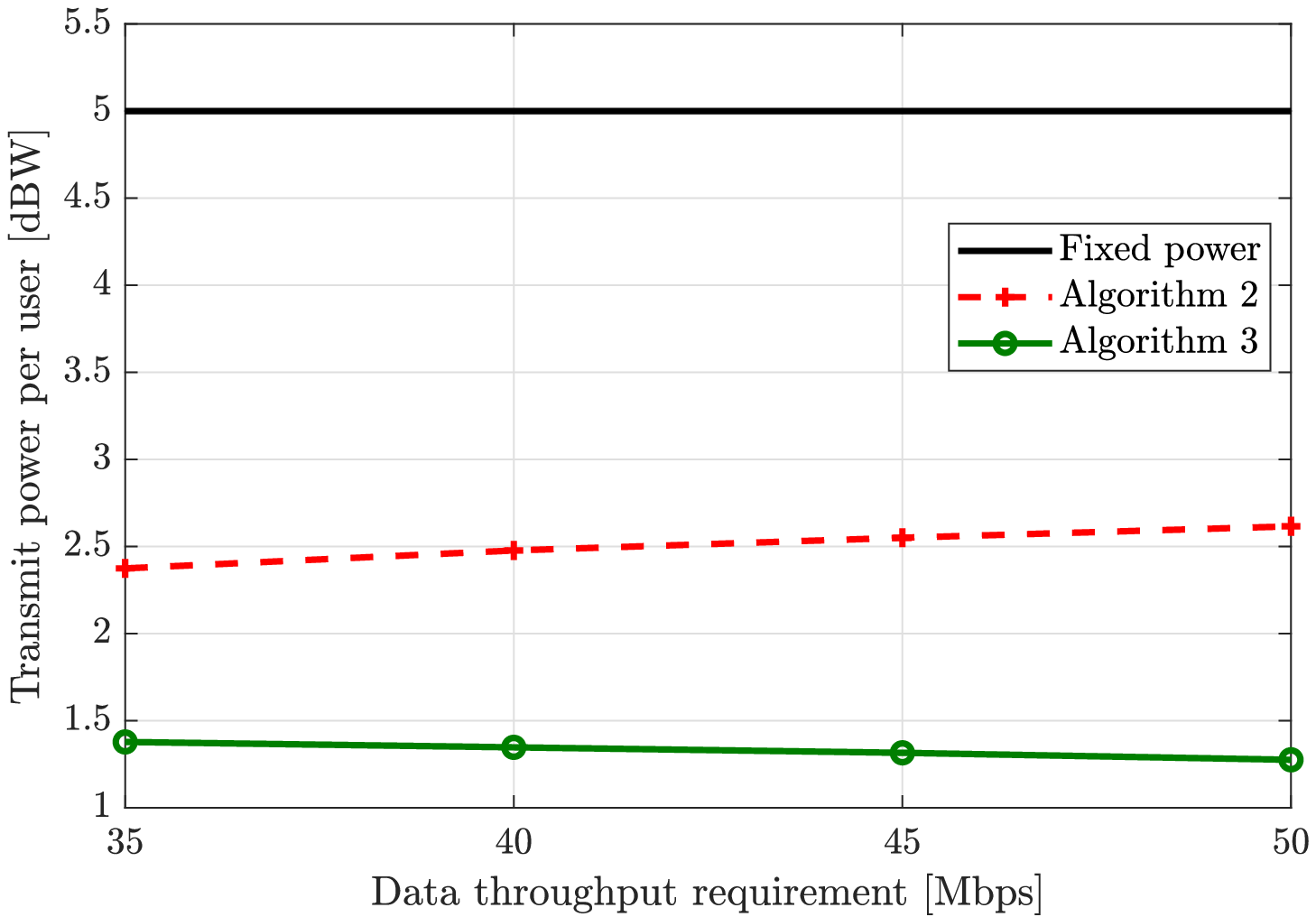} \vspace*{-0.2cm}
		\caption{The transmit power per user [dBW] for the space-terrestrial communication system with the MRC technique. } \label{Fig:TransmitPower}
		\vspace*{-0.2cm}
	\end{minipage}
\end{figure*}
\begin{figure*}[t]
	\begin{minipage}{0.48\textwidth}
	\centering
	\includegraphics[trim=0.6cm 0cm 1.0cm 0.6cm, clip=true, width=3.2in]{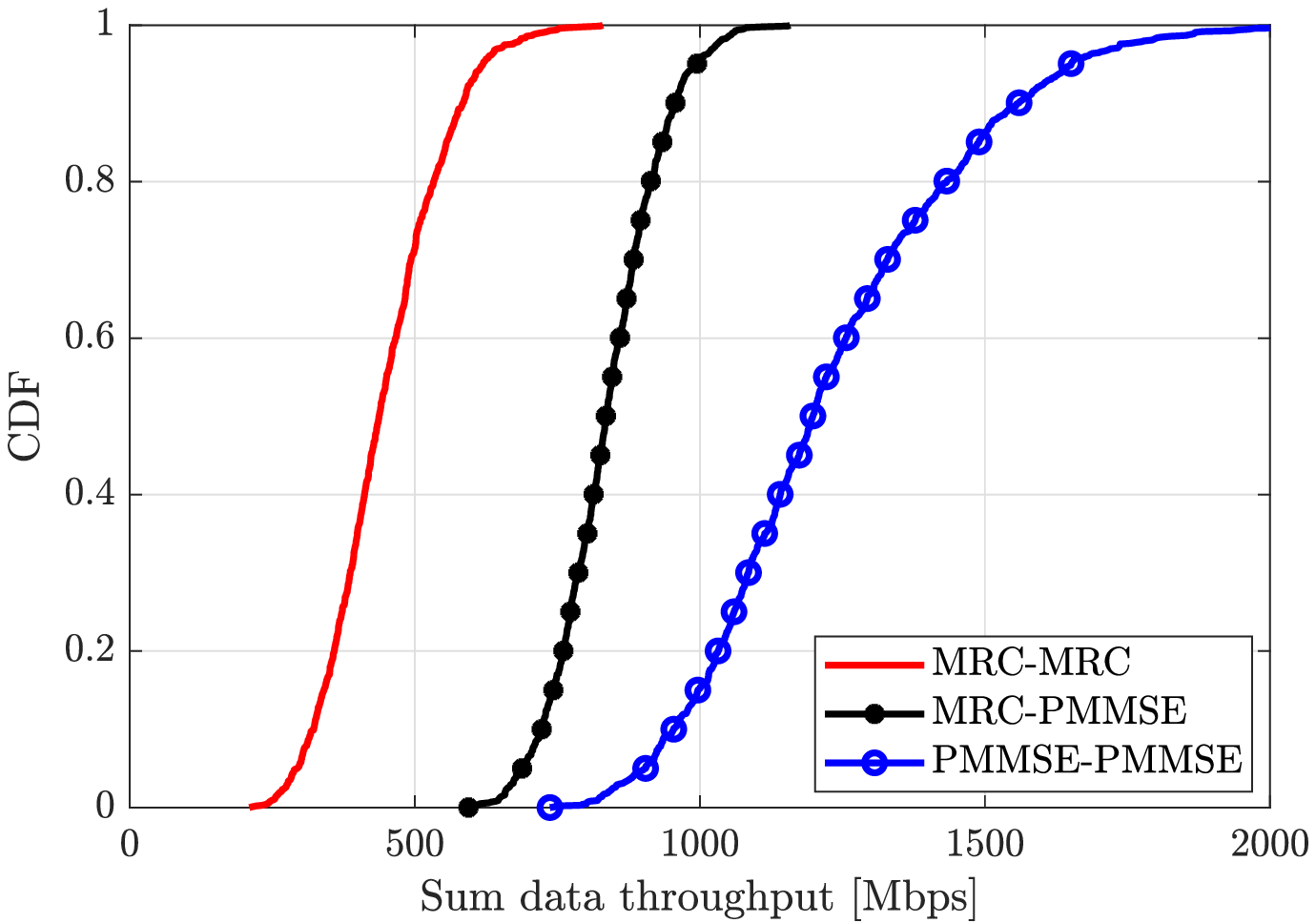} \vspace*{-0.2cm}
	\caption{The CDF of the sum data throughput per user [Mbps] with different combining methods.} \label{Fig:DiffCom}
	\vspace*{-0.2cm}
	\end{minipage}
\hfil
    \begin{minipage}{0.48\textwidth}
    	\centering
    	\includegraphics[trim=0.6cm 0cm 1.0cm 0.6cm, clip=true, width=3.2in]{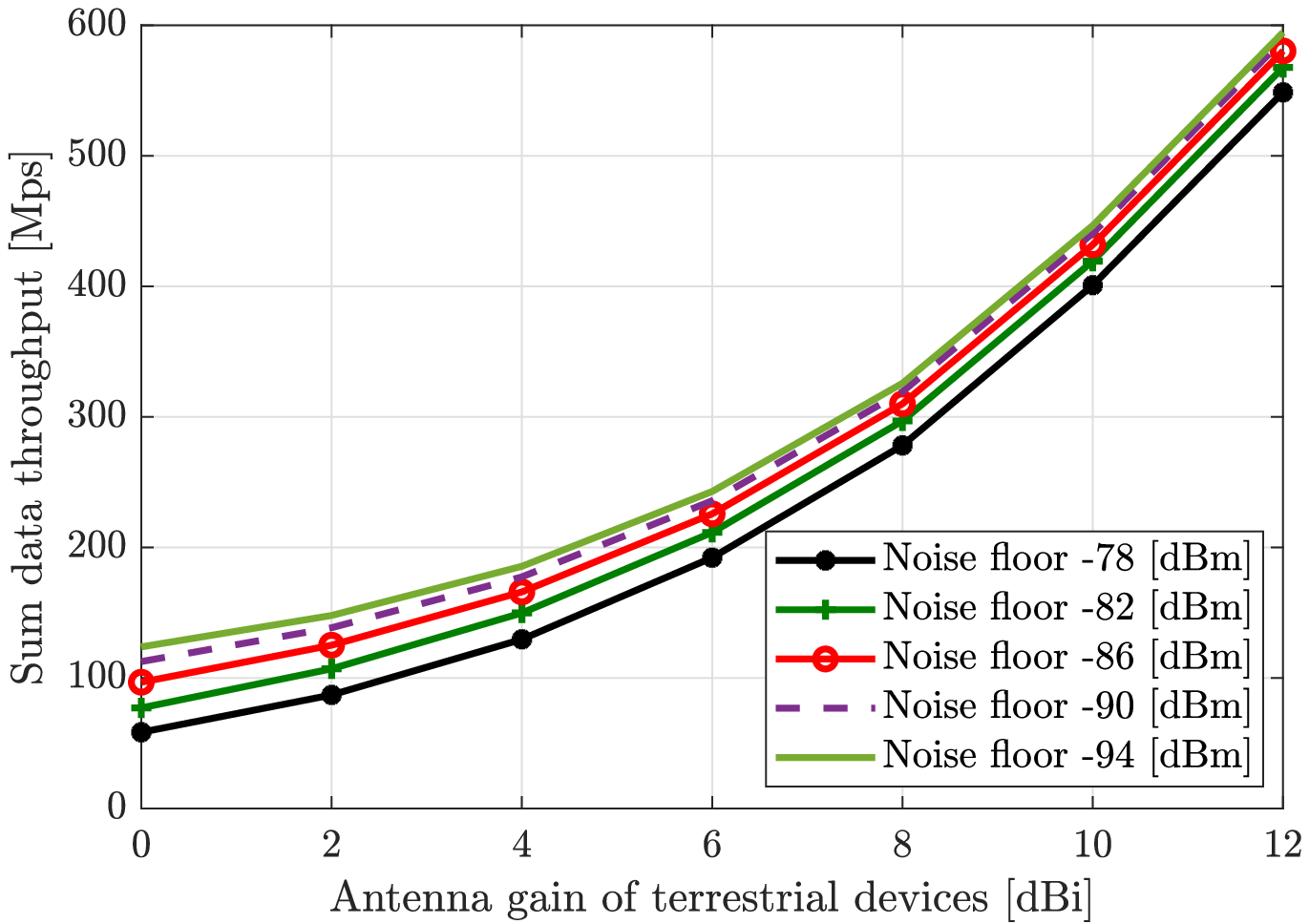} \vspace*{-0.2cm}
    	\caption{The sum data throughput [Mbps] versus the antenna gain of terrestrial devices [dB].} \label{Fig:DiffNoiseFloor}
    	\vspace*{-0.2cm}
    \end{minipage}
\end{figure*}

\subsection{Analytical Results versus Monte-Carlo Simulations}
In Fig.~\ref{Fig:MCCFSUM}, we compare the cumulative distribution function (CDF) of the sum data throughput, defined as $\sum_{k=1}^K R_k$, between Monte Carlo simulations and the proposed analytical framework for the three different systems: $i)$ \textit{the  space-terrestrial communication system} represented by the SINR expression in \eqref{eq:SINRk} with the overall channel coefficient $z_{kk'}$ in \eqref{eq:zkkprime} and the analytical framework in \eqref{eq:ClosedSINR}; $ii)$ \textit{the stand-alone terrestrial communication system} represented by the SINR expression in \eqref{eq:SINRk} with $z_{kk'}=   \sum\nolimits_{m=1}^M  u_{mk}^\ast  g_{mk'} $ and the analytical framework in \eqref{eq:SINRTerrest}; and $iii)$ \textit{the stand-alone space communication system} represented by the SINR expression in \eqref{eq:SINRk} with $z_{kk'}=  \mathbf{u}_k^H \mathbf{g}_{k'} $ and the analytical framework in \eqref{eq:SINRSatelliteOnly}. The numerical simulations and the analytical results match very well for the different systems that validates our analysis. A network only relying on the satellite offers about $189$~[Mbps] throughput on average, which is quite stable over different user locations and shadow fading coefficients. The terrestrial system offers $2.3$ times better the sum throughput than the baseline. Jointly processing the received signals, the space-terrestrial communication system supports superior improvements of $30\%$ in throughput over utilizing the APs only.

In Fig.~\ref{Fig:MCCFMINMCCF}, we show the CDF of the minimum throughput of the system, which is defined as $R_k^{\min} = \underset{k}{\min} \, R_k$ for the three systems considered as in line with Fig.~\ref{Fig:MCCFSUM}. The terrestrial communication system is the baseline for the minimum  throughput, which only provides $0.05$~[Mbps] per user on average. The satellite communication system yields a minimum data throughput of about $0.63$~[Mbps] per user, which is $14$ times higher than the baseline. Superior gains up to a factor of $28.8$ times better than only using the APs are obtained by integrating the satellite into a terrestrial network. At $95\%$-likelihood, the space communication system offers the best minimum data throughput of about $0.21$~[Mbps], which is $6.5$ times and $886$ times better than the space-terrestrial and  terrestrial communication system, respectively.

\subsection{Power Control versus Fixed Power}
In Fig.~\ref{Fig:MCCFMIN}, we plot the CDF of the data throughput of the different power allocation strategies. The full data power transmission used as a popular benchmark \cite{9460776} produces the lowest average max-min fairness level, which is $1.3$~[Mbps]. The two remaining algorithms generate the same solution that is $3\times$ better than the full data power transmission on average. This observation demonstrates significant enhancements of the max-min fairness power allocation for the users having low channel quality. Indeed, the interior point methods have been widely applied for solving the max-min fairness optimization problem \cite{6519406}. The associated running time required to obtain the max-min fairness solution is given in Fig.~\ref{Fig:MCCFTIME}. The running time is only $0.06$~[s] dedicated to estimating the data throughput if the system allows each user transmit at full power per data symbol. Carefully optimizing the data powers requires extra computational complexity for the iterative processes. The interior-point methods spend $50$~[s] to obtain the solution of  Problem~\eqref{Problem:MaxMinQoS}, while Algorithm~\ref{Algorithm1} only needs $13$~[s] corresponding to a rich time reduction factor of $3.8 \times$. The result verifies the benefits of the proposed power update based on the standard interference function in solving the max-min fairness optimization problem. 

In Fig.~\ref{Fig:UnsatisfiedUsers}, we portray the percentage of user locations and shadow fading coefficients yielding a  throughput lower than requested (unsatisfied users).\footnote{We only use the full power consumption scenario as our  benchmark for comparison, since this is the first time in the satellite-terrestrial literature that congestion control has been taken into account. More explicitly, other algorithmic designs such as \cite{9460776,van2021user} are not included for comparison, since they were developed for non-empty feasible sets that satisfy Slater's condition and could not handle the congestion issue. These algorithms always give infeasible solutions under the data throughput requirements considered.} All the benchmarks show an increasing trend of unsatisfied users, as the data throughput requirement increases. The full power allocation yields the highest percentage of unsatisfied users that varies from about $65\%$ to approximately $75\%$, when the users change the requested data throughput from $35$~[Mbps] to $50$~[Mbps]. The two proposed algorithms smoothly handle the congestion control.  Algorithm~\ref{Algorithm2} allows each unsatisfied user to transmit at full power that may inflict severe mutual interference upon the remaining users. Therefore, the percentage of unsatisfied users ranges from $57\%$ to $71\%$, depending on the throughput requested. By carefully reducing the transmit power  of unsatisfied users and therefore managing the mutual interference efficiently, Algorithm~\ref{Algorithm3} has the lowest percentage of unsatisfied users, which is from $53\%$ to $67\%$. As a further result, Fig.~\ref{Fig:SatisfiedUsers} illustrates the percentage of satisfied users as a function of the data throughput requirements. By softly removing unsatisfied users, Algorithm~\ref{Algorithm3} offers the highest percentage of satisfied users, followed by Algorithm~\ref{Algorithm2} and the full power transmission.  

As for the network-wide fairness, Fig.~\ref{Fig:Jain} evaluates  Jain's fairness index as defined in \eqref{eq:JainFairnessIndex}. Without making use of the spatial diversity and channel statistics, the full power transmission gives the worst Jain fairness index. Interestingly, Algorithm~\ref{Algorithm2}  provides the highest Jain fairness index,  best supporting each unsatisfied user. Even though Algorithm~\ref{Algorithm3} helps to increase the number of satisfied users, this is at the cost of degrading the throughput of the unsatisfied users. Consequently, the network-wise fairness may not be the best, as reported in Fig.~\ref{Fig:Jain}.  We also show the data power per symbol in Fig.~\ref{Fig:TransmitPower}. Both the proposed algorithms consume significantly less power than the maximum power level. In particular, Algorithm~\ref{Algorithm2} reduces the power consumption up to $2.1\times$, and that of Algorithm~\ref{Algorithm3} is $3.9\times$. The results quantify the energy efficiency of these algorithms under congestion.  
 
\subsection{Other Observations}
The CDF of the sum data throughput [Mbps] is displayed in Fig.~\ref{Fig:DiffCom} by using either the partial MMSE (P-MMSE)  or the MRC receiver for detecting the desired signals at both the space and terrestrial links. The P-MMSE matrix $\mathbf{U}_s = [\mathbf{u}_{1s}, \ldots, \mathbf{u}_{Ks}] \in \mathbb{C}^{N \times K}$ of an MMSE receiver is configured for the space link as
	$	\mathbf{U}_s = \widehat{\mathbf{G}} \big( \widehat{\mathbf{G}}^H \widehat{\mathbf{G}} + K\sigma_s^2\mathbf{I}_K /P_{\max} \big)^{-1}$,
	where we have $\widehat{\mathbf{G}} = [\hat{\mathbf{g}}_1, \ldots, \hat{\mathbf{g}}_K] \in \mathbb{C}^{N \times K}$ and $P_{\max} = P_{\max,k}, \forall k$. By contrast, the P-MMSE matrix is formulated as $\mathbf{U}_p = \widehat{\mathbf{H}}\big( \widehat{\mathbf{H}}^{H} \widehat{\mathbf{H}} + K \sigma_s^2 \mathbf{I}_K /P_{\max} \big)^{-1} \in \mathbb{C}^{M \times K}$, where the $(m,k)$-th element of the matrix $\widehat{\mathbf{H}}$ is defined as $[\widehat{\mathbf{H}}]_{mk} = \hat{g}_{mk}, \forall m,k$. 
	Compared to the MRC receiver, significant improvements are attained by using the P-MMSE solution for detecting the received signals. Specifically, the improvement is by a factor of about $2.1$ on average, if the P-MMSE combiner is used by the CPU. Furthermore, if the CPU utilizes the P-MMSE combiner for both the space and terrestrial links, the sum data throughput improvement is by a factor of about $2.5$ on average. The results reveal the benefits of our linear receiver combiner designed for supporting the collaboration of the satellite and APs by exploiting the  associated channel estimates.

In Fig.~\ref{Fig:DiffNoiseFloor}, we plot the sum data throughput [Mbps] as a function of the antenna gain at each terrestrial device for different noise floors. The terrestrial devices are equipped with omnidirectional antennas having a gain of $0$~[dB]. Furthermore, we assume that the users are uniformly scattered throughout the coverage area. The higher antenna gains offer significantly better sum data throughput, but sophisticated beam-search techniques must be used to detect the radio beams \cite{9846951}. Besides, the different noise floors characterize the imperfect feeder link and imperfect synchronization between the satellite and terrestrial links.

\section{Conclusions} \label{Sec:Concl}
The throughput analysis and the data power control of a multi-user system were provided in the presence of an NGSO satellite and distributed APs for improving the macro-diversity gains attained. We assumed a centralized signal processing unit for boosting the data throughput per user for coherent data detection combining the received signals of the space and terrestrial links. The achievable data throughput expression derived can be applied to an arbitrary channel model and combining techniques. A closed-form expression was also derived for the MRC receiver technique and spatially correlated  channels with a rich scattering environment around users. 
The satellite boosts the sum throughput in the network by more than $30\%$ for the  parameter setting considered, while the minimum data throughput is enhanced by more than tenfold. Two different optimization problems were formulated to study the data power allocation on a long-term scale, where the solution is only updated whenever the channel statistics vary. Our Monte Carlo simulations demonstrate that many users can still access the network and attain satisfactory  throughput under congested conditions.
\appendix
\subsection{A Useful Lemma and Definitions}
This appendix presents the following lemma and definitions for our throughput analysis and optimization.
\begin{lemma}\cite[Lemma~4]{van2018large} \label{lemma:4moment}
If a random vector $\mathbf{x} \in \mathbb{C}^{N}$ is distributed as $\mathbf{x}  \sim \mathcal{CN}(\mathbf{0}, \mathbf{R})$ where $\mathbf{R} \in \mathbb{C}^{N \times N}$ denotes the covariance matrix, the following property holds for an arbitrary deterministic matrix $\mathbf{N} \in \mathbb{C}^{N \times N}$:
\begin{equation}
\mathbb{E}\{ |\mathbf{x}^H \mathbf{N} \mathbf{x}|^2 \} = |\mathrm{tr}( \mathbf{R} \mathbf{N} ) |^2 + \mathrm{tr}(\mathbf{R}\mathbf{N} \mathbf{R} \mathbf{N}^H).
\vspace*{-0.1cm}
\end{equation}
\end{lemma}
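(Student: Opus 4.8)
The plan is to reduce the computation to the case of a white standard complex Gaussian vector and then apply the fourth-moment (Wick/Isserlis) combinatorics. First I would whiten $\mathbf{x}$ by writing $\mathbf{x} = \mathbf{R}^{1/2}\mathbf{z}$ with $\mathbf{z} \sim \mathcal{CN}(\mathbf{0},\mathbf{I}_N)$, so that the quadratic form becomes $\mathbf{x}^H\mathbf{N}\mathbf{x} = \mathbf{z}^H\mathbf{M}\mathbf{z}$ with $\mathbf{M} = \mathbf{R}^{1/2}\mathbf{N}\mathbf{R}^{1/2}$. This isolates all the distributional content into the single matrix $\mathbf{M}$ and leaves me to evaluate $\mathbb{E}\{|\mathbf{z}^H\mathbf{M}\mathbf{z}|^2\}$ for a standard vector.

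Next I would expand the modulus squared entrywise as $\sum_{i,j,k,l} M_{ij}M_{kl}^\ast z_i^\ast z_j z_k z_l^\ast$ and push the expectation inside the finite sum. The key step is evaluating the mixed fourth moment $\mathbb{E}\{z_i^\ast z_j z_k z_l^\ast\}$. Because the entries are i.i.d.\ circularly symmetric complex Gaussians, only the pairings that match each conjugated factor with a non-conjugated one survive, yielding $\mathbb{E}\{z_i^\ast z_j z_k z_l^\ast\} = \delta_{ij}\delta_{kl} + \delta_{ik}\delta_{jl}$; the anomalous pairing that would couple two un-conjugated (or two conjugated) entries vanishes precisely because $\mathbb{E}\{z_a z_b\}=0$ under circular symmetry.

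Substituting these Kronecker deltas collapses the quadruple sum into two terms: the $\delta_{ij}\delta_{kl}$ contribution gives $(\sum_i M_{ii})(\sum_k M_{kk})^\ast = |\mathrm{tr}(\mathbf{M})|^2$, while the $\delta_{ik}\delta_{jl}$ contribution gives $\sum_{i,j}|M_{ij}|^2 = \mathrm{tr}(\mathbf{M}\mathbf{M}^H)$. Finally I would translate the result back to $\mathbf{R}$ and $\mathbf{N}$ using cyclic invariance of the trace, namely $\mathrm{tr}(\mathbf{M}) = \mathrm{tr}(\mathbf{R}\mathbf{N})$ and $\mathrm{tr}(\mathbf{M}\mathbf{M}^H) = \mathrm{tr}(\mathbf{R}\mathbf{N}\mathbf{R}\mathbf{N}^H)$, which delivers the claimed identity.

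The only genuinely delicate point, and the step most prone to error, is the fourth-moment evaluation: one must invoke \emph{circular symmetry} to discard the non-surviving pairings, since for a real Gaussian or a non-circular complex Gaussian an extra term would appear and the stated formula would fail. Everything else is routine bookkeeping with traces.
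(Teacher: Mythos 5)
Your proof is correct: the whitening $\mathbf{x}=\mathbf{R}^{1/2}\mathbf{z}$, the circular-symmetric fourth-moment identity $\mathbb{E}\{z_i^\ast z_j z_k z_l^\ast\}=\delta_{ij}\delta_{kl}+\delta_{ik}\delta_{jl}$, and the trace bookkeeping all check out, and you rightly flag circularity as the step that makes the anomalous pairings vanish. Note that the paper itself offers no proof of this lemma --- it is imported verbatim with a citation to \cite[Lemma~4]{van2018large} --- so there is nothing to compare against; your argument is the standard one and would serve as a self-contained justification.
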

\begin{definition} \cite[Definition~3]{van2021uplink}  \label{Def:SIF}
A function $I(\mathbf{x})$ is a standard interference function, if the following properties hold
\begin{itemize}
	\item[$i)$] Positivity: For all $\mathbf{x} \succeq \mathbf{0}$, $I(\mathbf{x}) > 0$.
	\item[$ii)$] Monotonicity: For two vectors $\mathbf{x}$ and $\hat{\mathbf{x}}$ satisfied $\hat{\mathbf{x}} \succeq \mathbf{x}$,  $I(\hat{\mathbf{x}}) \geq I(\mathbf{x})$.
	\item[$iii)$] Scalability: For all constant values $\alpha > 1$, $\alpha I(\mathbf{a}) > I (\alpha \mathbf{a})$. 
\end{itemize}
\end{definition}
\begin{definition} \cite[Section~IV]{sung2005generalized} \label{Def:TSSF}
For a given $\alpha > 1$ and a pair of vectors $\mathbf{x}$ and $\hat{\mathbf{x}}$ satisfied $\frac{1}{\alpha} \mathbf{x} \preceq \hat{\mathbf{x}} \preceq \alpha \mathbf{x}$, a function $f(\mathbf{x})$ is a two-sided scalable function if the following property holds
\begin{equation}
f(\mathbf{x})/\alpha < f(\hat{\mathbf{x}}) < \alpha f(\mathbf{x}).
\vspace*{-0.1cm}
\end{equation}
\end{definition}
\subsection{Proof of Theorem~\ref{Theorem:ClosedForm}} \label{Appendix:ClosedForm}
By utilizing the overall channel based on its definition with $k' = k$, the numerator of \eqref{eq:SINRk} is formulated as
\begin{equation} \label{eq:zkkv1}
\begin{split}
&|\mathbb{E} \{z_{kk}\}|^2 = \big|\|\bar{\mathbf{g}}_k\|^2 +  p K \mathrm{tr}(\pmb{\Theta}_k)  +  \sum\nolimits_{m=1}^M  \gamma_{mk} \big|^2,
\end{split}
\end{equation}
which is obtained for the channel distributions considered and the MRC technique. We denote the first term in the denominator of \eqref{eq:SINRk}, $\mathsf{D}_1 = \sum\nolimits_{k'=1}^K \rho_{k'}  \mathbb{E}\{ |z_{kk'}|^2 \}$, which is expressed as
\begin{equation} \label{eq:Denov1}
 \mathsf{D}_1 =  \rho_{k}  \mathbb{E}\{ |z_{kk}|^2 \}  +  \mathsf{D}_{2}, 
\end{equation}
where $ \mathsf{D}_{2} = \sum\nolimits_{k' = 1, k' \neq k}^K \rho_{k'}  \mathbb{E}\{ |z_{kk'}|^2$. $\mathbb{E}\{ | z_{kk} |^2 \}$ is tackled  as follows
\begin{equation} \label{eq:zkkGain}
\begin{split}
 &\mathbb{E} \{ |z_{kk}|^2 \} = \mathbb{E} \{ | a_{kk}  + \tilde{a}_{kk} + b_{kk} + \tilde{b}_{kk} |^2 \} \\
 & = \mathbb{E} \{ | a_{kk} |^2 \}  +  \mathbb{E} \{ | \tilde{a}_{kk} |^2 \} + \mathbb{E} \{ | b_{kk} |^2 \} + \mathbb{E} \{ | \tilde{b}_{kk} |^2 \} \\
 & \, \, \, + 2\mathbb{E}\{ a_{kk} b_{kk} \},
\end{split}
\end{equation}
where $a_{kk} =  \|\hat{\mathbf{g}}_{k}\|^2$, $\tilde{a}_{kk} = \hat{\mathbf{g}}_k^H \mathbf{e}_{k}$, $b_{kk} = \sum_{m=1}^M  |\hat{g}_{mk}|^2$, and $\tilde{b}_{kk} = \sum_{m=1}^M \hat{g}_{mk}^\ast e_{mk}$. In \eqref{eq:zkkGain}, the remaining expectations vanish due to the zero mean of the AWGN noise. The first expectation $\mathbb{E} \{ | a_{kk} |^2 \}$ in \eqref{eq:zkkGain} is given by
\begin{equation} \label{eq:hatgkgk}
	 \mathbb{E} \{ | a_{kk} |^2 \}
	 =  \big( \|\bar{\mathbf{g}}_{k}\|^2 + 2p K \mathrm{tr}(\pmb{\Theta}_k)\big)^2  + 2 p K \bar{\mathbf{g}}_k^H \pmb{\Theta}_k  \bar{\mathbf{g}}_k +  p^2 K^2 \mathrm{tr} (\pmb{\Theta}_k^2).
\end{equation}
We can derive the closed-form expression of the second expectation $\mathbb{E} \{ | \tilde{a}_{kk} |^2 \}$ in \eqref{eq:zkkGain} as follows
\begin{multline} \label{eq:ekhatgk}
		\mathbb{E} \{ | \tilde{a}_{kk} |^2 \} = \bar{\mathbf{g}}_{k}^H \mathbf{R}_{k} \bar{\mathbf{g}}_{k} - p K \bar{\mathbf{g}}_{k}^H  \pmb{\Theta}_k  \bar{\mathbf{g}}_{k} +  p K \mathrm{tr}( \mathbf{R}_{k} \pmb{\Theta}_k  ) \\ -  p^2 K^2 \mathrm{tr}( \pmb{\Theta}_k^2 ),
\end{multline}
which is obtained by utilizing the covariance matrix of the channel estimation error in \eqref{eq:EstSat}. The third expectation $\mathbb{E} \{ | b_{kk} |^2 \}$ in the last equality of \eqref{eq:zkkGain} is formulated as follows
\begin{equation}
	\begin{split}
		&\mathbb{E} \{ | b_{kk} |^2 \} = \sum\nolimits_{m=1}^M  \gamma_{mk}^2 + \left( \sum\nolimits_{m=1}^M   \gamma_{mk} \right)^2,
	\end{split}
\end{equation}
by applying  Lemma~\ref{lemma:4moment} and the channel estimates in Lemma~\ref{Lemma:Est}. The fourth expectation $\mathbb{E} \{ | \tilde{b}_{kk'} |^2 \}$ in the last equality of \eqref{eq:zkkGain} is expressed as follows
\begin{equation}
\mathbb{E} \{ | \tilde{b}_{kk} |^2 \} =   \sum\nolimits_{m=1}^M    \gamma_{mk} (\beta_{mk} - \gamma_{mk}),
\end{equation}
as a consequence of the mutual independence between the channel estimate and estimation error. The fifth expectation $\mathbb{E}\{ a_{kk}b_{kk} \} $  in the last equality of \eqref{eq:zkkGain} is expanded as follows
\begin{equation} \label{eq:6thterm}
\begin{split}
& \mathbb{E}\{ a_{kk} b_{kk} \} = \big( p K \mathrm{tr} ( \pmb{\Theta}_k) + \|\bar{\mathbf{g}}_k\|^2 \big) \sum\nolimits_{m=1}^M   \gamma_{mk},
\end{split}
\end{equation}
thanks to the channel statistics in Lemma~\ref{Lemma:Est}. Substituting \eqref{eq:hatgkgk}--\eqref{eq:6thterm} into \eqref{eq:zkkGain}, we obtain the closed-form expression of $\mathbb{E}\{ |z_{kk'}|^2\}$. That of $\mathsf{MI}_1$ is given by
\begin{multline}  \label{eq:zkk}
\mathbb{E}\{ |z_{kk}|^2 \} =  \left( \|\bar{\mathbf{g}}_{k} \|^2  +  \mathrm{tr}(\pmb{\Theta}_k) +   \sum\limits_{m=1}^M   \gamma_{mk} \right)^2 + p K \bar{\mathbf{g}}_{k}^H \pmb{\Theta}_k \bar{\mathbf{g}}_{k} \\ + \bar{\mathbf{g}}_{k}^H \mathbf{R}_{k} \bar{\mathbf{g}}_{k}  +  p K \mathrm{tr}( \mathbf{R}_{k} \pmb{\Theta}_k ) +  \sum\limits_{m=1}^M \gamma_{mk} \beta_{mk}.
\end{multline}
Next, the mutual interference $\mathsf{D}_2$ in \eqref{eq:Denov1} is handled as follows
\begin{equation} \label{eq:D2}
\begin{split}
& \mathsf{D}_2 = \sum\nolimits_{k'=1, k' \neq k }^K \rho_{k'} \left(  \mathbb{E}\{ | \hat{\mathbf{g}}_k^H \mathbf{g}_{k'}|^2  \} +   \sum\nolimits_{m=1}^M   \mathbb{E}\{  | \hat{g}_{mk}^\ast  g_{mk'}|^2 \} \right)\\
&=  p K \sum\nolimits_{k' =1, k' \neq k }^K \rho_{k'}   \mathrm{tr}( \mathbf{R}_{k'}\pmb{\Theta}_k )  + p K \sum\nolimits_{k' =1, k' \neq k }^K \rho_{k'}  \bar{\mathbf{g}}_{k'}^H  \pmb{\Theta}_k   \bar{\mathbf{g}}_{k'} \\
& +  \sum\nolimits_{k' =1, k'\neq k }^K \rho_{k'} \bar{\mathbf{g}}_{k}^H \mathbf{R}_{k'}  \bar{\mathbf{g}}_{k}  +    \sum\nolimits_{k' =1, k' \neq k }^K \rho_{k'} |\bar{\mathbf{g}}_{k}^H  \bar{\mathbf{g}}_{k'} |^2 \\
& + \sum_{k' =1, k' \neq k}^K \sum_{m=1}^M \rho_{k'}  \gamma_{mk} \beta_{mk'},
\end{split}
\end{equation}
thanks to the mutual independence of the channels impinging from users utilizing the orthogonal pilot signals.
The noise power from the satellite communication section is represented by in the closed-form expression of
\begin{equation} \label{eq:NoiseS}
\mathbb{E} \big\{ \big| \hat{\mathbf{g}}_k^H \mathbf{w} \big|^2 \big\} \stackrel{(a)}{=} \mathbb{E} \big\{ \big| \hat{\mathbf{g}}_k^H  \mathbb{E}\{ \mathbf{w} \mathbf{w}^H \} \hat{\mathbf{g}}_k \big|^2 \big\} = \sigma_s^2 \|\bar{\mathbf{g}}_k\|^2 +  p K\sigma_s^2 \mathrm{tr}( \pmb{\Theta}_k), 
\end{equation}
where $(a)$ is obtained by the independence of the channel estimate and noise. Similarly, the noise power from the APs is computed in the closed-form expression as
\begin{equation} \label{eq:NoiseA}
\sum\nolimits_{m=1}^M  \mathbb{E} \big\{ | \hat{g}_{mk}^\ast w_m |^2 \big\} = \sigma_a^2 \sum\nolimits_{m=1}^M   \mathbb{E} \big\{ | \hat{g}_{mk} |^2 \big\} =  \sigma_a^2 \sum\nolimits_{m=1}^M  \gamma_{mk}.
\end{equation}
Substituting \eqref{eq:zkk} and \eqref{eq:D2} into \eqref{eq:Denov1}, we obtain the closed-form expression of the first term in the denominator of \eqref{eq:SINRk}. Then, this result together with \eqref{eq:zkkv1}, \eqref{eq:NoiseS}, and \eqref{eq:NoiseA} leads to the closed-form expression of the throughput as in the theorem.
\subsection{Proof of Lemma~\ref{Lemma:QuasiConcave}} \label{Appendix:QuasiConcave}
Based on the relationship between the throughput and SINR value in \eqref{eq:Rkv1}, we move from the data throughput optimization problem of \eqref{Problem:MaxMinQoS} to the corresponding weakest SINR maximization problem formulated as follows
\begin{equation} \label{Problem:MaxMinSINR}
	\begin{aligned}
		& \underset{ \{ \rho_{k} \} }{\textrm{maximize}} \; \underset{k}{\textrm{min}}
		& &  \mathrm{SINR}_k \\
		& \textrm{subject to}
		& & 0 \leq \rho_{k} \leq P_{\mathrm{max},k} \;, \forall k  ,\\
	\end{aligned}
\end{equation}
Let us introduce the set $\mathcal{X} = \{ \rho_k \}$ that contains all the data power variables. The objective function of problem~\eqref{Problem:MaxMinSINR} is defined as
\begin{equation} \label{eq:f0S}
f_0(\mathcal{S}) = \underset{k}{\textrm{min}} \, \, \mathrm{SINR}_k.
\end{equation}
For any $\xi >0$ representing a lower bound of the SINR values, the upper level set of the function $f_0(\mathcal{S})$ is formulated as
\begin{equation} \label{eq:UpLSet}
\begin{split}
& U(\mathcal{S}, \xi) = \{ \mathcal{S}| f_0(\mathcal{S}) \geq \xi\}  = \left\{ \mathcal{S} \Big| \frac{\xi \mathsf{MI}_k(\mathcal{S})}{\rho_k a_k} + \frac{\xi \mathsf{NO}_k}{\rho_k a_k}\leq 1, \forall k \right\},
\end{split}
\end{equation}
where we have $a_k =  | \|\bar{\mathbf{g}}_k\|^2 +  p K \mathrm{tr}(\pmb{\Theta}_k)  +  \sum_{m=1}^M \gamma_{mk} |^2$. Notice that the last equality of \eqref{eq:UpLSet} is obtained by substituting the closed-form expression \eqref{eq:ClosedSINR} into \eqref{eq:f0S} and then applying some algebraic manipulations. The upper-level set $U(\mathcal{S}, \xi)$ is a convex set by a logarithmic change of the optimization variables since it consists of the posynomial constraints.\footnote{A posinomial constraint is defined as $\sum_{k'=1}^{\widetilde{K}}c_{k'}\prod_{k=1}^{K} x_k^{b_{k'k}}$, where $\{x_k\}$ is the set of optimization variables, $c_{k'} > 0 \forall k'$, and $b_{k'k}$ are real numbers.}  Additionally, the feasible domain of Problem~\eqref{Problem:MaxMinQoS} is convex, so it is a quasi-concave problem as stated in the theorem.
\subsection{Proof of Theorem~\ref{Theorem:Bisection}} \label{Appendix:Bisection}
We start the proof by verifying that every $I_k (\pmb{\rho})$ defined in \eqref{eq:Ikrhov1} is a standard interference function.  The positivity property is satisfied since for all $\pmb{\rho} \succeq \mathbf{0}$, it holds that
\begin{equation} \label{eq:Positivity}
\begin{split}
I_k (\pmb{\rho}) \geq I_k(\mathbf{0}) &\stackrel{(a)}{=} \frac{ \xi_o \mathsf{NO}_k}{\left|\|\bar{\mathbf{g}}_k\|^2 +  p K \mathrm{tr}(\pmb{\Theta}_k)  +  \sum_{m=1}^M \gamma_{mk} \right|^2} \\
& \stackrel{(b)}{\geq} \frac{ \xi_o \min(\sigma_s^2, \sigma_a^2)}{\|\bar{\mathbf{g}}_k\|^2 +  p K \mathrm{tr}(\pmb{\Theta}_k)  +  \sum_{m=1}^M \gamma_{mk} } >0,
\end{split}
\end{equation}
where $(a)$ is obtained because the noise power $\mathsf{NO}_k$ is independent of the transmit powers; the result in $(b)$ is obtained by using \eqref{eq:NOk} and doing some further manipulations; and $(c)$ is because the ambient noise always exists in the system. Let us denote the two power vectors by $\pmb{\rho}$ and $\tilde{\pmb{\rho}}$ with $\rho_k \geq \tilde{\pmb{\rho}}_k, \forall k$. Then we observe that
\begin{equation}
I_k(\pmb{\rho}) - I_k(\tilde{\pmb{\rho}}) = \frac{\left(\xi_o(\mathsf{MI}_k (\pmb{\rho}) - \mathsf{MI}_k (\tilde{\pmb{\rho}}) ) \right)}{\left|\|\bar{\mathbf{g}}_k\|^2 +  p K \mathrm{tr}(\pmb{\Theta}_k)  +  \sum\nolimits_{m=1}^M \gamma_{mk} \right|^2} \geq 0,
\end{equation}
which indicates that $I_k(\pmb{\rho}) \geq I_k(\tilde{\pmb{\rho}})$ and therefore the monotonicity property holds true. For a given constant value $\alpha >1$, we observe the scalability property as follows
\begin{equation} \label{eq:SP}
\begin{split}
\alpha I_k(\pmb{\rho}) &= \frac{\alpha \xi_o\mathsf{MI}_k (\pmb{\rho}) + \alpha \xi_o \mathsf{NO}_k}{\left|\|\bar{\mathbf{g}}_k\|^2 +  p K \mathrm{tr}(\pmb{\Theta}_k)  +  \sum_{m=1}^M \gamma_{mk} \right|^2} \\
&= I_k( \alpha \pmb{\rho}) + \frac{ (\alpha - 1) \xi_o \mathsf{NO}_k}{\left|\|\bar{\mathbf{g}}_k\|^2 +  p K \mathrm{tr}(\pmb{\Theta}_k)  +  \sum_{m=1}^M \gamma_{mk} \right|^2}.
\end{split}
\end{equation}
The second equality of \eqref{eq:SP} verifies the scalability property, since its second part is nonnegative. Combining \eqref{eq:Positivity}--\eqref{eq:SP}, $I_k (\pmb{\rho}), \forall k,$ represent standard interference functions. Based on \cite[Theorem~$2$]{Yates1995a} and \cite[Theorem~$3$]{van2021uplink}, the convexity of problem~\eqref{Problem:TotalTransmitPower} ensures that for an initial set of data powers, we can exploit the alternating optimization approach for updating the data power of user~$k$ with the standard interference function in an iterative manner. The proposed algorithm converges to a fixed point that is the global optimum of  Problem~\eqref{Problem:TotalTransmitPower}. 

We now derive an upper bound on the SINR values that makes Problem~\eqref{Problem:TotalTransmitPower} infeasible. This upper bound can be concretely defined by solving the following problem:
\begin{equation} \label{eq:xiupbound}
\xi_o^{\mathrm{up}}  = \underset{k}{\min} \, \sup \, \mathrm{SINR}_k.
\end{equation}
To solve \eqref{eq:xiupbound}, we use the closed-form expression in \eqref{eq:ClosedSINR} to express the maximal SINR value of user~$k$ as follows
\begin{equation}
\begin{split}
\mathrm{SINR}_k &\stackrel{(a)}{\leq} \frac{\rho_k \left(\|\bar{\mathbf{g}}_k\|^2 +  p K \mathrm{tr}(\pmb{\Theta}_k)  +  \sum_{m=1}^M \gamma_{mk} \right)^2}{\sigma_s^2 \|\bar{\mathbf{g}}_k\|^2 +  p K \sigma_s^2 \mathrm{tr}( \pmb{\Theta}_k ) + \sigma_a^2 \sum\nolimits_{m=1}^M  \gamma_{mk}} \\
&\stackrel{(b)}{\leq} \frac{P_{\max,k} \left(\|\bar{\mathbf{g}}_k\|^2 +  p K \mathrm{tr}(\pmb{\Theta}_k)  +  \sum_{m=1}^M \gamma_{mk} \right)^2}{\sigma_s^2 \|\bar{\mathbf{g}}_k\|^2 +  p K \sigma_s^2 \mathrm{tr}( \pmb{\Theta}_k ) + \sigma_a^2 \sum\nolimits_{m=1}^M  \gamma_{mk}},
\end{split}
\end{equation}
where $(a)$ is obtained by ignoring the mutual interference from the other users, and the equality holds as the system serves user~$k$ only; $(b)$ is bounded by the limited power budget, i.e., $\rho_k \leq P_{\max,k}$.  In contrast, if the achievable SINR value of an arbitrary user is lower than its requirement at any given value $\xi_o$, the congestion appears and we can easily detect it by testing the condition \eqref{eq:IkPmaxk}. Thus, the proof is complete.
\subsection{Proof of Theorem~\ref{Theorem:SoftRemoval}} \label{Appendix:SoftRemoval}
We start the proof by testing that each $f_k(\pmb{\rho}(n-1))$ in \eqref{eq:rhokn} is a two-sided scalable function as shown in Definition~\ref{Def:TSSF}. When the price of the standard interference function does not exceed the maximum data power, i.e., $\tilde{I}_k(\pmb{\rho}(n-1)) \leq P_{\max,k}, \forall k$, it is sufficient to prove that $\tilde{I}_k(\pmb{\rho}(n-1))$ is a two-sided scalable function. Since $\tilde{I}_k(\pmb{\rho}(n-1))$ is a standard interference function, for any $\alpha >1$ and $\alpha^{-1} \pmb{\rho}(n-1) \preceq \tilde{\pmb{\rho}}(n-1) \preceq \alpha \pmb{\rho}(n-1)$, the following property holds true
\begin{equation} \label{eq:Itildekv1}
\tilde{I}_k(\pmb{\rho}(n-1)) \stackrel{(a)}{<} \tilde{I}_k(\alpha \tilde{\pmb{\rho}}(n-1)) \stackrel{(b)}{<} \alpha \tilde{I}_k( \tilde{\pmb{\rho}}(n-1)),
\end{equation}
where $(a)$ is obtained by exploiting the monotonicity property and $(b)$ is obtained by using the scalability property for the vector $\alpha \pmb{\rho}(n-1)$. From the result in \eqref{eq:Itildekv1}, we have
\begin{equation} \label{eq:Itildekv2}
\tilde{I}_k(\pmb{\rho}(n-1)) <  \alpha \tilde{I}_k( \tilde{\pmb{\rho}}(n-1)),
\end{equation}
which leads to the following observation upon dividing both sides of \eqref{eq:Itildekv2} by the constant $\alpha$ as
\begin{equation}\label{eq:Itildekv3}
\frac{1}{\alpha} \tilde{I}_k(\pmb{\rho}(n-1)) <  \tilde{I}_k( \tilde{\pmb{\rho}}(n-1)).
\end{equation}
By exploiting similar steps associated with the monotonicity and scalability properties for $\tilde{\pmb{\rho}}(n-1) \preceq \alpha \pmb{\rho}(n-1)$, a series of inequalities are further formulated as
\begin{equation} \label{eq:Itildekv4}
\tilde{I}_k(\tilde{\pmb{\rho}}(n-1)) < \tilde{I}_k(\alpha \pmb{\rho}(n-1)) < \alpha  \tilde{I}_k(\pmb{\rho}(n-1)),
\end{equation} 
which leads to the following inequality
\begin{equation} \label{eq:Itildekv5}
\tilde{I}_k(\tilde{\pmb{\rho}}(n-1)) <\alpha  \tilde{I}_k(\pmb{\rho}(n-1)).
\end{equation}
Combining the results in \eqref{eq:Itildekv3} and \eqref{eq:Itildekv5}, we obtain
\begin{equation}\label{eq:Itildekv6}
\frac{1}{\alpha} \tilde{I}_k(\pmb{\rho}(n-1)) <  \tilde{I}_k( \tilde{\pmb{\rho}}(n-1)) < \alpha  \tilde{I}_k(\pmb{\rho}(n-1)),
\end{equation}
which verifies that $f_k(\pmb{\rho}(n-1))$ is a two-sided scalable function when the interference price of  user~$k$ is lower than or equal to the maximum data power. In order to prove that $f_k(\pmb{\rho}_k(n-1))$ is a two-sided scalable function when the interference price of user~$k$ exceeds the maximum data power, i.e., $\tilde{I}_{k}(\pmb{\rho}(n-1)) > P_{\max,k}$, we must prove that $P_{\max, k}^2/ \tilde{I}_{k}(\pmb{\rho}(n-1))$ is also a two-sided scalable function. Indeed, multiplying \eqref{eq:Itildekv6} by the soft removal rate $\mu_k$ and then taking its inverse, one obtains
\begin{equation} \label{eq:Itildekv7}
\frac{1}{\alpha \mu_k \tilde{I}_k(\pmb{\rho}(n-1))} < \frac{1}{ \mu_k \tilde{I}_k( \tilde{\pmb{\rho}}(n-1))} < \alpha \frac{1}{ \mu_k \tilde{I}_k(\pmb{\rho}(n-1)) }.
\end{equation}
Next, multiplying \eqref{eq:Itildekv7} by the factor $P_{\max,k}^2$, we obtain the following result
\begin{equation}
\frac{1}{\alpha}\frac{P_{\max,k}^2}{ \mu_k \tilde{I}_k(\pmb{\rho}(n-1))} < \frac{P_{\max,k}^2}{ \mu_k \tilde{I}_k( \tilde{\pmb{\rho}}(n-1))} < \alpha \frac{P_{\max,k}^2}{ \mu_k \tilde{I}_k(\pmb{\rho}(n-1)) },
\end{equation}
which confirms that the inverse of the standard interference function $I_k(\pmb{\rho}(n-1))$ is two-sided scalable as $ \tilde{I}_k(\pmb{\rho}(n-1)) > P_{\max,k}$. Consequently, each $f_k(\pmb{\rho}(n-1))$ is a two-sided scalable function in its  domain. The SINR constraints of problem~\eqref{Problem:TotalV1} are relaxed for unsatisfied users, allowing us to prove the convergence of two-sided scalable functions. The relaxed SINR constraints ensure that the feasible domain is continuous and bounded, so a fixed-point solution exists.  We now define the distance between the pair of vectors $\pmb{\rho}$ and $\tilde{\pmb{\rho}}$ as
\begin{equation}
d(\pmb{\rho}, \tilde{\pmb{\rho}}) = \underset{k}{\max}\left( \left\{ \max\left( \rho_k/\tilde{\rho}_k, \tilde{\rho}_k /\rho_k \right) \right\}  \right).
\end{equation}
We also stack all the functions $f_k(\pmb{\rho}(n))$ into a vector that is $\mathbf{f}(\pmb{\rho}(n)) = [f_1(\pmb{\rho}(n)), \ldots, f_K(\pmb{\rho}(n))]^T \in \mathbb{R}^{K}$ and let $\pmb{\rho}^\ast$ be the fixed point solution. From the initial data power values stated in the theorem, the following chain of inequalities can be constructed
\begin{equation}
\begin{split}
d(\pmb{\rho}(0) , \pmb{\rho}^\ast) & \stackrel{(a)}{>} d(\mathbf{f}(\pmb{\rho}(0)) ,\mathbf{f}( \pmb{\rho}^\ast))  \\
& \stackrel{(b)}{=}  d(\pmb{\rho}(1) , \pmb{\rho}^\ast ) > \ldots  > d(\mathbf{f}(\pmb{\rho}(n)) , \mathbf{f}(\pmb{\rho}^\ast) ) \\
&\stackrel{(c)}{ =} d( \pmb{\rho}(n+1) , \pmb{\rho}^\ast ) \stackrel{(d)}{=}  d( \pmb{\rho}^\ast , \pmb{\rho}^\ast ) = 1,
 \end{split}
\end{equation}
where $(a)$ is obtained by using \cite[Lemma~7]{sung2005generalized}; $(b)$ and $(c)$ are obtained by the data power update in \eqref{eq:rhokn} and by exploiting the fact that $\mathbf{f}(\pmb{\rho}^\ast) = \pmb{\rho}^\ast$; and $(d)$ is obtained by assuming that the convergence holds at iteration~$n+1$. If the convergence holds at the iteration~$n+1$, it should also hold in the next iterations. The proof is concluded.
\subsection{Proof of Jain's fairness index in \eqref{eq:JainFairnessIndex}}
Relying on Jain's fairness index by its standard form \cite{jain1984quantitative,bui2021robust}, we define the network fairness as
\begin{equation} \label{eq:JainFairnessIndexv1}
	J = \frac{\left( \sum\nolimits_{k \in \mathcal{K}} R_k (\{ \rho_k^\ast \}) /\hat{\xi}_k \right)^2}{K \sum\nolimits_{k \in \mathcal{K}}  R_k (\{ \rho_k^\ast \})^2/ \hat{\xi}_k^2}.
\end{equation} 
By decomposing the available user set $\mathcal{K}$ into the satisfied user set $\mathcal{K}_s$ defined in \eqref{eq:Ks} and the unsatisfied user set defined in \eqref{eq:Ku}, \eqref{eq:JainFairnessIndexv1} is equivalent to
\begin{equation} \label{eq:JainFairnessIndexv2}
J = \frac{\left( \sum\nolimits_{k \in \mathcal{K}_s } R_k (\{ \rho_k^\ast \}) /\hat{\xi}_k +  \sum\nolimits_{k \in \tilde{\mathcal{K}}_u} R_k (\{ \rho_k^\ast \}) /\hat{\xi}_k \right)^2}{  K \sum\nolimits_{k \in \mathcal{K}_s }  R_k (\{ \rho_k^\ast \})^2/ \hat{\xi}_k^2 +  K \sum\nolimits_{k \in \tilde{\mathcal{K}}_u}  R_k (\{ \rho_k^\ast \})^2/ \hat{\xi}_k^2 }.
\end{equation}
From the total transmit power minimization structure in \eqref{Problem:TotalV1}, it holds that $R_k (\{ \rho_k^\ast \}) /\hat{\xi}_k  = 1, k \in \mathcal{K}_s,$ and therefore we obtain
\begin{equation} \label{eq:Ksv1}
\sum\nolimits_{k \in \mathcal{K}_s } R_k (\{ \rho_k^\ast \}) /\hat{\xi}_k = \sum\nolimits_{k \in \mathcal{K}_s}  R_k (\{ \rho_k^\ast \})^2/ \hat{\xi}_k^2 = |\mathcal{K}_s |.
\end{equation}
By substituting \eqref{eq:Ksv1} into \eqref{eq:JainFairnessIndexv2}, the Jain's fairness index is obtained as in \eqref{eq:JainFairnessIndex}. We emphasize that if all users are satisfied by the requested throughput, i.e., $\mathcal{K}_s = \mathcal{K}$, then $J=1$ by utilizing \eqref{eq:JainFairnessIndex} with $|\mathcal{K}_s| = K$ and $\mathcal{K}_u = \emptyset$. If congestion appears in an extreme case, where user~$k$ gets a non-zero  throughput and the remaining users get zero offers, this leads to $J = 1/K$.
\bibliographystyle{IEEEtran}
\bibliography{IEEEabrv,refs}
\begin{IEEEbiographynophoto}
	{Trinh Van Chien} (S'16-M'20) received the B.S. degree in Electronics and Telecommunications from Hanoi University of Science and Technology (HUST), Vietnam, in 2012. He then received the M.S. degree in Electrical and Computer Enginneering from Sungkyunkwan University (SKKU), Korea, in 2014 and the Ph.D. degree in Communication Systems from Link\"oping University (LiU), Sweden, in 2020. He was  a research associate at University of Luxembourg. He is now with the School of Information and Communication Technology (SoICT), Hanoi University of Science and Technology (HUST), Vietnam. His interest lies in convex optimization problems and machine learning applications for wireless communications and image \& video processing. He was an IEEE wireless communications letters exemplary reviewer for 2016, 2017, and 2021. He also received the award of scientific excellence in the first year of the 5Gwireless project funded by European Union Horizon's 2020.
\end{IEEEbiographynophoto}
\begin{IEEEbiographynophoto} 
	{Eva Lagunas} received the M.Sc. and Ph.D. degrees in telecommunications engineering from the Polytechnic University of Catalonia (UPC), Barcelona, Spain, in 2010 and 2014, respectively. She was Research Assistant within the Department of Signal Theory and Communications, UPC, from 2009 to 2013. During the summer of 2009 she was a guest research assistant within the Department of Information Engineering, Pisa, Italy. From November 2011 to May 2012 she held a visiting research appointment at the Center for Advanced Communications (CAC), Villanova University, PA, USA. In 2014, she joined the Interdisciplinary Centre for Security, Reliability and Trust (SnT), University of Luxembourg, where she currently holds a Research Scientist position. Her research interests include terrestrial and satellite system optimization, spectrum sharing, resource management and machine learning.
\end{IEEEbiographynophoto}
\begin{IEEEbiographynophoto}
	{Tiep M. Hoang}
	received the B.Eng. degree from the HCMC University of Technology, Vietnam, in 2012, the M.Eng. degree from Kyung Hee University, South Korea, in 2014, and the Ph.D. degree from the Queen's University of Belfast, United Kingdom, in 2019. From 2020 to 2022, he was a (postdoctoral) Research Fellow with the School of Electronics and Computer Science, the University of Southampton, United Kingdom. Since May 2022, he has been a Postdoctoral Fellow with the Department of Electrical Engineering, the University of Colorado Denver, United States. His current research interests include 5G/6G wireless communications, wireless security and authentication, reconfigurable intelligent surface (RIS), convex optimization, and machine learning.
\end{IEEEbiographynophoto}
\begin{IEEEbiographynophoto}
	{Symeon Chatzinotas} (S'06, M'09, SM'13, F'23) is Full Professor and Head of the SIGCOM Research Group at SnT, University of Luxembourg. He is coordinating the research activities on communications and networking across a group of 80 researchers, acting as a PI for more than 40 projects and main representative for 3GPP, ETSI, DVB. He is currently serving in the editorial board of the IEEE Transactions on Communications, IEEE Open Journal of Vehicular Technology and the International Journal of Satellite Communications and Networking. In the past, he has been a Visiting Professor at the University of Parma, Italy and was involved in numerous R\&D projects for NCSR Demokritos, CERTH Hellas and CCSR, University of Surrey. He was the co-recipient of the 2014 IEEE Distinguished Contributions to Satellite Communications Award and Best Paper Awards at WCNC, 5GWF, EURASIP JWCN, CROWNCOM, ICSSC. He has (co-)authored more than 700 technical papers in refereed international journals, conferences and scientific books.
\end{IEEEbiographynophoto}
\begin{IEEEbiographynophoto} 
	{Bj\"orn Ottersten} (S'87–M'89–SM'99–F'04) received the M.S. degree in electrical engineering and applied physics from Linköping University, Linköping, Sweden, in 1986, and the Ph.D. degree in electrical engineering from Stanford University, Stanford, CA, USA, in 1990. He has held research positions with the Department of Electrical Engineering, Linköping University, the Information Systems Laboratory, Stanford University, the Katholieke Universiteit Leuven, Leuven, Belgium, and the University of Luxembourg, Luxembourg. From 1996 to 1997, he was the Director of Research with ArrayComm, Inc., a start-up in San Jose, CA, USA, based on his patented technology. In 1991, he was appointed Professor of signal processing with the Royal Institute of Technology (KTH), Stockholm, Sweden. Dr. Ottersten has been Head of the Department for Signals, Sensors, and Systems, KTH, and Dean of the School of Electrical Engineering, KTH. He is currently the Director for the Interdisciplinary Centre for Security, Reliability and Trust, University of Luxembourg. He is a recipient of the IEEE Signal Processing Society Technical Achievement Award, the EURASIP Group Technical Achievement Award, and the European Research Council advanced research grant twice. He has co-authored journal papers that received the IEEE Signal Processing Society Best Paper Award in 1993, 2001, 2006, 2013, and 2019, and 8 IEEE conference papers best paper awards. He has been a board member of IEEE Signal Processing Society, the Swedish Research Council and currently serves of the boards of EURASIP and the Swedish Foundation for Strategic Research. Dr. Ottersten has served as Editor in Chief of EURASIP Signal Processing, and acted on the editorial boards of IEEE Transactions on Signal Processing, IEEE Signal Processing Magazine, IEEE Open Journal for Signal Processing, EURASIP Journal of Advances in Signal Processing and Foundations and Trends in Signal Processing. He is a fellow of EURASIP. 
\end{IEEEbiographynophoto}
\begin{IEEEbiographynophoto} 
	{Lajos Hanzo} (\url{http://www-mobile.ecs.soton.ac.uk}, \url{https://en.wikipedia.org/wiki/Lajos_Hanzo}) is a Fellow of the Royal Academy of Engineering, FIEEE, FIET, Fellow of EURASIP and a Foreign Member of the Hungarian Academy of Sciences. He published prolifically at IEEE Xplore and 19 Wiley-IEEE Press monographs. He was bestowed upon the IEEE Eric Sumner Technical Field Award.
\end{IEEEbiographynophoto}
\end{document}